\documentclass[11pt]{article}

\usepackage{times} 

\makeatletter
\newcommand\subsubsubsection{\@startsection{paragraph}{4}{0ex}%
            {-3.25ex plus -1ex minus -0.2ex}%
            {1.5ex plus 0.2ex}%
            {\normalfont\normalsize\bfseries}}
\makeatother

\stepcounter{secnumdepth}
\stepcounter{tocdepth}

\usepackage{changepage}   
\usepackage{frcursive}
\usepackage{setspace}
\usepackage{sidecap}

\newcommand{\ETH}{{\sc Eth}\xspace}
\newcommand{\kSAT}{$k$-{\sc Sat}\xspace}

\usepackage{wrapfig}
\usepackage{colortbl}
\usepackage{etex}
\usepackage{bbding}
\usepackage{dingbat}
\usepackage[ruled]{algorithm2e}
\usepackage{amsmath}
\usepackage[newmultline]{empheq}
\usepackage{algorithmic}
\usepackage{amsfonts}
\usepackage{amssymb}
\usepackage{latexsym}
\usepackage{graphicx}
\usepackage{color}
\usepackage{balance}
\usepackage{textcomp}
\usepackage{nicefrac}
\usepackage{wasysym}
\usepackage[inline]{enumitem}
\usepackage{gensymb}
\usepackage{ctable}
\usepackage{fancybox}
\usepackage{longtable}
\usepackage[numbers,sort&compress]{natbib}

\usepackage{pifont,epsfig,rotating}
\usepackage{etoolbox}
\DeclareMathAlphabet{\mathpzc}{OT1}{pzc}{m}{it}
\usepackage{etaremune}
\usepackage{mathtools}

\definecolor{dgreyblue}{rgb}{0.26,0.3,0.46}             

\newcommand{\bee}{{\mathbf{e}}}
\newcommand{\cA}{\mathcal{A}}
\newcommand{\cB}{\mathcal{B}}
\newcommand{\cC}{\mathcal{C}}

\newcommand{\cE}{\mathcal{E}}
\newcommand{\cI}{\mathcal{I}}

\newcommand{\cO}{\mathcal{O}}

\newcommand{\cT}{\mathcal{T}}
\newcommand{\cU}{\mathcal{U}}

\newcommand{\cZ}{\mathcal{Z}}

\newcommand{\we}{{w_{\mathrm{edge}}}}
\newcommand{\wv}{{w_{\mathrm{node}}}}
\newcommand{\R}{{\mathbb R}}  
\newcommand{\PP}{{\mathbb P}}  

\renewcommand{\text}[1]{\hbox{\rm \ #1\ \/}}

\newcommand{\be}[1]{\begin{equation}\label{#1}}
\newcommand{\ee}{\end{equation}}
\newcommand{\beqn}{\begin{eqnarray*}}
\newcommand{\eeqn}{\end{eqnarray*}}
\newcommand{\beq}{\begin{eqnarray}}
\newcommand{\eeq}{\end{eqnarray}}
\newcommand{\ben}{\begin{enumerate}}
\newcommand{\een}{\end{enumerate}}
\newcommand{\bi}{\begin{itemize}}
\newcommand{\ei}{\end{itemize}}
\newcommand{\eps}{\varepsilon}

\newcommand{\IE}{{\em i.e.}\xspace}
\newcommand{\tx}{^{\rm th}}

\newtheorem{observation}{Observation}

\newtheorem{theorem}{Theorem}
\newtheorem{remark}{Remark}
\newtheorem{lemma}[theorem]{Lemma}
\newtheorem{corollary}[theorem]{Corollary}
\newtheorem{definition}[theorem]{Definition}

\newenvironment{proof}{{\noindent\bf Proof.\ }}{\hfill{\Pisymbol{pzd}{113}}\vspace{0.1in}}
\newenvironment{proof-sketch}{{\noindent\bf Sketch of Proof.\ }}{\hfill{\Pisymbol{pzd}{113}}\vspace{0.1in}}

\newcommand{\NP}{\mathsf{NP}}
\newcommand{\LP}{\mathsf{LP}}

\renewcommand{\deg}{\mathsf{deg}}

\newcommand{\cS}{\mathcal{S}}
\newcommand{\M}{\mathcal{M}}
\newcommand{\nbr}{\mathsf{Nbr}}
\newcommand{\cF}{\mathcal{F}}

\newcommand{\EA}{{\em et al.}\xspace}
\newcommand{\TB}{\vspace{-0.1ex}}\newcommand{\TiE}{\setlength{\itemsep}{-1ex}}

\newcommand{\junk}[1]{}
\newtheorem{proposition}{Proposition}

\newcommand{\EG}{{\it e.g.}\xspace}
\newcommand{\FI}[1]{Fig.~\ref{#1}\xspace}

\newcommand{\EC}{{\mathsf{E}_{\mathrm{critical}}}}
\newcommand{\ec}{{\mathsf{e}_{\mathrm{critical}}}}
\newcommand{\ENC}{{\mathsf{E}_{\mathrm{non-critical}}}}
\newcommand{\VC}{{\mathsf{V}_{\mathrm{critical}}}}
\newcommand{\VNC}{{\mathsf{V}_{\mathrm{non-critical}}}}

\newcommand{\OO}{\mathcal{O}}

\newcommand{\f}{{\mathfrak{f}}}

\newcommand{\msc}{{\sc Msc}\xspace}

\newcommand{\SCC}{{\sc Scc}\xspace}

\newcommand{\rrepeat}{{\bf{repeat}}}
\newcommand{\uuntil}{{\bf{until}}}

\newcommand{\dist}{\mathrm{dist}}
\newcommand{\vol}{\mathsf{vol}}

\newcommand{\eqdef}{\stackrel{\mathrm{def}}{=}}
\definecolor{columbiablue}{rgb}{0.61, 0.87, 1.0}

\newcommand{\ADHD}{{\sc Adhd}}

\newcommand{\adhd}{{\sc Adhd}}

\newcommand{\mcpm}[1]{{\mbox{\sc Mcpm}}\ensuremath{_#1}}
\newcommand{\tco}{{\sc Transport}-{\sc Cost}}
\newcommand{\eopt}{ {E}_{\mathrm{opt}} }
\newcommand{\opt}{\mathsf{opt}}
\usepackage[scr=esstix,cal=boondox]{mathalfa}

\newcommand{\mcov}{{\sc MaxCov}}
\newcommand{\abig}{\ensuremath{\mathfrak{a}}}
\newcommand{\bbig}{\ensuremath{\mathfrak{b}}}
\newcommand{\sgn}{\mathsf{sign}}
\newcommand{\wt}{{\fontfamily{qzc}\selectfont wt}}
\newcommand{\uw}{{\fontfamily{qzc}\selectfont uw}}
\newcommand{\rt}{{\fontfamily{qzc}\selectfont rt}}
\newcommand{\ut}{{\fontfamily{qzc}\selectfont ut}}
\newcommand{\del}{{\fontfamily{qzc}\selectfont del}}
\newcommand{\ins}{{\fontfamily{qzc}\selectfont ins}}
\newcommand{\ptn}{{\fontfamily{qzc}\selectfont ptn}}
\newcommand{\ntp}{{\fontfamily{qzc}\selectfont ntp}}

\allowdisplaybreaks

\usepackage[colorlinks=true, allcolors=blue]{hyperref}
\usepackage{emerald}
\usepackage{multirow}

\title{On Identifying Critical Network Edges via Analyzing Changes in Shapes (Curvatures)}

\author{
Bhaskar DasGupta \\
Department of Computer Science \\
University of Illinois Chicago \\
Chicago, IL 60607, USA \\
\emph{bdasgup@uic.edu}
\and 
Katie Kruzan \\
Department of Mathematics, Statistics \& Computer Science \\
University of Illinois Chicago \\
Chicago, IL 60607, USA\\
\emph{kkruza3@uic.edu}
}

\begin{document}

\maketitle

\begin{abstract}
In recent years extensions of manifold Ricci curvature to discrete combinatorial objects
such as graphs and hypergraphs
(popularly called as ``network shapes''), 
have found a plethora of applications in 
a wide spectrum of research areas
ranging over 
metabolic systems,
transcriptional regulatory networks,
protein-protein-interaction networks,
social networks
and 
brain networks
to 
deep learning models
but, in contrast, 
they have been looked at by relatively fewer
researchers in the algorithms and computational complexity community.
As an attempt to bring these network Ricci-curvature related problems under the lens
of computational complexity and foster further inter-disciplinary interactions,
we provide a formal framework for studying algorithmic and computational complexity issues 
for detecting critical edges in an undirected graph using Ollivier-Ricci curvatures 
and provide several algorithmic and inapproximability results for problems in this framework.
Our results show some interesting connections 
between 
our problems, 
the 
exact perfect matching and 
perfect matching blocker problems for bipartite graphs
and 
two well-known combinatorial packing/covering problems.
\end{abstract}

\tableofcontents

\section{Introduction}

Useful insights for many complex systems 
are often obtained by representing them as networks and 
analyzing them using graph-theoretic tools~\cite{DL16,Newman-book,Albert-Barabasi-2002}.
Typically, such networks may have so-called \emph{critical} 
(elementary) components 
which encode
some significant non-trivial properties of these networks. 
There is a rich history in finding critical components of networks 
dating back to quantifications of fault-tolerance or redundancy
in electronic circuits or routing networks, and more recently 
in the context of systems biology 
in quantifying redundancies in biological networks~\cite{KW96,TSE99,ADGGHPSS11}
and confirming the existence of central influential neighborhoods in biological networks~\cite{ADM14}.
Identifying critical components of networks are usually preceded by providing  details for 
the following items: 
\begin{description}[labelindent=0cm,leftmargin=0.3cm]
\item[Network model selection:] 
The network model used in this paper is the undirected (unweighted and edge-weighted) graph (network)
used in many real-world applications such as protein-protein-interaction networks, 
functional correlation networks of brain regions and social networks. 
\item[Definition of the atomic critical components:]
The atomic critical component used in this paper is an edge of the graph just like as in 
many recent applied research works such as~\cite{CATAD21,YEWJS23}.
\item[Network properties used to measure criticality:]
The specific network property used in this article is the ``(sign) change'' of the so-called
``local shape'' (discrete network Ricci curvature) at an edge of the network. 
\end{description}
Our own motivations for studying problems discussed in the article stem
from application to brain networks altered during to neuro-degenerative diseases 
or injuries, for example, for better understanding importances of specific connections involving 
the occipital region, the superior parietal lobule and the frontal operculum
of a functional connectivity network of brain in attention deficit hyperactivity disorder.
However the purposed of writing this article was not just to prove new computational
complexity results, but to bring network Ricci-curvature related problems under the \emph{lens
of computational complexity} and foster interaction between algorithmic community and 
researchers in other domains that use such curvatures in various applications.
Various problems involving Gromov hyperbolic network curvature, which was formulated much earlier, 
have indeed been investigated by the algorithms community extensively (see~\cite{CDA21}
for a brief survey on this).
In contrast, even though both Ollivier-Ricci and Forman-Ricci network curvatures 
are used extensively by practitioners in network science, biology and other communities 
and have even found their usage in \emph{deep learning} via 
curvature graph networks~\cite{Ye2020Curvature}, they have been looked at by relatively fewer
researchers in the algorithms community.

\subsection{Quantifying the Shape of a Network: Short Informal Overview}

Curvatures are very natural measures of anomaly of higher dimensional objects in 
mainstream physics and mathematics, \EG, in general relativity.
However, extending these measures to networks need to overcome many key
challenges since networks are discrete objects that do not necessarily have
an associated natural geometric embedding. There are several ways previous researchers
have attempted to formulate notions of curvatures of networks and other
combinatorial objects.

An earliest quantification of network shapes
is \emph{Gromov-hyperbolicity}
defined via properties of geodesic triangles or
$4$-node conditions or 
Gromov-product nodes~\cite{book99,G87}.
This measure assigns one scalar value to the entire network.
There is a large body of research works dealing with theoretical and empirical aspects of this measure
(\EG, see~\cite{ipl15,CCDDMV18,CDEHV08,ADM14}).

More recently, researchers have attempted to extend the concept of Ricci curvature for Riemannian 
manifolds to networks in two major ways. 
One of them is 
via extending Forman's discretization of Ricci curvature for 
(polyhedral or CW) complexes~\cite{F03}
to networks. 
Informally, the Forman-Ricci curvature is applied to networks by 
topologically associating components (sub-networks) of a given network with higher-dimensional 
objects. 
Although formulated relatively recently, there are already a number of 
papers regarding empirical application of 
this measure~\cite{Sree1,Weber17,DJY20,CATAD21}.
The other approach, which is used in this article, 
is by using Ollivier's discretization of Ricci curvature.
For an informal understanding, consider 
transporting a infinitesimally small ball centered at a point of a manifold 
along a specific direction 
to measure the ``distortion'' 
of that ball due to the shape of the surface.
Ollivier's discretization~\cite{Oll11,Oll09,Oll10,Oll07}
provides appropriate network-theoretic analogs of such concepts.
Intuitively, it tries to mimic the concept of the sectional curvature at a point 
in the manifold.
The 
Ollivier-Ricci curvature for networks 
has already found its way to a plethora of applications in 
biological and social networks (\EG, see~\cite{CATAD21,Ricci,EFLSJ20,LRSJ21}), and 
more recently both these curvatures have been extended to directed and undirected
hypergraphs~\cite{SAAB25,d5n8-y58m}. 
Both Forman-Ricci and Ollivier-Ricci curvatures yield a scalar (number) for each edge
and therefore may be considered as a measure of the ``local shape'' of the network at 
a given edge.

\subsection{Why Study Network Curvature Measures?}
\label{sec-why-shape}

Although studying network curvatures is mathematically intriguing, it is natural 
to ask if there are other valid reasons for such studies. 
We point out several reasons for this below.
\begin{enumerate}[label=$\triangleright$,leftmargin=*]
\item
Network curvature measures can encode 
\emph{non-trivial} topological properties
that are \emph{not} expressed by more 
established network-theoretic measures such as
degree distributions, clustering coefficients or 
betweenness centralities (\EG, see~\cite{ADM14,WJS16,WSJ16,Samal18,CATAD21}). 
\item
Most of the curvature measures (including the Ollivier-Ricci curvature used in this article) 
can mostly be computed in polynomial time,
as opposed to $\NP$-complete network measures such as 
\emph{cliques}~\cite{GJ79} or 
\emph{densest}-$k$-\emph{subgraphs}~\cite{GJ79}, or 
some types of community decompositions such as \emph{modularity maximization}~\cite{DD13}.
\item
Network curvature measures can explain many phenomena one frequently encounters in real network-theoretic applications,
such as 
\textbf{(\emph{i})}
paths mediating up- or down-regulation of a target node starting from the same regulator node in 
\emph{biological regulatory networks} often have many small crosstalk paths~\cite{ADM14} and
\textbf{(\emph{ii})}
existence of congestions in a node that is not a hub in \emph{traffic networks}~\cite{ADM14,JLBB11},
that are \emph{not} easily explained by other network measures.
\end{enumerate}
Discretizations of manifold Ricci curvature to networks and other discrete metric spaces 
have already found applications in a wide spectrum of research areas involving 
graphs and hypergraphs ranging over 
metabolic systems~\cite{SAAB25,LRSJ21},
transcriptional regulatory networks~\cite{EFLSJ20},
protein-protein-interaction networks~\cite{EFLSJ20},
social networks~\cite{SAAB25,CDR23,pmlr-v84-chien18a,d5n8-y58m,SJB19},
in neuroscience applications such as
comparing brain networks to study slowly progressing
brain diseases such as \emph{attention deficit hyperactivity disorder}~\cite{CATAD21} 
and \emph{autism spectrum disorder}~\cite{Ricci,Elumalai2021}, 
and in deep learning models~\cite{Ye2020Curvature}.

\subsection{Basic Notations and Terminologies}
\label{basic-notations}

Let $A=(V,E)$ be a given undirected (weighted or unweighted) graph.
$\overline{u,v}$ and $\dist_A(u,v)$ 
denote a \emph{shortest path} and 
the \emph{distance}
between the nodes $u$ and $v$ in $A$, respectively. 
For a node $u\in V$, 
$\nbr_A(u)=\{ v \,|\, \{u,v\}\in E\}$
and 
$\deg_A(u)= |\, \nbr_H(u) \,|$ 
denote 
the neighborhood
and the \emph{degree} of $u$,
respectively.
For 
a set of node pairs $E'$ satisfying $E'\cap E=\emptyset$,
$A^{+E'}$ denotes the graph $(V,E\cup E')$ obtained by adding to $A$ the edges corresponding to the 
node pairs in $E'$.
For 
a subset of edges $E'\subseteq E$,
$A^{-E'}$ denotes the graph $(V,E\setminus E')$ obtained by deleting from $A$ the edges in $E'$.
The cost $\zeta(M)$ of any matching $M$ in a edge-weighted graph is the sum of edge weights of the edges in $M$.
We refer to an edge of weight $i$ as an $i$-edge.
For two positive integers $x$ and $y$, LCM$(x,y)$ denotes the least common multiple of $x$ and $y$.
The notation $\mathrm{poly}(x)$ indicates a function that is $O(x^c)$ for some positive constant $c$.

\subsection*{Special notations used throughout the paper}

Throughout the rest of the paper, $G$ is the given input graph, $\ec=\{u,v\}$ is the edge of $G$
whose curvature is considered for change, and the quantities $\abig$ and $\bbig$ are defined as:
\begin{gather*}
\abig = \frac{ \mathrm{LCM} \left(1+\deg_G(u),1+\deg_G(v) \right) }{1+\deg_G(u) }, 
\hspace*{0.3in}
\bbig = \frac{ \mathrm{LCM} \left(1+\deg_G(u),1+\deg_G(v) \right) }{1+\deg_G(v) }
\end{gather*}
We note the following regarding the values of $\abig$ and $\bbig$\footnote{One may wonder why quantities 
involving 
$\deg_G(u)+1$ and $\deg_G(v)+1$
are of importance for algorithmic performance, as opposed to 
just $\deg_G(u)$ and $\deg_G(v)$. 
The reason for this is that probability distributions for computing Ricci curvatures are done 
are assigned on the ``closed'' neighborhoods of nodes (\IE, the node itself is included in the neighborhood).}:
\begin{enumerate}[label=$\triangleright$]
\item
\textbf{$\pmb{\bbig=1}$ does not necessarily imply $\pmb{\deg_G(u)=\deg_G(v)}$}, 
rather it implies that $\deg_G(u)+1=c\times (\deg_G(v)+1)$ for some positive integer $c\geq 1$, \EG, 
$\deg_G(u)+1=100 \times (\deg_G(v)+1)$.
\item
$\abig=\bbig=1$ is equivalent to 
$\deg_G(u)=\deg_G(v)$.
\end{enumerate}

\section{Quantifying Edge Criticality via Ollivier-Ricci Network Curvature}

In this section, we precisely define how we can quantify the criticality (importance) 
of an edge via 
Ollivier-Ricci curvature of graphs.
Although we use the quantification for 
Ollivier-Ricci curvatures, 
it can also be used in a straightforward manner for 
Forman-Ricci curvatures
or 
any other type of edge curvatures (\EG,
curvature using displacement entropy~\cite{Oll11,OllV12,BS10}).

\subsection{Ollivier-Ricci Curvature of Graphs}
\label{sec-emd}
\newcommand{\emd}{{\sc Emd}}
\newcommand{\ric}{{\sc Ric}}

First, suppose we are given an unweighted graph $G=(V,E)$.
Given 
an edge $\{u,v\}\in E$, 
\cite{Oll07,Oll09,Oll11}
uses the 
``earth mover's distance'' (\emd)
measure to define a ``course Ricci curvature'' 
\ric$(u,v)$ 
along the edge $e=\{u,v\}$
in the following manner.
Let $V_u$ and $V_v$ be the closed neighborhoods of $u$ and $v$, respectively, , \IE, $V_{u} \eqdef \{u\}\cup \nbr_G(u)$
and 
$V_{v} \eqdef \{v\}\cup \nbr_G(v)$.
Let the probability distributions $\PP_u$ and $\PP_v$ be uniform distributions
over the nodes in 
$V_u$ and 
$V_v$, 
respectively, \IE,
$
\forall \, v_1\in V_u : \,
\PP_u(v_1) 
= 
\frac{1}{1+\deg_G(u)}
$
and 
$
\forall \, v_2\in V_v: \,
\PP_v(v_2) 
= 
\frac{1}{1+\deg_G(v)}
$.
The 
Ollivier-Ricci curvature
is then defined as 
({\em cf}.\ \cite[Definition~3]{Oll09}):
\begin{gather*}
\text{\ric}_{\!\!G}(e) \eqdef
\text{\ric}_{\!\!G}(u,v) = 
1 - \text{\emd}_{\!\!G}(V_{u},V_v,\PP_u,\PP_v) 
\end{gather*}
where
the earth mover's distance 
$\text{\emd}_{\!\!G}(V_{u},V_v,\PP_u,\PP_v)$,
which can be computed in polynomial time,
is an optimal solution of the \emph{linear programming} ($\LP$) problem shown in \FI{f1}.
We can think of 
every number $\PP_u(v_1)$ as the maximum total amount of ``earth'' (dirt) at the ``source'' node $v_1$ that can be moved to other 
(destination) nodes, 
and every number $\PP_v(v_2)$ as the maximum total amount of earth the ``destination'' node $v_2$ can store in its storage.
The cost of transporting one unit of earth from source node $v_1$ to destination node $v_2$ is 
$\dist_G(v_1,v_2)$, and the goal is to satisfy the storage requirement of all nodes by moving earths as needed while
\emph{minimizing} the total transportation cost.
One can also think of the \emd\ solution as the distance between two probability distributions 
$\PP_u$ and $\PP_v$ on the set of nodes $V_u$ and $V_v$ based on the shortest-path metric on $G$.
The measure is extended for a graph $G$ with \emph{positive} edge weights by redefining $\dist_G(v_1,v_2)$ 
to be \emph{minimum total weight} of a path over all possible paths between $v_1$ and $v_2$ and using the modified equation:
\[
\textstyle
\text{\ric}_{\!\!G}(e) \eqdef
\text{\ric}_{\!\!G}(u,v) = 1 - \frac{ \text{\emd}_{\!\!G}(V_{u},V_v,\PP_u,\PP_v) } { \dist_G(u,v) }
\]
Since the value of 
$\text{\emd}_{\!\!G}(u,v)
\eqdef
\text{\emd}_{\!\!G}(V_{u},V_v,\PP_u,\PP_v)$
remains the same if we swap the two sets of source and destination nodes,
{\bf we assume throughout the paper that} 
$\pmb{\deg_G(u) \leq \deg_G(v)}$
in the calculation of 
$\text{\ric}_{\!\!G}(u,v)$.

\begin{figure}[htbp]
\centering
\begin{tabular}{ r l r}
\\
\toprule
\multicolumn{3}{c}{{\bf variables}: $z_{v_1,v_2}$ for every pair of nodes $v_1\in V_u,v_2\in V_v$} 
\\
[5pt]
  \emph{minimize} &  
	      $\sum_{v_1\in V_u}\sum_{v_2\in V_v} \dist_G(v_1,v_2)\, z_{v_1,v_2}$ 
	                &
        (* minimize total transportation cost *)
\\
[5pt]
  \emph{subject to} 
	 & 
   $\sum_{v_2\in V_v} z_{v_1,v_2} = \PP_u(v_1),\,\,$ for each $v_1\in V_u$
	 & 
  (* take from $v_1$ as much as it has *)
\\
[5pt]
	 & 
   $\sum_{v_1\in V_u} z_{v_1,v_2} = \PP_v(v_2),\,\,$ for each $v_2\in V_v$
	&
  (* ship to $v_2$ as much as it needs *)
\\
[5pt]
	& 
  $z_{v_1,v_2}\geq 0,\,\,$ for all $v_1\in V_u,v_2\in V_v$
	& 
\\
\bottomrule
\end{tabular}
\caption{\label{f1}$\LP$-formulation for \emd\ on the set of nodes $V_u$ and $V_v$ with $|V_1|\times |V_2|$ variables.
Comments are enclosed by (* and *). 
The value of the objective function 
corresponding to a \emph{valid} (not necessarily optimal) solution 
$\vec{z}=\{z_{v_1,v_2}\,|\, v_1\in V_u, v_2\in V_v\}$ 
that satisfy all the constraints is denoted by 
$\text{\tco}_{\!\!G}(\vec{z},V_u,V_v,\PP_u,\PP_v)$
(thus, 
$\text{\emd}_{\!\!G}(V_u,V_v,\PP_u,\PP_v)
= \min_{\vec{z}} \,\{ 
\text{\tco}_{\!\!G}(\vec{z},V_u,V_v,\PP_u,\PP_v)
\}$). 
}
\end{figure}   

\subsection{Quantifying Criticality of an Edge}

It is well-known that 
\textbf{positive curvatures and negative curvatures 
define very different topological worlds}~\cite{Oll11}; 
for example, 
positive 
Ollivier-Ricci curvature
allow rapid mixing of a natural Markov process based on the distributions 
$\PP_u$ and $\PP_v$~\cite{BD97}
and 
see~\cite{Gro91} 
for further extensive discussions on the sign and geometric meaning of curvature.
Based on this observation, we adopt the extremal anomaly detection framework used in the context of Gromov-hyperbolic 
curvature of undirected unweighted graphs in~\cite{DJY20} to measure the criticality of 
an edge in the following way.
For a given
edge $\ec=\{u,v\}\in E$,
we measure the 
criticality of the edge 
by the amount of topological alterations 
of $G$ that we need to perform to change the value of 
$\sgn( \! \text{\ric}_{\!\!G} (\ec) )$, 
where 
$ \sgn(x) = 1$ if $x>0$
and 
$ \sgn(x) = 0$ if $x\leq 0$\footnote{All the results will hold even if 
$\sgn(0)$ is taken to be equal to $1$ or if $\sgn(0)$ is declared undefined.}.
We have several variations of the problem depending on the following attributes:
\begin{description}[labelindent=0cm,leftmargin=0.3cm]
\item[Unweighted vs.\ positively edge-weighted graphs (\uw\ vs.\ \wt):]
For weighted graphs, we assume that the edge weights are integers selected from
$\{1,2,\dots,W\}$ where $W$ is polynomial in $\deg_G(v)$.
We do not allow an edge weight to be zero since 
$\text{\ric}_{\!\!G}(e)$ is undefined for this case and also to preserve 
the metric property needed for any curvature definition. 
Our justification of allowing only integers as opposed to (polynomial-size) rational numbers 
stems from the fact that 
$\text{\ric}_{\!\!G}(e)$ is
``scale-invariant'' (\IE, multiplying all edge weights of a graph by a fixed positive 
number does not change the value of 
$\text{\ric}_{\!\!G}(e)$), and thus if the edge weights are from the set 
$
\{\frac{a_1}{b_1},\dots,
\frac{a_k}{b_k}
\}
$
then as long as LCM$(b_1,\dots,b_k)$ is a polynomial-size number we can scale all the 
weights to be integers of polynomial size.
\item[Nature of topolical change: deletion vs.\ insertion of edges (\del\ vs.\ \ins):]
We allow only insertion or deletion of edges in the graph, but not both.
For weighted graphs, we assume that the weight of an inserted edge is 
of polynomial size. However, it is not difficult to see that we do not need edges 
arbitrarily large, namely if an inserted edge has a weight greater than the diameter 
of the given graph then the insertion of the edge has no effect on the shortest path
between any two nodes and thus there is no need to insert such an edge.
Assuming that the given graph has weights from the set 
$\{1,\dots,W\}$, 
this provides an upper bound of $(|V|-1)W$ on the weight of an inserted edge.
\item[Restricted vs.\ unrestricted candidate edges for insertion/deletion (\rt\ vs.\ \ut):]
We define two cases of permissible edges 
for insertion or deletion.
\begin{enumerate}[label=$\blacktriangleright$,leftmargin=*]
\item
When insertions of edges are allowed then 
for the restricted case only those new edges (not in $E$) 
can be added that whose 
one end-point is in $\nbr_G(u)\setminus\{v\}$
and 
the other end-point is in $\nbr_G(v)\setminus\{u\}$.
For the unrestricted case, any new edge can be added.
\item
If only deletion of edges are allowed then, 
for either case, the edge $\ec$ itself cannot be deleted. 
For the restricted case, only those edges $\{u',v'\}\in E$ 
for which $u',v'\notin\{u,v\}$
can be deleted.
For the unrestricted case, any edge in $E$, except $\ec$, can be deleted.
\end{enumerate}
Note that for restricted edge insertions/deletions the neighborhoods of the two nodes $u$ and $v$
remain the same, but that is not the case for unrestricted insertions/deletions.
\item[Nature of curvature change: positive-to-negative vs.\ negative-to-positive (\ptn\ vs.\ \ntp):]
The two cases correspond to whether the topological changes cause 
$\sgn( \! \text{\ric}_{\!\!} (\ec) )$ to change from $1$ to $0$ or from $0$ to $1$.
\end{description}
A version of the problem will be denoted by combining the various attribute values in 
the same order they were presented in the above discussion, \EG, 
Problem~\wt-\ut-\del-\ptn\
or 
Problem~\uw-\ut-\ins-\ntp.
It is easy to see that for unweighted graphs, insertions ({\em resp.}, deletions) 
of edges can only decrease ({\em resp.}, increase) the value of 
$\text{\ric}_{\!\!} (\ec)$.
Thus, the problems 
\uw-\ut-\del-\ptn,
\uw-\rt-\del-\ptn,
\uw-\ut-\ins-\ntp\
and 
\uw-\rt-\ins-\ntp\
have no feasible solutions and need not be studied. 
However, the situation is not so simple for weighted graphs even for restricted insertions/deletions. 
To see this, note that when we insert 
({\em resp.}, delete) 
an edge, it may decrease 
({\em resp.}, decrease) 
the distance values between many nodes but it may also change 
the value of $\dist_G(u,v)$ and therefore it is possible for the ratio
$\frac{ \text{\emd}_{\!\!G}(u,v) } { \dist_G(u,v) }$
to either increase or decrease.

\begin{observation}\label{obs-inswt}~\\
\textbf{\emph{(\emph{a})}}
Consider \emph{Problem}~\wt-\rt-\ins-\ntp\
and suppose that $w(u,v)\in\{1,2,3\}$.
Then the value of $\dist_G(u,v)$ remains the same no matter which edges are inserted
and therefore
insertion of any edge \emph{cannot} increase the value of 
$\frac{ \text{\emd}_{\!\!G}(u,v) } { \dist_G(u,v) }$.
Thus, in this case
we may assume that in any optimal solution 
if an edge between two nodes are inserted 
then the weight of the edge is $1$ since it provides the maximum 
decrease of the distance between nodes in the graph and consequently 
the maximum decrease in 
$\text{\emd}_{\!\!G}(u,v)$
\emph{(\emph{this is used in the proof of 
Theorem~{\rm\ref{thm-alg2-ins}}})}.

\smallskip
\noindent
\textbf{\emph{(\emph{b})}}
By the same reasoning as in 
\textbf{\emph{(\emph{a})}}, 
for \emph{Problem}~\wt-\ut-\ins-\ntp\
if $w(u,v)\in\{1,2\}$
then 
insertion of any edge cannot decrease the value of 
$\frac{ \text{\emd}_{\!\!G}(u,v) } { \dist_G(u,v) }$
since $w(u,v)$ does not change, and we may assume that in any optimal solution 
all inserted edges are of weight $1$
\emph{(\emph{this is used in the proof of 
Theorem~{\rm\ref{thm-hard-ins}}})}.
\end{observation}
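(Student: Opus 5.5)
The plan is to show two things separately. First, $\dist_G(u,v)$ is invariant under the allowed insertions. Second, for a \emph{fixed} set of inserted node pairs, making every inserted edge lighter can only shrink all pairwise distances. Together these give monotonicity of $\text{\emd}_{\!\!G}(u,v)$ and hence of the ratio $\frac{\text{\emd}_{\!\!G}(u,v)}{\dist_G(u,v)}$.

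\textbf{Step 1: the distance $\dist_G(u,v)$ does not change.} For part~(a), every inserted edge $\{x,y\}$ has $x\in\nbr_G(u)\setminus\{v\}$ and $y\in\nbr_G(v)\setminus\{u\}$, so $x,y\notin\{u,v\}$. Consider any $u$--$v$ path in the new graph that uses at least one inserted edge. It splits into a nonempty segment from $u$ to an endpoint of that edge, the inserted edge itself, and a nonempty segment to $v$. Since all weights are positive integers, each piece has weight at least $1$, so the path has total weight at least $3$. The old value satisfies $\dist_G(u,v)\le w(u,v)\le 3$, so no new path is shorter and $\dist_G(u,v)$ is unchanged. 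Part~(b) is the same argument: $\ec$ already exists and cannot be inserted, so a new $u$--$v$ path through an inserted edge has at least two edges and weight at least $2\ge w(u,v)\ge\dist_G(u,v)$.

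\textbf{Step 2: monotonicity of the earth mover's distance.} Inserting edges, or lowering the weight of an inserted edge, can only decrease every $\dist(v_1,v_2)$, because any path available before is still available at no greater cost. In the restricted case of~(a) the closed neighborhoods $V_u,V_v$ are unchanged, hence so are $\PP_u,\PP_v$. The feasible region of the $\LP$ in \FI{f1} is therefore identical before and after insertion, while the objective coefficients decrease pointwise. So every transport plan $\vec z$ has no larger \tco, and $\text{\emd}_{\!\!G}(u,v)$ cannot increase. By Step~1 the denominator is fixed, so the ratio cannot increase.

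In the unrestricted case~(b), the neighborhoods may change, so I will compare only configurations with the \emph{same} set of inserted pairs; these share the same marginals and differ only in the inserted weights.

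\textbf{Step 3: inserted edges may be taken to have weight $1$.} In both parts, fix the set of inserted pairs and replace each inserted weight by $1$. This keeps the distributions $\PP_u,\PP_v$ fixed and does not change $\dist_G(u,v)$, since Step~1 holds for any positive integer weights. It also decreases every pairwise distance pointwise, so by the $\LP$ argument of Step~2 the value of $\text{\emd}$ does not increase. The resulting ratio is therefore no larger, i.e. the curvature is no smaller, which is exactly the direction wanted for \ntp. Hence any optimal solution can be converted to one with all inserted edges of weight $1$ without increasing the number of insertions.

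The only delicate point is the direction of monotonicity in~(b). Without a fixed insertion set, new neighbors change $\PP_u,\PP_v$ and $\text{\emd}$ can move either way. The claim for~(b) should therefore be read as holding for each fixed set of inserted pairs, which is all the weight-$1$ reduction needs.
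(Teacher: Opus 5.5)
Your argument is correct and is essentially the paper's own inline justification made precise: $\dist_G(u,v)$ is unchanged because any new $u$--$v$ path through an inserted edge has weight at least $3$ (resp.\ at least $2$), and with the marginals $\PP_u,\PP_v$ fixed, pointwise smaller distances can only lower the optimum of the transport LP, so inserted edges may be taken to have weight $1$. Your caveat on (b) is also warranted: under unrestricted insertions the closed neighborhoods of $u$ and $v$ can change, so the ratio can move in either direction and the paper's monotonicity phrase is not literally true, but the proof of Theorem~\ref{thm-hard-ins} uses only the weight-$1$ reduction for a fixed set of inserted pairs and the invariance of $\dist_G(u,v)$.
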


\subsection{Justification of Our Quantification of Edge Criticality Measure}

In Section~\ref{sec-why-shape} we discussed network curvature measures are 
worth studying both theoretically and empirically 
since they
may encode 
non-trivial topological properties
that are not expressed by more 
established network-theoretic measures such as
degree distributions, 
clustering coefficients or 
betweenness centralities.
Network curvatures
have already found applications in a wide spectrum of research areas involving 
graphs and hypergraphs ranging over 
metabolic systems~\cite{SAAB25,LRSJ21},
transcriptional regulatory networks~\cite{EFLSJ20},
protein-protein-interaction networks~\cite{EFLSJ20},
social networks~\cite{SAAB25,CDR23,pmlr-v84-chien18a,d5n8-y58m,SJB19},
and 
brain networks~\cite{CATAD21,YEWJS23,Ricci}
to 
deep learning models for graph classification and other problems~\cite{Ye2020Curvature,grover2025curvgad,MILANO2026102879}. 
For this article, 
main motivations for studying edge criticality problems stem
from application to 
undirected \textbf{biological} (not social) networks 
such as functional correlation brain 
networks that gets altered during to neuro-degenerative diseases 
or injuries, \EG, for better understanding criticalities (importances) of specific connections involving 
the occipital region, the superior parietal lobule and the frontal operculum
of a functional connectivity network of brain in healthy humans \emph{vs}.\ 
patients suffering from attention deficit hyperactivity disorder~\cite{CATAD21}.
For these kinds of networks many established network measures, especially 
those motivated by applications to non-biological networks, do not work very well.
We also point out in passing that, as observed in past research works such as~\cite{CATAD21},
merely using the weight of the edge as a measure of its criticality does not perform well.

A general way for finding critical components of biological systems 
is via information-theoretic definitions of redundancy 
based on mutual-information contents (a component is considered 
critical if its redundancy is low)~\cite{TSE99,TSE94,TSE96}.
For example, the authors of~\cite{TSE99} consider system $X$ consisting of $n$ elements that
produces a set of outputs $\OO$ via a fixed connectivity matrix from a subset
of these elements, and 
defines the redundancy of $X$ 
as the difference between summed mutual information upon perturbation between 
perturbed bi-partitions of $X$ summed over all bipartition sizes
and 
the mutual information between the entire system and $\OO$ (\emph{cf}.\ \cite[Equation~{[3]}]{TSE99}).
However, a computational difficulty in applying such a
definition is that 
the number of possible bipartitions could be astronomically large even
for a modest size network.
Measures avoiding
inspecting all
bi-partitions were also proposed in~\cite{TSE99}, but the computational
complexities and accuracies 
of these measures on larger
networks are not well-explored yet.
More computationally feasible graph-theoretic approaches to circumvent this issue 
was proposed in~\cite{ADGGHPSS11}
but the methods there work only for unweighted directed networks.

An alternate approach, used in many research publications such 
as~\cite{CATAD21,Elumalai2021,YEWJS23},
is to treat the value of the curvature of the edge itself as a measure 
of its criticality.
However, as can be seen in past research works such as~\cite{CATAD21}
the curvature value still may not distinguish the criticality in all cases. 
As an example, the values of 
$\text{\ric}_{\!\!G}(e)$
are the same even for the two unweighted graphs shown in
\FI{fig2-cap}
even though the local topologies around those edges are very
different in the two graphs.

\sidecaptionvpos{figure}{c}
\begin{SCfigure}[50]
\scalebox{0.8}[0.8]{\includegraphics{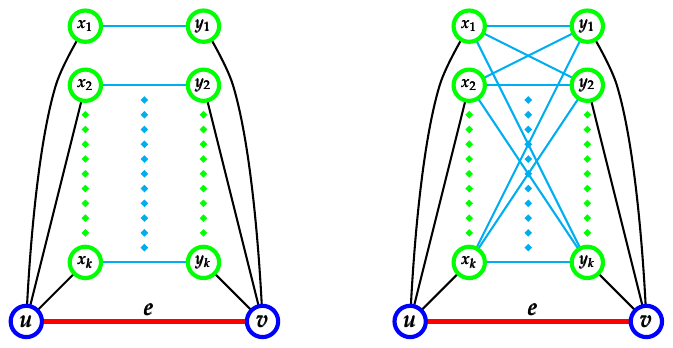}}
\caption{\label{fig2-cap}
Two graphs with the same value for 
$\text{\ric}_{\!\!G}(e)$
even though their local topologies are very different.
However, viewed in the framework of 
Problem~\uw-\rt-\del-\ptn\
the criticalities of the two edges are different since the graph in the right
needs deletion of at least $k-1$ more edges to change the value of 
$\sgn( \! \text{\ric}_{\!\!G} (e) )$. 
}
\end{SCfigure}

One way to overcome this limitation is via using Ricci flows and topological surgeries, as
was done in papers such as~\cite{SAAB25,d5n8-y58m,Weber17,NLLG19,LBL22,BMB23}. 
This approach some success in biological applications such as in 
finding cores in biochemical reaction systems\cite{SAAB25} and in mapping cellular 
trajectories~\cite{BMB23}. However, this may require a large number of iteration 
if the convergence rate of the flow is slow and since at each iteration one needs to
recompute the curvatures of all edges in the network the resulting process may take a
long time. 

An alternate approach, such as in~\cite{DJY20,EOCNDKT24,j.1467-8659.2009.01525.x}, 
would be to characterize the criticality of an edge by 
measuring the robustness of the curvature due to structural changes, like edge removal or noise.
\textbf{This is the approach followed in this article}.
Going back to the graphs in \FI{fig2-cap}, viewed in the framework of 
Problem~\uw-\rt-\del-\ptn\
the criticalities of the two edges are quite different, with the graph in the right
needing deletion of at least $k-1$ more edges to change the value of 
$\sgn( \! \text{\ric}_{\!\!G} (e) )$.

\subsection*{Justifications for the Edge Insertion/Deletion Model}

There are two ways one can change the topology of a network, by inserting/deleting one or more edges
or by inserting/deleting one or more nodes (for many network measures, including the curvature measure 
used in this article, deletion of a node can be equivalently seen as deletion of all edges incident 
on the node together as a group).
In this paper we add or delete edges from a network while keeping the node set the same.
This scenario captures a wide variety of applications 
such as inducing desired outcomes in disease-related biological networks via gene knockout~\cite{SWLXLAA11,ZA15},
inference of minimal biological networks from indirect experimental evidences or gene 
perturbation data~\cite{ADDKSZW07,ADDS,W02}, and 
functional correlation brain networks~\cite{CATAD21}, 
to name a few.
However, the node addition/deletion model or a mixture of node/edge addition/deletion model is also 
significant in many applications such as measuring fault-tolerance in computer networks; we leave investigations of these models 
as future research topics.

\subsection{Clarifying Remarks on the Time Complexity Issues}

In this article, we are interested in distinguishing 
between polynomial-time computations vs.\ computational 
intractability via $\NP$-completeness and inapproximability results,
so we do not concern ourselves with the precise nature of the polynomial
for polynomial-time algorithms. 
Thus, 
we utilize any (out of many possible) polynomial-time algorithm 
$\cO$ 
(\EG, Dijkstra's algorithm) 
that provides the distance between two specific nodes of a 
given (possibly edge-weighted) undirected graph on demand.
Our input size is the number of nodes $|V|$ of the given input graph
as is standard in published research works for time complexity calculations. 
Our results satisfy the following two criteria:  
\begin{enumerate}[label=$\triangleright$,leftmargin=*]
\item
Our polynomial-time algorithms will use 
$\mathrm{poly}(\deg_G(u)+\deg_G(v))$
time plus a 
$\mathrm{poly}(\deg_G(u)+\deg_G(v))$ 
calls to $\cO$.
Since 
$\deg_G(u)+\deg_G(v)<2|V|$
our algorithms will therefore run in 
$\mathrm{poly}(|V|)$ time, as desired. 
\item
For our computational hardness ($\NP$-completeness and inapproximability) 
results, all of our constructed hard instances will satisfy the condition that   
$\deg_G(u)+\deg_G(v)=\Theta(|V|)$, and therefore proving these hardness 
results assuming 
$\deg_G(u)+\deg_G(v)$ as input size also implies the same hardness result 
when $|V|$ is considered to be the input size.
\end{enumerate}

\subsection{Synopsis of Results and Proof Techniques}

\addtolength{\tabcolsep}{-3pt} 
\begin{table}[h]
{
\begin{tabular}{ c    c     c c c      c     c }
\toprule
							 & 
							 & 
            \multicolumn{3}{|c|}{Upper bound} & 
               \multicolumn{1}{c|}{Lower bound} & 
\\
\cline{3-4}
\cline{5-6}
								    \multicolumn{2}{c}{Problem} 
        &
				     \multicolumn{1}{|c}{
               \begin{tabular}{c}
						      \small Algorithm 
							 \\
										[-2pt]
								 \small Type$^{\!\!\text{\small\ding{172}}}$
               \end{tabular}
							 }
						      & 
               \begin{tabular}{c}
									\small Approximation 
									\\
										[-2pt]
									\small ratio
               \end{tabular}
						  &
				     \multicolumn{1}{c|}{
               \begin{tabular}{c}
									\small assumption 
									\\
										[-2pt]
									\small (if any)
               \end{tabular}
							 }
						  &
				     \multicolumn{1}{c|}{
               \begin{tabular}{c}
							  \small $\NP$-completeness 
								\\
										[-2pt]
								\small or
								\\
										[-2pt]
							  \small inapproximability 
								\\
										[-2pt]
							  \small ratio 
               \end{tabular}
							 }
              &
						      \begin{tabular}{c} \small Theorem, \\[-2pt] \small Section \end{tabular}
\\
\hline
\multirow{3}{*}{\small \wt-\rt-\ins-\ntp} & 
          \begin{tabular}{c}
					\small feasibility 
					\\
					[-2pt]
					\small checking
          \end{tabular}
					&
					\textsf{P}
					&
					---
					&
					\small
					$w(u,v)\in\{1,2,3\}$
					&
					---
					&
          \begin{tabular}{c}
            \small Theorem~\ref{thm-alg2-ins}\textbf{{(\emph{i})}}, 
						\\
						\small Section~\ref{sec-wt-rt-ins-ntp}
          \end{tabular}
\\
\cline{2-7}
 & 
          \begin{tabular}{c}
					\small finding 
					\\
					[-2pt]
					\small optimal
					\\
					[-2pt]
					solution
          \end{tabular}
					&
					---
					&
					---
					&
					---
					&
					 $\NP$-complete$^{\!\!\text{\small\ding{173}}}$
					&
          \begin{tabular}{c}
            \small Theorem~\ref{thm-alg2-ins}\textbf{{(\emph{ii})}}, 
						\\
						\small Section~\ref{sec-wt-rt-ins-ntp}
          \end{tabular}
\\
\midrule
\multirow{3}{*}{\small \uw-\rt-\ins-\ntp}  &
\begin{tabular}{c}
					\small feasibility$^{\dagger}$ 
					\\
					[-2pt]
					\small checking
          \end{tabular}
					&
					\textsf{P}
					&
					---
					&
					---
					&
					---
					&
          \begin{tabular}{c}
            \small Theorem~\ref{thm-alg-ins-feas}, 
						\\
						\small Section~\ref{sec-uw-rt-ins-ntp-feas}
          \end{tabular}
\\
\cline{2-7}
&
					\multirow{7}{*}{
          \begin{tabular}{c}
					\small finding 
					\\
					[-2pt]
					\small optimal
					\\
					[-2pt]
					solution
          \end{tabular}
					}
					&
				   \textsf{RP}	
					&
				   $\bbig$	
					&
					$\spadesuit$
					&
					---
					&
          \begin{tabular}{c}
            \small Theorem~\ref{thm-alg-ins-rest}\textbf{{(\emph{a})}}, 
						\\
						\small Section~\ref{sec-uw-rt-ins-ntp-rest}
          \end{tabular}
\\
[-2pt]
\cline{3-7}
&
					&
				   \textsf{P}	
					&
				   $2\bbig$	
					&
					$\spadesuit$
					&
					---
					&
          \begin{tabular}{c}
            \small Theorem~\ref{thm-alg-ins-rest}\textbf{{(\emph{b})}}, 
						\\
						\small Section~\ref{sec-uw-rt-ins-ntp-rest}
          \end{tabular}
\\
[-2pt]
\cline{3-7}
&
					&
				   \textsf{RP}	
					&
				   $\abig+\bbig$	
					&
					---
					&
					---
					&
          \begin{tabular}{c}
            \small Theorem~\ref{thm-alg-ins-unrest}\textbf{{(\emph{a})}}, 
						\\
						\small Section~\ref{sec-uw-rt-ins-ntp-unrest}
          \end{tabular}
\\
[-2pt]
\cline{3-7}
&
					&
				   \textsf{P}	
					&
				   $2 (\abig+\bbig )$	
					&
					---
					&
					---
					&
          \begin{tabular}{c}
            \small Theorem~\ref{thm-alg-ins-unrest}\textbf{{(\emph{b})}}, 
						\\
						\small Section~\ref{sec-uw-rt-ins-ntp-unrest}
          \end{tabular}
\\
\midrule
\multirow{3}{*}{\small \uw-\rt-\del-\ptn} & 
          \begin{tabular}{c}
					\small feasibility$^{\dagger}$ 
					\\
					[-2pt]
					\small checking
          \end{tabular}
					&
					\textsf{P}
					&
					---
					&
					\small
					---
					&
					---
					&
          \begin{tabular}{c}
            \small Theorem~\ref{thm-lower}\textbf{{(\emph{a})}}, 
						\\
						\small Section~\ref{sec-uw-rt-del-ptn}
          \end{tabular}
\\
\cline{2-7}
 & 
          \begin{tabular}{c}
					\small finding 
					\\
					[-2pt]
					\small optimal
					\\
					[-2pt]
					solution
          \end{tabular}
					&
					---
					&
					---
					&
					---
					&
					 $\NP$-complete$^{\!\!\text{\small\ding{174}}}$
					&
          \begin{tabular}{c}
            \small Theorem~\ref{thm-lower}\textbf{{(\emph{b})}}, 
						\\
						\small Section~\ref{sec-uw-rt-del-ptn}
          \end{tabular}
\\
\midrule
       {\small \wt-\ut-\ins-\ntp} & 
          \begin{tabular}{c}
					\small finding 
					\\
					[-2pt]
					\small optimal
					\\
					[-2pt]
					solution
          \end{tabular}
					&
					---
					&
					---
					&
					---
					&
					 $\eps\ln(\deg_G(v))^{\!\!\text{\small\ding{175}}}$
					&
          \begin{tabular}{c}
            \small Theorem~\ref{thm-hard-ins}, 
						\\
						\small Section~\ref{wt-ut-ins-ntp}
          \end{tabular}
\\
\bottomrule
\\
    \multicolumn{7}{l}{\hphantom{kkkkk}\small$\!\!^{\text{\small\ding{172}}}\!$\textsf{RP}: polynomial running time with success probability $1-o(1)$ }
\\
    \multicolumn{7}{l}{\hphantom{kkkkk}\small\hphantom{$\!\!^{\text{\small\ding{172}}}\!$}\textsf{P}: deterministic algorithm with polynomial running time}
\\
[2pt]
    \multicolumn{7}{l}{\hphantom{kkkkk}\small$\!\!^{\text{\small\ding{173}}}\!$result holds even if  
			all edge weights of $G$ are $1$, $2$ or $3$ and $\bbig=1$ }
\\
[2pt]
    \multicolumn{7}{l}{\hphantom{kkkkk}\small$\!\!^{\text{\small\ding{174}}}\!$result holds even if  
                       $\deg_G(u)=\deg_G(v)$, also implies $\NP$-completeness of \wt-\rt-\del-\ptn}
\\
[2pt]
    \multicolumn{7}{l}{\hphantom{kkkkk}\small$\!\!^{\text{\small\ding{175}}}\!$result holds even if  
                       $\bbig=1$, assuming \textsf{P}$\neq\!\NP$, $0<\eps<1$ is a fixed (specific) constant}
\\
[2pt]
     \multicolumn{7}{l}{\hphantom{kkkkk}\small$^{\dagger}$also provides sufficient conditions for feasiblity}
\\
[2pt]
     \multicolumn{7}{l}{\hphantom{kkkkk}\small$^{\spadesuit}$there are no edges connecting pairs of nodes in 
		  $\nbr_G(v)\setminus\{u\}$ or pairs of nodes in $\nbr_G(u)\setminus\{v\}$}
\\
\end{tabular}
}
\caption{\label{tab-sum}A summary of our algorithmic and inapproximability results,
\textbf{excluding the structural results in Section~\ref{sec-folklore}} 
(Theorem~\ref{thm-folklore}, Proposition~\ref{prop-rationalility}, Proposition~\ref{prop-struct-H}, and
Corollary~\ref{coro-struct-H}). 
Parameters $\abig$ and $\bbig$ are as defined in 
Section~{\rm\ref{basic-notations}}, 
\emph{\IE}, 
$\abig = \frac{ \mathrm{LCM} \left(1+\deg_G(u),1+\deg_G(v) \right) }{1+\deg_G(u) } $, 
$\bbig = \frac{ \mathrm{LCM} \left(1+\deg_G(u),1+\deg_G(v) \right) }{1+\deg_G(v) } $.
} 
\end{table}
\addtolength{\tabcolsep}{3pt} 

In this article, 
we provide a formal framework for studying algorithmic and computational complexity issues 
for detecting critical edges in an undirected graph using Ollivier-Ricci curvatures 
and
provide several (polynomial-time) algorithms and $\NP$-hardness/inapproximability
results 
for problems in this framework
in Sections~\ref{sec-folklore}--\ref{wt-ut-ins-ntp}
in the following manner:
\begin{enumerate}[label=$\triangleright$,leftmargin=*]
\item
In Section~\ref{sec-folklore} we state a folklore theorem (proof provided in 
Section~\ref{sec-app-folklore} of the appendix)
that relates the earth mover's distance calculation part in the Ricci curvature computation to 
finding a minimum-cost perfect matching, 
and subsequently prove useful results
(Theorem~\ref{thm-folklore}, Proposition~\ref{prop-rationalility}, Proposition~\ref{prop-struct-H}, and
Corollary~\ref{coro-struct-H})
relating edge insertions/deletions to a perfect matching 
in an associated bipartite graph.
\textbf{Our remaining results are succintly summarized in Table~\ref{tab-sum}}.
\item
Our algorithmic 
results in Theorem~\ref{thm-alg-ins-rest} and Theorem~\ref{thm-alg-ins-unrest}
bring out some interesting
connections between the edge insertion model
and versions of the so-called ``red-blue perfect matching problem''.
Randomized algorithms for 
the red-blue perfect matching problem
(as an extension of the perfect matching problem) 
was first studied via Tutte matrix 
by 
Mulmuley, Vazirani and Vazirani in $1987$~\cite{MVV87} 
and later extended to other similar problems such as exact perfect matching in 
publications such as~\cite{elma23,GKMT17}.
\item
Our $\NP$-completeness and approximation hardness results in 
Theorem~\ref{thm-alg2-ins} and 
Theorem~\ref{thm-hard-ins}
use reductions from well-known combinatorial covering and packing problems.
\item
Our computational hardness result for the restricted edge-deletion model 
in Theorem~\ref{thm-lower}
shows some interesting connection with the so-called 
``perfect matching blocker'' problem for bipartite graphs. 
The problem and its analogous version for
transversals
were first studied for maximum matching problem for general graphs 
in~\cite{ZENKLUSEN20094306,Kmk07}
and subsequently for perfect matchings in bipartite graphs~\cite{LACROIX201225}
with motivations from quite different application domains~\cite{Foltin02,10.1016/j.cie.2010.12.002}. 
\end{enumerate}

\section{A Folklore Structural Result for 
\emd}
\label{sec-folklore}

Let $H=(L,R,w)$ be an edge-weighted \emph{complete} bipartite graph where 
$L=\{u_1,\dots,u_r\}$ and 
$R=\{v_1,\dots,v_s\}$
are the left and right partition of nodes and 
$w(u_i,v_j)\geq 0$
is a \emph{non-negative} integer denoting the weight of the edge $\{u_i,v_j\}$. 
Consider the computation of 
$\text{\emd}_{\!\!H}(L,R,\PP_L,\PP_R)$
where $\PP_L$ and $\PP_R$ are uniform distributions on $L$ and $R$, respectively. 
Given such a $H$, we build a new complete bipartite graph 
$\widehat{H}=(\widehat{L},\widehat{R},\widehat{w})$
in the following manner. Let 
$q=\abig r=\bbig s$
where
$q$ is the \emph{least common multiple} of $r$ and $s$.
Then, 
for every node $u_i\in L$ we have $a$ nodes $u_{i,1},\dots,u_{i,\abig}$ in 
$\widehat{L}$, 
for every node $v_j\in R$ we have $\bbig$ nodes $v_{j,1},\dots,v_{j,\bbig}$ in 
$\widehat{R}$, 
and 
for every edge $\{u_i,v_j\}$ in $H$ 
we have $\abig\bbig$ edges $\big\{ \{ u_{i,k},v_{j,\ell}\} \,|\, k\in\{1,\dots,\abig\},\ell\in\{1,\dots,\bbig\} \big\}$ in 
with the weight of each of these $\abig\bbig$ edges being equal to 
$w(u_i,v_j)$.
Note that 
$\big|\widehat{L}\big|= \big|\widehat{R}\big|=q$.
Let 
$\widehat{\PP_L}$ and $\widehat{\PP_R}$ denote uniform distributions on $\widehat{L}$ and $\widehat{R}$, respectively,
and let 
\mcpm{{\widehat{H}}}
be the value of a minimum-cost perfect matching of $\widehat{H}$.
The following result relating 
$\text{\emd}_{\!\!H}(L,R,\PP_L,\PP_R)$
to 
$\text{\emd}_{\!\!\widehat{H}}(\widehat{L},\widehat{R},\widehat{\PP_L},\widehat{\PP_R})$
and 
a minimum-cost perfect matching of 
$\widehat{H}$
is implicit in many previous works in related combinatorial
optimization, but we could not find an explicit statement for it, so we 
state the result and its proof.

\begin{theorem}[Folklore]\label{thm-folklore}~\\[-20pt]
\begin{enumerate}[label={\textbf{({\alph*})}},leftmargin=*]
\item
From a valid solution 
$\vec{z}=\{z_{u_i,v_j}\,|\, u_i\in L, v_j\in R\}$ 
of value 
$\text{\tco}_{\!\!H}(\vec{z},L,R,\PP_L,\PP_R)$
on $H$ 
we can construct 
a valid solution 
$\vec{\widehat{z}}=\{\widehat{z}_{u_{i,k},v_{j,\ell}}\,|\, u_{i,k}\in \widehat{L}, v_{j,\ell}\in \widehat{R}\}$ 
of value 
\\
$\text{\tco}_{\!\!\widehat{H}}(\vec{\widehat{z}},\widehat{L},\widehat{R},\widehat{\PP_L},\widehat{\PP_R)}
=
\text{\tco}_{\!\!H}(\vec{z},L,R,\PP_L,\PP_R)$
on $\widehat{H}$, and vice versa.
\item
There exists
an optimal solution 
$\vec{\widehat{z}}=\{\widehat{z}_{u_{i,k},v_{j,\ell}}\,|\, u_{i,k}\in \widehat{L}, v_{j,\ell}\in \widehat{R}\}$ 
of value 
$\text{\emd}_{\!\!\widehat{H}}(\widehat{L},\widehat{R},\widehat{\PP_L},\widehat{\PP_R})$
on $\widehat{H}$
in which 
$\widehat{z}_{u_{i,k},v_{j,\ell}}\in\{0,\nicefrac{1}{q}\}$ for all $u_{i,k}\in \widehat{L}, v_{j,\ell}\in \widehat{R}$, 
and the set of edges $\big\{ \{u_{i,k},v_{j,\ell}\} \,|\, \widehat{z}_{u_{i,k},v_{j,\ell}}=\nicefrac{1}{q} \big\}$
of $\widehat{H}$ 
corresponds to 
a minimum-cost perfect matching 
of $\widehat{H}$ 
of cost 
$q \times \text{\emd}_{\!\!\widehat{H}}(\widehat{L},\widehat{R},\widehat{\PP_L},\widehat{\PP_R})$.
\item
A perfect matching $\M$
of $\widehat{H}$ 
of cost 
$\alpha$ correspond to a valid solution 
$\vec{\widehat{z}}=\{\widehat{z}_{u_{i,k},v_{j,\ell}}\in\{0,\nicefrac{1}{q}\}\,|\, u_{i,k}\in \widehat{L}, v_{j,\ell}\in \widehat{R}\}$ 
of value 
$\text{\tco}_{\!\!\widehat{H}}(\vec{\widehat{z}},\widehat{L},\widehat{R},\widehat{\PP_L},\widehat{\PP_R)}
=
\nicefrac{\alpha}{q}$
on $\widehat{H}$ 
in which 
$\widehat{z}_{u_{i,k},v_{j,\ell}}=\nicefrac{1}{q}$
if 
$\{u_{i,k},v_{j,\ell}\}\in\M$.
\end{enumerate}
The above three claims also imply
$\text{\emd}_{\!\!H}(L,R,\PP_L,\PP_R)=
\text{\emd}_{\!\!\widehat{H}}(\widehat{L},\widehat{R},\widehat{\PP_L},\widehat{\PP_R})=
\frac{\mcpm{{\widehat{H}}}}{q}$.
\end{theorem}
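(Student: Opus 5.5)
The plan is to prove the three parts in order and then combine them. Throughout, $\PP_L(u_i)=1/r=\abig/q$, $\PP_R(v_j)=1/s=\bbig/q$, and $\widehat{\PP_L},\widehat{\PP_R}$ put mass $1/q$ on every copy.

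For part (a), the forward direction is a splitting. Given a valid $\vec{z}$ on $H$, set
\[
\widehat{z}_{u_{i,k},v_{j,\ell}}=\frac{z_{u_i,v_j}}{\abig\bbig}
\]
for all copies $k,\ell$. The row sum at $u_{i,k}$ is $\frac{1}{\abig\bbig}\sum_j \bbig\, z_{u_i,v_j}=\frac{1}{\abig r}=\frac1q$, and the column sums are symmetric. Each original term $w(u_i,v_j)z_{u_i,v_j}$ is spread over $\abig\bbig$ copy-edges of the same weight, so the cost is preserved.

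For the converse, aggregate by setting $z_{u_i,v_j}=\sum_{k,\ell}\widehat{z}_{u_{i,k},v_{j,\ell}}$. The row sum at $u_i$ is then $\abig\cdot\frac1q=\frac1r$, the column sums are analogous, and the cost is again equal because all copy-edges of $\{u_i,v_j\}$ carry the weight $w(u_i,v_j)$. As a consequence of (a), the two earth mover's distances coincide.

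For part (c), take a perfect matching $\M$ of $\widehat{H}$ and set $\widehat{z}=1/q$ on the edges of $\M$ and $0$ elsewhere. Every node of $\widehat{H}$ is covered exactly once by $\M$, so every marginal equals $1/q$, and the cost is $\alpha/q$.

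For part (b), scale by $q$: the variables $y=q\widehat{z}$ range over the Birkhoff polytope of $q\times q$ doubly stochastic matrices, and the objective becomes $\frac1q\sum \widehat{w}\,y$. The feasible region is nonempty and bounded, so the LP attains its optimum at a vertex. By the Birkhoff--von Neumann theorem, equivalently by total unimodularity of the bipartite incidence matrix, the vertices are exactly the permutation matrices. Hence some optimal $\widehat{z}$ has entries in $\{0,1/q\}$, and its support is a perfect matching. That matching has cost $q\cdot\text{\emd}$, and by (c) no perfect matching can cost less, since any cheaper matching would give a feasible LP value below the optimum. So this matching is a minimum-cost perfect matching.

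Chaining the results gives $\text{\emd}_H=\text{\emd}_{\widehat H}=\mcpm{\widehat H}/q$. The only nonroutine step is the integrality in (b). I will cite Birkhoff--von Neumann rather than reprove it; the remaining work is bookkeeping of the factors $\abig,\bbig$, using $q=\abig r=\bbig s$.
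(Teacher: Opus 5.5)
Your proposal is correct and follows the paper's proof essentially step for step: splitting and aggregation for (a), rescaling by $q$ and invoking integrality of the bipartite assignment polytope for (b), and the indicator construction for (c). The paper cites total unimodularity of the incidence matrix where you cite Birkhoff--von Neumann. Your explicit use of (c) to certify that the integral optimum is a \emph{minimum}-cost perfect matching supplies a step the paper leaves implicit.
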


A proof of the above theorem can be found in Section~\ref{sec-app-folklore} of the appendix.

\subsection{Relating $\pmb{H}$ to \emd\ Calculation}
\label{sec-H-to-anomaly}

As before let 
$V_u=\{u_1,\dots,u_{r-2},u_{r-1}=u,u_r=v\}$
and 
$V_v=\{v_1,\dots,v_{s-2},v_{s-1}=u,v_s=v\}$
where 
$r=\deg_G(u)+1$
and 
$s=\deg_G(v)+1$,
Consider a complete edge-weighted bipartite graph 
$H=(L,R,w)$ 
where 
$L=V_u$,
$R=V_v$,
and 
$
w(u_i,v_j)=\dist_G(u_i,v_j)
$.
Note that 
$
w(u_i,v_j)=0 \equiv u_i=v_j
$,
$
w(u_i,v_j)=1 \equiv \{u_i,v_j\}\in E
$, and 
$
\text{\emd}_{\!\!G}(u,v)
=
\text{\emd}_{\!\!H}(L,R,\PP_L,\PP_R)
$.

\begin{proposition}\label{prop-rationalility}
The value of 
$\text{\emd}_{\!\!H}(L,R,\PP_L,\PP_R)$
is of the form
$\frac{\delta}{q}$ 
corresponding to a minimum-cost perfect matching $M$ of $\widehat{H}$ with $\zeta(M)=\delta$
for some non-negative integer $\delta$ 
satisfying $0\leq\delta<3q$ for unweighted graphs and
$0\leq \delta<3Wq$ for weighted graphs.
This implies that for unweighted graphs 
$
\text{\ric}_{\!\!G}(\ec)=
1-\text{\emd}_{\!\!H}(L,R,\PP_L,\PP_R)=
\frac{q-\delta}{q}
$
and for weighted graphs (since $\dist_G(u,v)\in\{1,2,\dots,W\}$) 
$
\text{\ric}_{\!\!G}(\ec)=
1-\frac{\text{\emd}_{\!\!H}(L,R,\PP_L,\PP_R)}{\dist_G(u,v)}=
1- \frac{\delta/\delta'}{q}
$
for some 
$\delta'\in\{1,2,\dots,W\}$.
\end{proposition}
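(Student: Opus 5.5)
The plan is to read off the form $\delta/q$ directly from Theorem~\ref{thm-folklore} and then bound $\delta$ using the maximum possible edge weight in $\widehat{H}$.

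By the final conclusion of Theorem~\ref{thm-folklore},
$\text{\emd}_{\!\!H}(L,R,\PP_L,\PP_R)=\mcpm{{\widehat{H}}}/q$.
Let $M$ be a minimum-cost perfect matching of $\widehat{H}$ and set $\delta=\zeta(M)=\mcpm{{\widehat{H}}}$. Every weight $\widehat{w}$ in $\widehat{H}$ copies some $w(u_i,v_j)=\dist_G(u_i,v_j)$, which is a non-negative integer (edge weights of $G$ are integers in $\{1,\dots,W\}$, or all equal to $1$ when $G$ is unweighted). Hence $\delta$ is a non-negative integer.

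For the upper bound, I will show that every weight of $H$ is small.
\begin{itemize}
\item \textbf{Unweighted case.} Any $u_i\in V_u$ and $v_j\in V_v$ are joined by the walk $u_i\to u\to v\to v_j$, so $\dist_G(u_i,v_j)\le 3$.
\item \textbf{Weighted case.} The same walk gives $\dist_G(u_i,v_j)\le w(u_i,u)+\dist_G(u,v)+w(v,v_j)\le 3W$.
\end{itemize}
Since $M$ has exactly $q$ edges, this yields $\delta\le 3q$, respectively $\delta\le 3Wq$.

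Getting strict inequality is the only delicate point. I would argue that not every edge of $M$ can have the maximal weight. The node $u$ appears in both $L$ and $R$. Its $\abig$ copies in $\widehat{L}$ and $\bbig$ copies in $\widehat{R}$ cannot all be matched through maximal-weight edges without contradicting optimality.

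More simply, I would exhibit a cheaper perfect matching. Take any perfect matching of $\widehat{H}$ and uncross it so that one copy of $u$ (left side) is matched to a copy of $u$ (right side) at weight $0$; exchanging partners in this way does not increase cost. This gives a perfect matching with at most $q-1$ edges of weight at most $3$ (respectively $3W$) and one edge of weight $0$. Therefore $\delta\le 3(q-1)<3q$ in the unweighted case, and analogously $\delta\le 3W(q-1)<3Wq$ in the weighted case.

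The exchange argument needs only the triangle inequality for $\dist_G$. Suppose $u_{L}$ is matched to $x$ and $y$ is matched to $u_{R}$. Swapping to the pairs $(u_L,u_R)$ and $(y,x)$ changes the cost by
\[
0+\dist_G(y,x)-\dist_G(u,x)-\dist_G(y,u)\le 0 .
\]

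The curvature formulas then follow by substitution. In the unweighted case,
$\text{\ric}_{\!\!G}(\ec)=1-\delta/q=(q-\delta)/q$.
In the weighted case, $\delta'=\dist_G(u,v)$ is a positive integer, and it is at most $W$ because the edge $\ec$ itself has weight at most $W$. This gives
$\text{\ric}_{\!\!G}(\ec)=1-\frac{\delta/\delta'}{q}$.

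The main obstacle is only the strictness of the bound, which the uncrossing step above resolves.
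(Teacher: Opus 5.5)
Your proposal is correct and follows the paper's own route. You read off the value $\delta/q$ from Theorem~\ref{thm-folklore}, with $\delta=\zeta(M)$ for a minimum-cost perfect matching $M$ of $\widehat{H}$, and you bound $\delta$ using that every edge of $\widehat{H}$ has weight at most $3$ (respectively $3W$).

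Your zero-weight-edge argument for the strict inequality supplies a step the paper only asserts, and it can be shortened. Since $\widehat{H}$ is complete bipartite, you can directly choose a perfect matching that pairs a copy of $u$ in $\widehat{L}$ with a copy of $u$ in $\widehat{R}$ at cost $0$, so neither the uncrossing nor the triangle inequality is needed.
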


\begin{proof}
Since
by Theorem~\ref{thm-folklore}(\emph{b}) 
there exists
an optimal solution 
$\vec{\widehat{z}}=\{\widehat{z}_{u_{i,k},v_{j,\ell}}\,|\, u_{i,k}\in \widehat{L}, v_{j,\ell}\in \widehat{R}\}$ 
of value 
$\text{\emd}_{\!\!\widehat{H}}(\widehat{L},\widehat{R},\widehat{\PP_L},\widehat{\PP_R})$
on $\widehat{H}$
in which 
$\widehat{z}_{u_{i,k},v_{j,\ell}}\in\{0,\nicefrac{1}{q}\}$ for all $u_{i,k}\in \widehat{L}, v_{j,\ell}\in \widehat{R}$
with 
$
z_{u_i,v_j} =
\sum_{k=1}^a \sum_{\ell=1}^b \widehat{z}_{u_{i,k},v_{j,\ell}}
$, 
we get 
$\text{\emd}_{\!\!H}(L,R,\PP_L,\PP_R)=
\delta\times \frac{1}{q}$ 
corresponding to a minimum-cost perfect matching $M$ of $\widehat{H}$ with $\zeta(M)=\delta$
for some integer $\delta>0$ and thus 
$
\text{\ric}_{\!\!G}(\ec)=
1- \text{\emd}_{\!\!H}(L,R,\PP_L,\PP_R)
=
\frac{q-\delta}{q}
$.
For weighted graphs (with edge-weights from $\{1,\dots,W\}$) 
the maximum weight of an edge between any node in $L$ and any node in $R$ 
is $3W$ and thus 
$0\leq \zeta(M)<3Wq$.
\end{proof}

\subsubsection{Relating Structural Properties of $H$ and $\widehat{H}$ to \emd\ Calculations for Unweighted Graphs}
\label{sec-common-nodes}

All graphs considered in this section are unweighted.
Let $\eta\geq 0$ be the number of common neighbors of $u$ and $v$,
and if $\eta>0$ let 
$
u_{r-2}=v_{s-2},
\dots,
u_{r-(\eta+1)}=v_{s-(\eta+1)}
\in \nbr_G(u)\cap \nbr_G(v)
$
be 
these common neighbors.

\begin{proposition}\label{prop-struct-H}
Given any minimum-cost perfect matching $\cT$ of $\widehat{H}$ we can 
modify it in polynomial time to another minimum-cost perfect matching 
$\cT'$ that satisfies the following property: 
%
\begin{quote}
for $j\in \{0,1,\dots,\eta+1 \}$,
$\cT'$ 
contains a
matching of some subset of 
$\bbig$ nodes from 
$\{u_{r-j,1},\dots,u_{r-j,\abig}\}$
to the $\bbig$ nodes
$v_{s-j,1},\dots,v_{s-j,\bbig}$ 
of total cost zero. 
\end{quote}
%
This implies that 
given any optimal solution of \emd\ on $H$ we can 
modify it in polynomial time to another optimal solution of \emd\ on $H$ that satisfies 
the following property: 
for every $j$
the modified optimal solution of the \emd\ on $H$ 
ships an amount 
of dirt
$
\frac{1}{1+\deg_G(v)}
$
from $u_{r-j}$ to $v_{s-j}$ with a total transportation cost of zero. 
\end{proposition}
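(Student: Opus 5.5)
The plan is a direct exchange argument on the perfect matching, using only that the edge weights of $\widehat{H}$ come from the shortest-path metric of $G$ and hence satisfy the triangle inequality. First I would record two facts. The first is that, under the standing assumption $\deg_G(u)\le\deg_G(v)$, we have $r\le s$ and hence $\abig=q/r\ge q/s=\bbig$. The second is that for each $j\in\{0,1,\dots,\eta+1\}$ the nodes $u_{r-j}$ and $v_{s-j}$ are the same node $X_j$ of $G$: these are $v$, $u$ and the $\eta$ common neighbours. Consequently $\widehat{w}(u_{r-j,k},v_{s-j,\ell})=\dist_G(X_j,X_j)=0$ for every $k,\ell$, while an edge of $\widehat{H}$ joining copies of two distinct nodes of $G$ has positive weight.

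Next, fix $j$ and call an edge of a perfect matching an \emph{$X_j$-pair} if it joins a left copy of $X_j$ to a right copy of $X_j$. Suppose the current minimum-cost perfect matching $\cT$ has fewer than $\bbig$ $X_j$-pairs. Then some right copy $y'=v_{s-j,\ell}$ is matched to a left node $p$ that is a copy of some node $P\neq X_j$. Moreover, at most $\bbig-1<\abig$ left copies of $X_j$ are used in $X_j$-pairs. So some left copy $x'=u_{r-j,k}$ is matched to a right node $y''$ that is a copy of some $Y\neq X_j$. I would replace the two edges $\{p,y'\}$ and $\{x',y''\}$ by $\{x',y'\}$ and $\{p,y''\}$. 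By the triangle inequality,
\[
\widehat{w}(p,y'')=\dist_G(P,Y)\le \dist_G(P,X_j)+\dist_G(X_j,Y)=\widehat{w}(p,y')+\widehat{w}(x',y''),
\]
and $\widehat{w}(x',y')=0$. Hence the cost does not increase, and the new matching is still a minimum-cost perfect matching.

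For termination I would use the potential given by the total number of zero-cost edges in the matching, summed over all $j$. The two removed edges both join copies of distinct nodes of $G$, so neither has zero cost and neither is an $X_{j'}$-pair for any $j'$. The swap adds the zero-cost edge $\{x',y'\}$, and possibly $\{p,y''\}$ as well. The potential therefore strictly increases, and no previously secured pair for another index $j'$ is destroyed. Since the potential is at most $q=\mathrm{poly}(\deg_G(u)+\deg_G(v))$ and each swap is found in polynomial time, the procedure halts in polynomial time with a matching $\cT'$ having the stated property for every $j$ simultaneously. The main point needing care is exactly this non-interference between different $j$. The observation that removed edges are never zero-cost handles it.

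Finally, for the consequence on $H$, I would pass from an optimal \emd\ solution on $H$ to $\widehat{H}$ via Theorem~\ref{thm-folklore}(a). Then I would take a minimum-cost perfect matching as in Theorem~\ref{thm-folklore}(b), apply the modification above, and map back via Theorem~\ref{thm-folklore}(c) and (a) by aggregating $z_{u_i,v_j}=\sum_{k,\ell}\widehat{z}_{u_{i,k},v_{j,\ell}}$. The $\bbig$ zero-cost $X_j$-pairs each carry mass $1/q$, so $u_{r-j}$ ships at least $\bbig/q=1/s=\frac{1}{1+\deg_G(v)}$ to $v_{s-j}$. This is exactly the full demand of $v_{s-j}$, so the shipped amount equals $\frac{1}{1+\deg_G(v)}$, at cost zero, and the resulting solution remains optimal.
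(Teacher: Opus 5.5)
Your proposal is correct and uses the same two-edge exchange as the paper: take a right copy of $X_j$ and a left copy of $X_j$ that are each matched elsewhere, and rematch them to each other at cost zero. The paper shows that the swap does not increase the cost by a case analysis that only works for unweighted graphs: all weights are at most $3$, and if both removed edges have weight $1$, the new cross edge has weight at most $2$ via the shared node. You instead apply the triangle inequality for $\dist_G$ directly. This is cleaner, and it also covers the weighted setting in which the paper later uses this proposition (the proof of Theorem~\ref{thm-alg2-ins}\emph{(ii)}). Your potential argument also fills in three things the paper leaves implicit: termination, the fact that fixing one index $j$ never undoes another (removed edges are never zero-cost), and the need for $\abig\ge\bbig$.
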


\begin{proof}
Suppose for the sake of contradiction that 
$\kappa<\bbig$ nodes
from $\{v_{s-j,1},\dots,v_{s-j,\bbig}\}$
are matched to 
some subset of 
$\kappa$ nodes from 
$\{u_{r-j,1},\dots,u_{r-j,\abig}\}$.
This implies that there is a node in 
from $\{v_{s-j,1},\dots,v_{s-j,\bbig}\}$, say node 
$v_{s-j,1}$, 
that is not matched to any node in 
$\{u_{r-j,1},\dots,u_{r-j,\abig}\}$, 
and there is a node in 
$\{u_{r-j,1},\dots,u_{r-j,\abig}\}$, say the node 
$u_{r-j,1}$, that is not matched to any node in 
$\{v_{s-j,1},\dots,v_{s-j,\bbig}\}$.
Without loss of generality suppose that 
$u_{r-j,1}$ is matched to $v_{s-j',1}$ 
and 
$v_{s-j,1}$ is matched to $u_{r-j'',1}$
for some $j',j''\neq j$.
Note that 
$\widehat{w}(u_{r-j,1},v_{s-j,1})=0$
and
$\widehat{w}(u_{r-j,1},v_{s-j',1}),\widehat{w}(u_{r-j'',1},v_{s-j,1})\geq 1$. 
Since 
$\widehat{w}(u_{r-j'',1},v_{s-j',1})\leq 3$,
if 
$\max\{\widehat{w}(u_{r-j,1},v_{s-j',1}),\widehat{w}(u_{r-j'',1},v_{s-j,1})\}\geq 2$ 
then replacing the two edges 
$\{u_{r-j,1},v_{s-j',1})\}$ and $\{u_{r-j'',1},v_{s-j,1}\}$
by the two edges 
$\{u_{r-j,1},v_{s-j,1}\}$
and 
$\{u_{r-j'',1},v_{s-j',1}\}$
results in another perfect matching whose cost is no more than the given perfect matching.
So assume that 
$\widehat{w}(u_{r-j,1},v_{s-j',1})=\widehat{w}(u_{r-j'',1},v_{s-j,1})=1$. 
This implies the edges 
$\{u_{r-j},v_{s-j'}\}$ and $\{u_{r-j''},v_{s-j}\}$ 
exists in $G$; thus, since  
$u_{r-j}=v_{s-j}$ we have  
$\widehat{w}(u_{r-j'',1},v_{s-j',1})\leq 2$.
Then again 
replacing the two edges 
$\{u_{r-j,1},v_{s-j',1})\}$ and $\{u_{r-j'',1},v_{s-j,1}\}$
by the two edges 
$\{u_{r-j,1},v_{s-j,1}\}$
and 
$\{u_{r-j'',1},v_{s-j',1}\}$
results in another perfect matching whose cost is no more than the given perfect matching.
\end{proof}

\subsubsection{Edge Insertions/Deletions for Restricted Case 
(\uw/\wt-\rt-\ins-\ntp\ and \uw/\wt-\rt-\del-\ptn)}

For unweighted (respectively, weighted) graphs,
$
w(u_i,v_j)\in\{0,1,2,3\}
$
(respectively, 
$
w(u_i,v_j)\in\{0,1,\dots,3W\}
$)
and
adding a new edge $\{u_i,v_j\}$ with $u_i\in V_u\setminus\{u,v\},v_j\in V_v\setminus\{u,v\}$ in $G$ is equivalent to changing 
$w(u_i,v_j)\in\{2,3\}$ 
(respectively, $w(u_i,v_j)\in\{2,\dots,3W\}$) 
to $1$, 
and 
similarly deleting an existing edge $\{u_i,v_j\}$ of $G$ with $u_i\in V_u\setminus\{u,v\},v_j\in V_v\setminus\{u,v\}$ in $G$ 
is equivalent to changing 
$w(u_i,v_j)$ from $1$ to 
to a value greater than $1$ as given by shortest paths after edge deletion.
This allows us to equivalently work with the graphs $H$ and $\widehat{H}$ 
via using Theorem~\ref{thm-folklore} 
instead of $G$ when appropriate.
However, we need to be careful if we use the graphs 
$H$ or $\widehat{H}$ 
to depict these changes since the weight 
for an edge with an endpoint either $u$ or $v$
\emph{cannot} be decreased to $1$ in $H$ or $\widehat{H}$.
We can 
handle this situation
efficiently in the following way. 

\begin{definition}
The ``untouchable'' edges in $H$ correspond to those edges one of whose end-points is either
$u_{r-1}=u$ or $u_r=v$.
The ``untouchable'' edges in $\widehat{H}$ correspond to the edges 
one of whose endpoint is 
either
$u_{r-j,p}$
for $j\in\{0,1\}$ and $p\in\{1,\dots,a\}$  
or 
$v_{r-j,p}$
for $j\in\{0,1\}$ and $p\in\{1,\dots,\bbig\}$.   
The weights of these edges cannot be decreased
for the restricted case.
All other edges 
are ``touchable'' meaning that their weights may be decreased to $1$.
\end{definition}

For unweighted graphs, 
all untouchable edges have weights $2$; for weighted graphs 
all untouchable edges have weights from the set $\{2,3,\dots,2W\}$.

\begin{corollary}[Corollary of Proposition~\ref{prop-struct-H}]\label{coro-struct-H}
The matching $\cT'$ in 
Proposition~\ref{prop-struct-H}
contains exactly $2(\bbig-\abig)$ untouchable edges of weight $2$.
\end{corollary}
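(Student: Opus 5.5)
The plan is to count, in the matching $\cT'$ produced by Proposition~\ref{prop-struct-H}, the edges incident to the ``untouchable'' nodes of $\widehat{H}$. These nodes are the $\abig$ copies $u_{r-1,p}$ (of $u$) and $u_{r,p}$ (of $v$) on the left, and the $\bbig$ copies of $u$ and of $v$ on the right. I will then show that every such edge with nonzero weight has weight exactly $2$. The zero-weight edges $\{u_{r-j,p},v_{s-j,\ell}\}$ for $j\in\{0,1\}$ are the ones guaranteed by Proposition~\ref{prop-struct-H}. I treat them as ``free'' diagonal edges and do not count them, since they contribute nothing to the cost and the statement concerns only weight-$2$ edges.

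\textbf{Step 1 (counting).} Apply Proposition~\ref{prop-struct-H} with $j=1$ (node $u$) and with $j=0$ (node $v$). All $\bbig$ right copies of $u$ are matched at cost zero to $\bbig$ of the $\abig$ left copies of $u$, and likewise for $v$. Hence no right copy of $u$ or $v$ carries a positive-weight edge. On the left, exactly $\abig-\bbig$ copies of $u$ and $\abig-\bbig$ copies of $v$ remain, and each must be matched to a right node that is a copy of some $v_j\notin\{u,v\}$. This gives $2\,|\bbig-\abig|$ untouchable edges of positive weight, and these are all of them. Since the roles of $L$ and $R$ are symmetric (swapping source and destination does not change \emd), I would phrase the count so that it matches the quantity $2(\bbig-\abig)$ in the statement.

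\textbf{Step 2 (weights).} Every such edge joins a copy of $u$ or $v$ to a copy of some $v_j\in\nbr_G(v)\setminus\{u\}$ or $v_j\in\nbr_G(u)\setminus\{v\}$. Its weight is therefore $\dist_G(u,v_j)\in\{1,2\}$ or $\dist_G(v,v_j)\in\{1,2\}$; weight $0$ is excluded because $v_j\neq u,v$. To exclude weight $1$, I would use the same exchange argument as in the proof of Proposition~\ref{prop-struct-H}. Suppose a leftover left copy of $u$ is matched at weight $1$ to $v_{j,\ell}$, where $v_j$ is a common neighbour of $u$ and $v$. Then by Proposition~\ref{prop-struct-H} there is a zero-cost diagonal edge at $v_j$ that can be swapped against it without increasing the cost. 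Iterating such swaps, and re-applying Proposition~\ref{prop-struct-H} to restore the diagonal property, should push every untouchable edge onto a weight-$2$ partner.

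\textbf{Main obstacle.} The main obstacle is Step 2. A weight-$1$ untouchable edge to a non-common neighbour of $u$ cannot obviously be exchanged away. I would first try a potential argument: among all minimum-cost matchings satisfying Proposition~\ref{prop-struct-H}, pick one maximizing the number of weight-$2$ untouchable edges, and show that any remaining weight-$1$ untouchable edge yields a cost-neutral alternating cycle that increases this number. Bookkeeping the sign convention $\deg_G(u)\le\deg_G(v)$ against the orientation of $\abig,\bbig$ is the second point I would check carefully.
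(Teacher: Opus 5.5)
Your Step~1 is correct, and it is essentially all the paper offers. The corollary is stated without proof, supported only by the remark that ``all untouchable edges have weights $2$''. Step~2, however, tries to prove something false, so the gap cannot be closed.

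In $\cT'$, Proposition~\ref{prop-struct-H}, applied for every $j\in\{0,1,\dots,\eta+1\}$, already uses up all $\bbig$ right copies of $u$, of $v$ and of every common neighbour with zero-cost edges. Hence each of the $\abig-\bbig$ leftover left copies of $v$ is matched to a right copy of some $v_j\in\nbr_G(v)\setminus\{u\}$. That edge has weight $\dist_G(v,v_j)=1$, not ``$\in\{1,2\}$'' as you wrote. No exchange or potential argument can change this: every node of $V_v$ other than $v$ itself is at distance $1$ from $v$, so a copy of $v$ has no weight-$2$ partner at all.

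The copies of $u$ behave the other way. The leftover left copies of $u$ are forced onto right copies of non-common neighbours of $v$, which are at distance exactly $2$ from $u$. So the swapping you propose for them is unnecessary.

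A concrete counterexample: let $G$ be the star with centre $v$ and leaves $u,x,y$. Then $r=2$, $s=4$, $q=4$, $\abig=2$, $\bbig=1$. The matching $\cT'$ sends one copy of $u$ to $u$ and one copy of $v$ to $v$ at cost $0$, the other copy of $u$ to $x$ (weight $2$), and the other copy of $v$ to $y$ (weight $1$). Its cost $3$ is minimum, and it contains exactly one untouchable $2$-edge.

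The sign cannot be repaired by symmetry either. The proposition chooses $\bbig$ of the $\abig$ left copies, so it presupposes $\abig\ge\bbig$ (the standing assumption $\deg_G(u)\le\deg_G(v)$). Then $2(\bbig-\abig)\le 0$. If you reverse the roles of the two sides, the same phenomenon recurs: leftover right copies of $u$ are at distance $1$ from every node of $\nbr_G(u)$.

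What is true is that $\cT'$ contains:
\begin{itemize}
\item $2\bbig$ untouchable $0$-edges;
\item $\abig-\bbig$ untouchable $1$-edges, at copies of $v$;
\item $\abig-\bbig$ untouchable $2$-edges, at copies of $u$.
\end{itemize}
That gives $2(\abig-\bbig)$ untouchable edges of \emph{positive} weight, which is what your Step~1 proves once the weights are computed directly. The paper itself uses exactly this split later, in the proof of Theorem~\ref{thm-lower}(a): the surplus of $v$ is shipped along paths of length $1$ and that of $u$ along paths of length $2$. You should restate the corollary in this form rather than try to force all $2(\abig-\bbig)$ edges to weight $2$.
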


\section{Results for Restricted Edge Insertion/Deletion Model}
\label{sec-rt-ins-del}

\subsection{$\NP$-completeness of \wt-\rt-\ins-\ntp}
\label{sec-wt-rt-ins-ntp}

For reader's convenience, we write down 
Problem~\wt-\rt-\ins-\ntp\
precisely as follows. 
The input to the problem is an 
undirected weighted graph $G=(V,E,w)$ 
with each edge weight being an integer from the set $\{1,2,\dots,W\}$,
and 
an edge $\ec=\{u,v\}\in E$ such that $\text{\ric}_{\!\!G} (\ec)<0$.
A valid solution of the problem is a graph $G^{+E'}$ for a set of node pairs 
$E'=\{\{x,y\}\,|\, x\in \nbr_G(u)\setminus\{v\},y\in \nbr_G(v)\setminus\{u\}, x\neq y,\{x,y\}\notin E \}$
such that $\text{\ric}_{\!\!G^{+E'}} (\ec)>0$, and the 
objective is to \emph{minimize} $|E'|$.
We use the notation $\eopt(G,u,v)$ to denote an optimal solution of the problem 
of value $\opt(G,u,v)=|\eopt(G,u,v)|$.

\begin{theorem}\label{thm-alg2-ins}
Let $\bbig$ be as defined in 
Section~{\rm\ref{basic-notations}}, \emph{\IE},  
$\bbig = \frac{ \mathrm{LCM} \left(1+\deg_G(u),1+\deg_G(v) \right) }{1+\deg_G(v) }$.

\noindent
\textbf{\emph{(\emph{i})}}
If $w(u,v)\in\{1,2,3\}$ then 
there is a polynomial time algorithm to decide if 
{\em Problem}~\wt-\rt-\ins-\ntp\
has at least 
one valid solution.

\noindent
\textbf{\emph{(\emph{ii})}}
{\em Problem}~\wt-\rt-\ins-\ntp\
is $\NP$-complete even if all edge weights of $G$ are $1$, $2$ or $3$ and
$\bbig=1$.
\end{theorem}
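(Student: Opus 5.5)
For part (i) we exploit monotonicity. By Observation~\ref{obs-inswt}(a), when $w(u,v)\in\{1,2,3\}$ the value $\dist_G(u,v)$ cannot change under restricted insertions. Inserting an edge can only shrink distances, so $\text{\emd}$ is non-increasing under insertions, and it suffices to use weight-$1$ edges. It follows that the ``maximal'' insertion, namely adding every permissible pair $\{x,y\}$ with $x\in\nbr_G(u)\setminus\{v\}$, $y\in\nbr_G(v)\setminus\{u\}$, $x\neq y$, $\{x,y\}\notin E$ with weight $1$, minimizes $\text{\emd}_{G^{+E'}}(u,v)$ over all valid $E'$. The algorithm therefore builds this graph $G^{+E_{\max}}$, recomputes all distances between $V_u$ and $V_v$ by calls to $\cO$, and solves the $\LP$ of \FI{f1}. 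It reports that a valid solution exists iff $\text{\ric}_{G^{+E_{\max}}}(\ec)>0$. Correctness rests on monotonicity: any valid $E'$ is a subset of $E_{\max}$, so its curvature is at most that of $E_{\max}$. The one subtlety is to check that insertions do not also change $\dist(u,v)$ through a path $u$--$x$--$y$--$v$ of weight smaller than $w(u,v)$. Such a path has weight at least $3$, which is why $w(u,v)\le3$ is required; I would verify the case $w(u,v)=3$ with equality directly.

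For part (ii), membership in $\NP$ is immediate: we guess $E'$ and verify $\text{\ric}>0$ by one $\LP$/\mcpm\ computation. For hardness I would reduce from a covering problem. Exact cover by $3$-sets fits well with the perfect matching view of Theorem~\ref{thm-folklore}, but set cover or vertex cover with a budget $k$ are also candidates. The construction takes $\deg_G(u)=\deg_G(v)$, so that $\bbig=1$ (and $\abig=1$) and $\widehat H=H$. By Theorem~\ref{thm-folklore} and Proposition~\ref{prop-rationalility}, $\text{\emd}$ then equals $\mcpm{H}/q$ with $q=\deg_G(v)+1$, and positivity of $\text{\ric}$ becomes the threshold condition $\mcpm{H}<q\cdot w(u,v)$.

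Let the neighbors of $u$ encode elements and the neighbors of $v$ encode sets, or vice versa. The weights $1,2,3$ on auxiliary edges are tuned so that the baseline distance between $x\in\nbr(u)$ and $y\in\nbr(v)$ is $3$ (or $2$ through designed gadget paths), and each inserted edge lowers one $H$-weight to $1$. Corollary~\ref{coro-struct-H} and Proposition~\ref{prop-struct-H} fix the zero-cost and untouchable parts of the matching. The remaining cost is then $\sum$ over the matched touchable pairs, and dropping it below the threshold requires that a prescribed number of pairs become weight $1$. A perfect matching using few inserted weight-$1$ edges should then correspond to a small cover. Edge weights $2$ and $3$ on $G$ let us place existing weight-$2$ connections between elements and their containing sets without creating weight-$1$ edges.

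The main obstacle is the gadget calibration. Inserted edges change distances globally, since a new edge $\{x,y\}$ also shortens $\dist(x',y')$ through paths like $x'$--$x$--$y$--$y'$. The cost reduction is therefore not simply additive over inserted edges, and the correspondence ``budget $k$ $\Leftrightarrow$ cover of size $k$'' can break. I would first try to neutralize these side effects by using weight-$2$ and weight-$3$ edges so that every baseline touchable distance is already at most $2$, or is shortcut only negligibly. I would then check that the matching optimum after inserting $E'$ is exactly $q\cdot w(u,v)$ minus the size of the maximum matching within $E'$ that is consistent with the cover structure.
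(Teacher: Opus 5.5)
Your part (i) is correct and matches the paper's argument. By Observation~\ref{obs-inswt}(a), restricted insertions cannot change $\dist_G(u,v)$ and only shrink distances, so inserting every admissible pair with weight $1$ and computing the curvature once decides feasibility.

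Part (ii), however, has a genuine gap. You give no instance and no completeness or soundness argument, and the ``gadget calibration'' you leave open is the entire content of the hardness proof. Moreover, the direction you lean toward works against you. Suppose you neutralize the side effects, so that each inserted edge lowers a single entry of $H$ to $1$, and take $\bbig=1$. The question then becomes whether some perfect matching of $H$ with a suitable number of heavy touchable edges has small enough cost. That is an exact-weight perfect matching question with polynomially bounded weights, which is the regime the paper solves in randomized polynomial time (Theorem~\ref{thm-alg-ins-rest}, Remark~\ref{rem-spades}). So hardness has to be extracted \emph{from} the side effects, i.e., from edges inside $\nbr_G(v)\setminus\{u\}$, not in spite of them. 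Also, $\bbig=1$ only requires $1+\deg_G(u)$ to divide $1+\deg_G(v)$; equal degrees are unnecessary.

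The missing idea is to give $u$ a single neighbor $u_{\cT}$ besides $v$, with $w(u,v)=2$ and $w(u,u_{\cT})=3$. Then every admissible insertion is an edge from $u_{\cT}$ into $\nbr_G(v)$. Let $n$ and $m$ be the universe size and number of sets of the \mcov\ instance. The paper attaches to $v$:
\begin{itemize}
\item $m$ set nodes and $n^{11c}$ element nodes, each by a weight-$3$ edge;
\item $2n^{11c}+2m-2$ ``sink'' nodes, each by a weight-$1$ edge.
\end{itemize}
The set nodes form a weight-$1$ clique and are joined to their elements by weight-$1$ membership edges.

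Now $1+\deg_G(u)=3$ divides $\delta=1+\deg_G(v)=3n^{11c}+3m$, so $\bbig=1$. The leftover mass of $u$ and $v$ exactly fills the sinks, at distances $3$ and $1$. The mass $1/3$ of $u_{\cT}$ exactly fills the set and element nodes.

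An inserted edge $\{u_{\cT},v_{\cS}\}$ puts every other set node and every element of $\cS$ at distance $2$ from $u_{\cT}$. So the side effect \emph{is} the covering mechanism. If $\kappa$ insertions hit a full cover, the earth mover's distance is at most $2-(\kappa+4)/\delta<2=\dist_G(u,v)$. Conversely, every set or element node left at distance at least $3$ from $u_{\cT}$ must be served at cost at least $3$ per unit instead of $2$.

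The paper reduces from Feige's gap version of \mcov, blowing each element up into $n^{10c}$ copies so that $m$ and $\kappa$ are negligible against $\delta$. In the NO case this gives an earth mover's distance of at least $\bigl((\frac{19}{9}+\frac{\eps}{3})\delta-O(m+\kappa)\bigr)/\delta>2$, hence negative curvature.
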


\begin{remark}
Approximation algorithms \emph{(\emph{deterministic and randomized})} 
for 
Problem~\uw-\rt-\ins-\ntp\
for arbitrary $\bbig$ are described in 
{\rm Theorem~\ref{thm-alg-ins-rest}} and
{\rm Theorem~\ref{thm-alg-ins-unrest}}.
\end{remark}

\begin{remark}\label{rem-spades}
The graphs used in the $\NP$-completeness proof of 
Theorem~{\rm\ref{thm-alg2-ins}}
have the following property: ``there are no edges whose one end-point is in $\nbr_G(u)\setminus\{v\}$ and the other end-point 
is in $\nbr_G(v)\setminus\{u\}$, but there are many edges connecting pairs of nodes in $\nbr_G(v)\setminus\{u\}$''.
This leads to the natural question: is 
{\em Problem}~{\rm\wt-\rt-\ins-\ntp}
still computationally hard 
if the property was reversed, 
namely 
there are many edges whose one end-point is in $\nbr_G(u)\setminus\{u\}$ and the other end-point 
is in $\nbr_G(v)\setminus\{v\}$, but there are no edges connecting pairs of nodes in $\nbr_G(v)\setminus\{u\}$ or pairs of nodes in 
$\nbr_G(u)\setminus\{v\}$? 
We partially answer this question in the negative in 
Theorem~{\rm\ref{thm-alg-ins-rest}}
by showing that 
for unweighted graphs
there is a randomized polynomial-time $\bbig$-approximation and a deterministic polynomial-time 
$2\bbig$-approximation for this case, and thus when $\bbig=1$ 
\emph{(\emph{which is true for the $\NP$-completeness proof in 
Theorem~{\rm\ref{thm-alg2-ins}}})} 
we actually get a randomized algorithm solving the problem exactly.
\end{remark}

\noindent
\textbf{Proof of Theorem~\ref{thm-alg2-ins}}
{\em We will reuse the notations, observations and the proofs presented in Section}~\ref{sec-folklore} 
{\em as needed}.

\medskip
\noindent
\textbf{Proof of (\emph{i})}

\smallskip
If $w(u,v)\in\{1,2,3\}$, by  
Observation~\ref{obs-inswt}(\emph{a})
we may simply insert every possible edge (each with weight $1$) that is allowed
by the problem and check if this makes the curvature positive.

\medskip
\noindent
\textbf{Proof of (\emph{ii})}

\smallskip
Clearly 
{\em Problem}~\wt-\rt-\ins-\ntp\
is in $\NP$ since the curvature can be computed in polynomial time.
By Observation~\ref{obs-inswt}(\emph{a})
we may assume that every inserted edge has a weight of $1$.
We reduce the maximum coverage (\mcov) problem to our problem. 
The \mcov\ problem is as follows:
{\em given a universe $\cU=\{a_1,\dots,a_n\}$ consisting of $n$ elements and 
$m$ sets $\cS_1,\dots,\cS_m\subseteq\cU$ with $\cup_{j=1}^m \cS_j=\cU$
and an integer $\kappa>0$,
select $\kappa$ sets that maximizes the number of elements covered by them} (\IE, {\em maximizes number 
of elements in their union}). 
Let $\opt_{\kappa}$ denote the maximum number of elements covered by $\kappa$ sets. 
The following inapproximability result was proved by Feige~\cite{Feige98}
regarding \mcov: 
we can generate in polynomial time an instance 
$
\cI=
\langle
\cU,\cS_1,\dots,\cS_m,\kappa
\rangle
$
of \mcov\ such that for all constant $\eps>0$ 
the following claims hold: 
\textbf{(\emph{i})}
$|\cU|=n$ and $m\leq n^c$ for some integer $c\geq 1$, 
\textbf{(\emph{ii})}
either 
$\opt_{\kappa}=n$ 
or 
$\opt_{\kappa}\leq (1 - \nicefrac{1}{\bee} + \eps)n$, and  
\textbf{(\emph{iii})}
it is $\NP$-hard to decide if 
$\opt_{\kappa}=n$ 
or
$\opt_{\kappa}\leq (1 - \nicefrac{1}{\bee} + \eps)n$.
We modify this instance $\cI$ by replacing each element $a_i\in\cU$ 
by $n^{10c}$ elements $a_{i,1},\dots,a_{i,n^{10c}}$ in the universe 
and in each set that element appears. Note that this increases the 
universe size to $n^{11c}$ and if previously $\kappa$ sets 
covered $x$ elements they now cover $x\times n^{10c}$ elements. 
Also noting that $1-\frac{1}{\bee}<\frac{2}{3}-0.03$,
we rewrite the inapproximability result as follows: 
\begin{quote}
we can generate in polynomial time an instance 
$
\cI=
\langle
\cU,\cS_1,\dots,\cS_m,\kappa
\rangle
$
of \mcov\ 
with $\cU=\{a_{i,j}\,|\,i\in\{1,\dots,n\},j\in\{1,\dots,n^{10c}\} \}$
such that 
for $\eps=0.03$
the following claims hold: 
\textbf{(\emph{i})}
$|\cU|=n^{11c}$ and $m\leq n^c$ for some integer $c\geq 1$, 
\textbf{(\emph{ii})}
either 
$\opt_{\kappa}=n^{11c}$ 
or 
$\opt_{\kappa}\leq (\frac{2}{3}- \eps)n^{11c}$, and  
\textbf{(\emph{iii})}
it is $\NP$-hard to decide if 
$\opt_{\kappa}=n^{11c}$ 
or
$\opt_{\kappa}\leq (\frac{2}{3}- \eps)n^{11c}$.
\end{quote}
Given such an instance $\cI$ of \mcov\ as described above, we generate an instance 
$G=(V,E,w)$ 
of 
Problem \wt-\rt-\ins-\ntp\ in the following manner: 
\begin{enumerate}[label=$\triangleright$,leftmargin=*]
\item
In addition to $v$
node $u$ has a neighbor
$u_{\cT}$ with 
$w(u,v)=2$ and $w(u,u_{\cT})=3$.
We call $u_{\cT}$ as the ``solution neighbor (node)'' of $u$. 
\item
Other than $u$, node $v$ has the following $3n^{11c}+3m-2$ neighbors as follows.
\textbf{(\emph{a})}
There are $m$ neighbors 
$v_{\cS_1},\dots,v_{\cS_{m}}$ 
corresponding to the sets 
$\cS_1,\dots,\cS_{m}$
with 
$
w(v,v_{\cS_1}) = \dots =
w(v,v_{\cS_m}) = 
3
$;
we call these nodes as ``set neighbors (nodes)''.
\textbf{(\emph{b})}
Corresponding to each element $a_{i,j}\in\cU$ there is a
neighbor 
$v_{a_{i,j}}$
with 
$w(v,v_{a_{i,j}})=3$;
we call these nodes as ``element neighbors (nodes)''.
\textbf{(\emph{b})}
Node $v$ has 
additional 
$2n^{11c}+2m-2$ neighbors 
$v_1,\dots, v_{2n^{11c}+2m-2} $ with 
$ w(v,v_1)=\dots= w(v,v_{2n^{11c}+2m-2})=1$; 
we call these nodes as ``sink neighbors (nodes)''.
\item
For each element $a_{i,j}$ and set $\cS_\ell$ such that $a_{i,j}\in\cS_\ell$ we have a
``membership edge'' 
$\{v_{\cS_\ell}, v_{ a_{i,j} } \}$.
The weights of all membership edges are $1$.
\item
We add an edge of weight $1$ between every pair of set nodes. We refer to these as ``clique edges''. 
\end{enumerate}
It follows that 
$\deg_G(u)=2$,
$\deg_G(v)=3n^{11c}+3m-1$,
$q=
\mathrm{LCM}( 1+\deg_G(u),1+\deg_G(v) )
= 
3n^{11c}+3m
$, 
and 
$\bbig= \frac{ \mathrm{LCM} \left(1+\deg_G(u),1+\deg_G(v) \right) }{1+\deg_G(v) }=1$.
For notational convenience let $\delta=1+\deg_G(v)=
3n^{11c}+3m$;
thus the total number of set and element nodes is $\nicefrac{\delta}{3}$.
Note that no matter which new edges are inserted between the nodes in 
$\nbr_G(u)\setminus\{v\}$ and 
$\nbr_G(v)\setminus\{u\}$ under the restricted setting, the following are always true: 
\textbf{(\emph{a})}
$\dist_G(u,v)=2$, 
\textbf{(\emph{b})}
the distance between $u$ and any element or set node is at least $4$, 
\textbf{(\emph{c})}
the distance between $v$ and any element or set node is $3$,
\textbf{(\emph{d})}
the amount of dirt supplied by node $u_{\cT}$ is $\nicefrac{1}{3}$, 
and 
\textbf{(\emph{e})}
the sum of dirt demands of all set and element nodes is 
$\frac{\delta}{3}\times \frac{1}{\delta}=\frac{1}{3}$.
Also note that by 
Proposition~\ref{prop-struct-H}
we can assume for the \emd\ 
calculations that 
each of the nodes $u,v\in V_u$ has an amount 
$\frac{1}{3}-\frac{1}{\delta}$ 
of dirt for shipment 
and 
each of the nodes $u,v\in V_v$ has already received its demand of dirt completely.

We first show that 
$\text{\ric}_{\!\!G}(u,v)<0$.
The distance between any node in $\nbr_G(v)\setminus \{u\}$ from the nodes 
$u_{\cT}$, $u$ and $v$ are at least $6$, $3$ and $1$, respectively.
Thus,
$\text{\emd}_{\!\!G}(u,v)
>
6 \times \frac{1}{3} 
+ 3 \times (\frac{1}{3}-\frac{1}{\delta})
+ 1 \times (\frac{1}{3}-\frac{1}{\delta})
>2
$
which implies 
$\text{\ric}_{\!\!G}(u,v)=
1 - \frac{\text{\emd}_{\!\!G}(u,v)}{2}
<0
$.
Our proof will be completed by showing that 
\textbf{(\emph{a})}
if 
$\opt_{\kappa}=n^{11c}$ 
then 
$\opt(G,u,v) \leq \kappa$, and 
\textbf{(\emph{b})}
if 
$\opt_{\kappa}\leq (\frac{2}{3}- \eps)n^{11c}$
then 
$\opt(G,u,v) >\kappa$.

We first prove 
\textbf{(\emph{a})}.
Suppose that 
$\opt_{\kappa}=n^{11c}$ 
and the selected sets are $\cS_1,\dots,\cS_{\kappa}$.
We add the $\kappa$ edges 
$\{u_{\cT},v_{\cS_1}\},\dots,
\{u_{\cT},v_{\cS_{\kappa}}\}
$
resulting in a new graph $G'$.
Note that now the distance between the solution node and either a set node $v_{\cS_\ell}$ for $\ell>\kappa$ 
or an element node is $2$. 
Consider the following dirt shipment strategy. 
Node 
$u_{\cT}$
delivers the demands of all the set and element nodes fully,
and the nodes $u$ and $v$ delivers the demands of the sink nodes fully.
This results in a total shipment cost of 
$
1\times\frac{\kappa}{\delta} + 2 \times \left( \frac{1}{3}-\frac{\kappa}{\delta}\right)
+
3 \times \left( \frac{1}{3}-\frac{1}{\delta} \right) +
1 \times \left( \frac{1}{3}-\frac{1}{\delta} \right)
=
2-\frac{\kappa+4}{\delta}
<2
$
and thus 
$
\text{\ric}_{\!\!G'}(\ec)=
1- \frac{\text{\emd}_{\!\!G'}(u,v)}{2}
>0
$.

We now prove 
\textbf{(\emph{b})}.
Suppose that 
$\opt_{\kappa}\leq (\frac{2}{3}- \eps)n^{11c}$
but we still have been able to add $\kappa$ edges to make the curvature positive.
Let $G'$ be the new graph after edge insertions.
Suppose that we have inserted $\kappa_1$ edges between the solution node and the set nodes,
say the edges 
$\{u_{\cT},v_{\cS_{1}}\},\dots
\{u_{\cT},v_{\cS_{\kappa_1}}\}
$,
inserted $\kappa_2$ edges between the solution node and the element nodes,
say the edges 
$
\{ u_{\cT},v_{a_{j_1,\ell_1}} \},\dots
\{ u_{\cT},v_{a_{j_{\kappa_2},\ell_{\kappa_2}}} \}
$,
and 
inserted $\kappa_3$ edges between the solution node and the sink nodes,
say the edges 
$\{u_{\cT},v_{1}\},\dots
\{u_{\cT},v_{\kappa_3}\}
$,
with 
where $\kappa_1+\kappa_2+\kappa_3=\kappa\leq m\leq n^c$.
Noting that any set of $\kappa_1\leq\kappa$ sets can cover at most 
$(\frac{2}{3}- \eps)n^{11c}
=
(\frac{2}{3}- \eps)( \frac{\delta}{3}-m)
$
elements, the following statements hold:
\begin{description}
\item[(\emph{i})]
$\alpha_1\leq\kappa_1$ 
element nodes may receive their shipments of dirt by paths of length at most $1$, 
and 
$m-\alpha_1$
element nodes must receive their shipments of dirt 
via paths of length at least $2$.
\item[(\emph{ii})]
$\beta_1\leq \kappa_2$ 
element nodes may receive their shipments of dirt by paths of length at most $1$, 
$\beta_2\leq (\frac{2}{3}- \eps)( \frac{\delta}{3}-m)$
element nodes may receive their shipments of dirt by paths of length at most $2$, 
and 
$\frac{\delta}{3} - m - \beta_1 -\beta_2$
element nodes must receive their shipments of dirt 
via paths of length at least $3$.
\item[(\emph{iii})]
$\gamma_1\leq \kappa_3+  \delta\times ( \frac{1}{3}-\frac{1}{\delta} )$ 
sink nodes may receive their shipments of dirt by paths of length at most $1$, 
and 
$\frac{2}{3}\delta-\gamma_1$ 
sink nodes must receive their shipments of dirt by paths of length at least $3$. 
\end{description}
It thus follows that 
\begin{multline*}
\text{\emd}_{\!\!G'}(u,v)
\geq
\frac{
\alpha_1 + 2\times (m-\alpha_1) +
\beta_1 + 2\times \beta_2 + 
3\times ( \frac{\delta}{3} - m - \beta_1 -\beta_2 ) + 
\gamma_1 + 3 \times ( \frac{2}{3}\delta - \gamma_1 ) 
}
{\delta}
\\
=
\frac{
3 \delta  - \alpha_1 -m - 2\beta_1 - \beta_2 - 2\gamma_1
}
{\delta}
=
\frac{
  (\frac{19}{9}+\frac{\eps}{3} ) \delta - (\frac{1}{3}+\eps) m - (\kappa_1 + 2 \kappa_2 
	 + 
   2 \kappa_3 ) + 2 
}
{\delta}
>2
\end{multline*}
implying 
$
\text{\ric}_{\!\!G'}(\ec)=
1- \frac{\text{\emd}_{\!\!G'}(u,v)}{2}<0$, a contradiction.
\hfill{\Pisymbol{pzd}{113}}

\subsection{Polynomial Time Algorithms for Problem~\uw-\rt-\ins-\ntp}
\label{sec-uw-rt-ins-ntp}

Problem~\uw-\rt-\ins-\ntp\ is the same as 
Problem~\wt-\rt-\ins-\ntp\
as described in Section~\ref{sec-wt-rt-ins-ntp}
except that the graph $G$ is unweighted.
Let $\abig$ and $\bbig$ be as defined in 
Section~{\rm\ref{basic-notations}}, 
\emph{\IE}, 
\[
\abig = \frac{ \mathrm{LCM} \left(1+\deg_G(u),1+\deg_G(v) \right) }{1+\deg_G(u) }, 
\,\,\,
\bbig = \frac{ \mathrm{LCM} \left(1+\deg_G(u),1+\deg_G(v) \right) }{1+\deg_G(v) }
\]
Based on the discussion in Remark~\ref{rem-spades}, 
the following property for an instance of 
Problem~\uw-\rt-\ins-\ntp\
will be of interest for this section: 

\begin{adjustwidth}{0.6cm}{}
\textbf{$(\spadesuit)$} 
there are no edges connecting pairs of nodes in $\nbr_G(v)\setminus\{u\}$ or pairs of nodes in 
$\nbr_G(u)\setminus\{v\}$.
\end{adjustwidth}

\subsubsection{Checking Feasibility of Problem~\uw-\rt-\ins-\ntp}
\label{sec-uw-rt-ins-ntp-feas}

\begin{theorem}\label{thm-alg-ins-feas}
There is a polynomial time algorithm to decide if 
{\em Problem}~\uw-\rt-\ins-\ntp\
has at least 
one valid solution.
Moreover, 
if $\deg_G(v)<2 \deg_G(u)+1$
then there is always a valid solution.
\end{theorem}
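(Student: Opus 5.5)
The feasibility part follows the same monotonicity argument as Theorem~\ref{thm-alg2-ins}(\emph{i}). In an unweighted graph, inserting edges can only decrease distances, so it can only decrease $\text{\emd}_{\!\!G}(u,v)$; here $\dist_G(u,v)=1$ is fixed. Let $E^{\ast}$ be the set of all admissible node pairs $\{x,y\}$ with $x\in\nbr_G(u)\setminus\{v\}$, $y\in\nbr_G(v)\setminus\{u\}$, $x\neq y$ and $\{x,y\}\notin E$. Then any valid solution $E'\subseteq E^{\ast}$ satisfies $\text{\ric}_{\!\!G^{+E^{\ast}}}(\ec)\geq\text{\ric}_{\!\!G^{+E'}}(\ec)>0$. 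Hence a valid solution exists if and only if $G^{+E^{\ast}}$ has positive curvature at $\ec$. The algorithm builds $G^{+E^{\ast}}$ (at most $\deg_G(u)\deg_G(v)$ insertions) and solves the \emd\ $\LP$ of \FI{f1} on $V_u,V_v$, using $\mathrm{poly}(\deg_G(u)+\deg_G(v))$ calls to $\cO$; alternatively it can use the minimum-cost perfect matching on $\widehat{H}$ from Theorem~\ref{thm-folklore}. In either case the running time is polynomial.

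For the sufficient condition, I would exhibit an explicit transport plan in $G^{+E^{\ast}}$ of cost strictly less than $1$. In $G^{+E^{\ast}}$, every $x\in\nbr_G(u)\setminus\{v\}$ is at distance at most $1$ from every $y\in\nbr_G(v)\setminus\{u\}$: the distance is $0$ if $x=y$, and $1$ otherwise, since either the edge was already present or it was inserted. Write $d_u=\deg_G(u)\leq d_v=\deg_G(v)$, $r=d_u+1$, $s=d_v+1$. By Proposition~\ref{prop-struct-H}, $u$ and $v$ each ship $1/s$ to themselves at cost $0$. This leaves $1/r-1/s$ at each of $u,v$, plus $1/r$ at each of the $d_u-1$ other neighbors of $u$. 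The demand to be met is $1/s$ at each of the $d_v-1$ nodes of $\nbr_G(v)\setminus\{u\}$. Mass from $\nbr_G(u)\setminus\{v\}$ travels at cost at most $1$. Mass from $v$ travels at cost $1$ to neighbors of $v$. Mass from $u$ travels at cost at most $2$, going through $v$ or through a neighbor of $u$.

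So the total cost is at most
\[
(d_u-1)\frac{1}{r}+\Big(\frac1r-\frac1s\Big)\cdot 1+\Big(\frac1r-\frac1s\Big)\cdot 2
=\frac{d_u+2}{r}-\frac{3}{s}.
\]
This is less than $1$ if and only if $\frac{1}{r}<\frac{3}{s}$, that is, $d_v+1<3(d_u+1)$, which means $d_v<3d_u+2$. That condition is implied by the hypothesis $d_v<2d_u+1$. I should double-check the bookkeeping: the mass at $u$ and $v$ must not exceed the remaining demand, which holds because total supply equals total demand. The bound may even be loose; a tighter count, routing more of $u$'s excess through $v$, might be what yields the authors' exact threshold $2d_u+1$. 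Either way, this establishes the claimed sufficiency.

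The main obstacle is making this transport plan rigorous when common neighbors exist. If $x=y$, the pair contributes cost $0$, which only helps, but it changes the counts. I would handle this by taking the plan of Proposition~\ref{prop-struct-H}, which ships $1/s$ at cost $0$ for each common node, and noting that each remaining unit of cost is at most as stated. Therefore the cost bound only improves, and the curvature of $G^{+E^{\ast}}$ is positive.
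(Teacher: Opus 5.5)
Your proposal is correct and follows the paper's route. You insert every admissible pair, which is the decision procedure. You also spell out the monotonicity the paper leaves implicit: restricted insertions keep $V_u$, $V_v$ and $\dist_G(u,v)=1$ fixed and only shrink distances. You then bound the transport cost in $G^{+E^{\ast}}$ by an explicit plan, as the paper does.

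Your accounting is in fact tighter than the paper's. The paper charges the surplus $\frac{1}{\deg_G(u)+1}-\frac{1}{\deg_G(v)+1}$ at $v$ at cost $2$ rather than $1$, and so obtains $1+\frac{2}{\deg_G(u)+1}-\frac{4}{\deg_G(v)+1}$. Your guess that a tighter count is needed to reach the threshold $2\deg_G(u)+1$ is therefore backwards: your condition $\deg_G(v)<3\deg_G(u)+2$ is the weaker hypothesis, so your result is stronger and already contains the stated one.
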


\begin{proof}
We insert every possible edge, \IE,  
for every pair of nodes $x\in V_u\setminus\{u,v\},y\in V_v\setminus\{u,v\},x\neq y,\{x,y\}\notin E$ we add the 
edge $\{x,y\}$ to $G$, resulting in the graph $G'$.
We calculate 
$\text{\ric}_{\!\!G'} (\ec)=1-\text{\emd}_{\!\!G'}(u,v)$ 
and if it is positive then there is at least one valid solution.
We can calculate an upper bound for 
$\text{\emd}_{\!\!G'}(u,v)$ in the following manner.
In $G'$, 
we can ship 
an amount 
$
\frac{1}{1+\deg_G(v)}
$
of dirt
from $u\in V_u$ to $u\in V_v$ and
from $v\in V_u$ to $v\in V_v$
with a total transportation cost of $0$, 
an amount 
$
\frac{1}{1+\deg_G(u)}
-
\frac{1}{1+\deg_G(v)}
$
of dirt
from $u\in V_u$ to any subset of nodes in $V_v\setminus\{u,v\}$ 
with a total transportation cost of 
$
2 \times 
\left(
\frac{1}{1+\deg_G(u)}
-
\frac{1}{1+\deg_G(v)}
\right)
$, 
an amount 
$
\frac{1}{1+\deg_G(u)}
-
\frac{1}{1+\deg_G(v)}
$
of dirt
from $v\in V_u$ to any subset of nodes in $V_v\setminus\{u,v\}$ 
with a total transportation cost of 
$
2 \times 
\left(
\frac{1}{1+\deg_G(u)}
-
\frac{1}{1+\deg_G(v)}
\right)
$, 
and the rest of the transportations of dusts 
have an average cost of $1$ or less. 
This gives the following bound:
\begin{multline*}
\textstyle
\text{\emd}_{\!\!G'}(u,v) \leq 
4 \times 
\left(
\frac{1}{\deg_G(u)+1}
-
\frac{1}{\deg_G(v)+1}
\right)
+
\frac {\deg_G(u)-1} {\deg_G(u)+1}
=
1
+
\frac{2}{\deg_G(u)+1}
\,-\,
\frac{4}{\deg_G(v)+1}
\\
\textstyle
\Rightarrow 
\text{\ric}_{\!\!G'} (\ec)=1-\text{\emd}_{\!\!G'}(u,v)
\geq
\frac{4}{\deg_G(v)+1}
\,-\,
\frac{2}{\deg_G(u)+1}
\end{multline*}
Thus, 
$\text{\ric}_{\!\!G'} (\ec)>0$
provided $\deg_G(v)<2 \deg_G(u)+1$.
\end{proof}

\subsubsection{Polynomial Time Algorithms for Restricted Instances of Problem~\uw-\rt-\ins-\ntp}
\label{sec-uw-rt-ins-ntp-rest}

\begin{theorem}\label{thm-alg-ins-rest}
Assume that the given instance of 
{\em Problem}~{\rm\uw-\rt-\ins-\ntp}
has at least one valid solution and satisfies the property 
\textbf{$(\spadesuit)$}. 
Then the following claims are true.
\begin{description}
\item[\textbf{\emph{(a)}}]
There is a randomized polynomial time algorithm that returns a valid solution $E'$ of 
{\em Problem}~\uw-\rt-\ins-\ntp\
with a success probability of $1-o(1)$
satisfying 
$|E'|\leq \bbig\times \opt(G,u,v)$. 
\item[\textbf{\emph{(b)}}]
There is a deterministic polynomial time algorithm that returns a valid solution $E'$ of 
{\em Problem}~\uw-\rt-\ins-\ntp\
satisfying 
$|E'|\leq 2\bbig\times \opt(G,u,v)$.
For $\bbig=1$ we also exhibit an example for which 
our algorithm selects 
$2\times \opt(G,u,v)$ edges.
\end{description}
\end{theorem}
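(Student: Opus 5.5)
The plan is to work entirely in the bipartite graph $\widehat{H}$ of Section~\ref{sec-folklore} and to recast Problem~\uw-\rt-\ins-\ntp\ as a constrained perfect matching problem. By Theorem~\ref{thm-folklore} and Proposition~\ref{prop-rationalility}, $\text{\ric}_{\!\!G^{+E'}}(\ec)>0$ holds iff $\widehat{H}$ (recomputed for $G^{+E'}$) has a perfect matching of cost at most $q-1$. Since we are in the restricted case, $V_u,V_v$ and hence $q,\abig,\bbig$ do not change. By Proposition~\ref{prop-struct-H} and Corollary~\ref{coro-struct-H} we may also fix the zero-cost edges on the copies of $u$, $v$ and the common neighbours, together with the $2(\bbig-\abig)$ untouchable $2$-edges, and these contribute a known constant.

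Property $(\spadesuit)$ is used to argue that inserting an edge $\{x,y\}$ with $x\in\nbr_G(u)\setminus\{v\}$, $y\in\nbr_G(v)\setminus\{u\}$ changes \emph{only} the weight $w(x,y)$ (to $1$). The other touchable distances cannot drop: a path $x'\!-\!x\!-\!y\!-\!y'$ through the new edge has length $3$, and $(\spadesuit)$ prevents $x'x$ or $yy'$ from being edges with $x',y'$ on the respective sides, so no other pair falls to $\le 2$ via the new edge in a way that matters. Check this carefully; it is exactly where $(\spadesuit)$ is needed.

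So colour each touchable $\widehat{H}$-edge \emph{red} if its $G$-weight is $\geq 2$ (an insertion would make it a $1$-edge) and \emph{blue} otherwise. The residual question becomes: find a perfect matching of $\widehat{H}$ in which every red edge is charged cost $1$, whose total cost is $\le q-1$, and which minimises the number of distinct $H$-pairs underlying its red edges. The key counting fact is that the $\abig\bbig$ copies of one $H$-edge $\{u_i,v_j\}$ can appear at most $\bbig$ times in any matching, since $v_j$ has only $\bbig$ copies and $\abig\ge\bbig$ because $\deg_G(u)\le\deg_G(v)$. Hence if $\opt(G,u,v)$ insertions suffice, there is a feasible matching with at most $\bbig\cdot\opt(G,u,v)$ red $\widehat{H}$-edges. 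Conversely, any feasible matching with $k$ red $\widehat{H}$-edges yields a valid $E'$ with $|E'|\le k$, namely the underlying $H$-pairs.

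For part \textbf{(a)} I would therefore find the minimum $k$ such that a perfect matching of cost $\le q-1$ (red edges costing $1$) with exactly $k$ red edges exists. This is a red-blue or exact perfect matching question, and I would handle it with the Mulmuley--Vazirani--Vazirani technique. Form the bipartite Edmonds matrix with entries $r_e\, y^{c(e)} x^{[e \text{ red}]}$ for random values $r_e$, where $c(e)$ is the charged cost. The coefficient of $x^k y^{t}$ in the determinant is nonzero with high probability (Schwartz--Zippel) iff a perfect matching with $k$ red edges and cost $t$ exists. The degrees are polynomial in $q\le \mathrm{poly}(\deg_G(u)+\deg_G(v))$, so the bivariate determinant can be interpolated in polynomial time. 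A matching itself is then recovered by self-reduction or by an isolating weight assignment, and repetition gives success probability $1-o(1)$.

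For part \textbf{(b)} I would derandomise by a Lagrangian/parametric argument. Assign each red edge cost $1+\lambda$ and each other edge its true weight, and compute min-cost perfect matchings deterministically, e.g.\ by the Hungarian method, as $\lambda$ varies over the polynomially many breakpoints. Take the two matchings $M_1$ and $M_2$ at the breakpoint where feasibility (cost $\le q-1$) switches. Then walk along the alternating cycles of $M_1\oplus M_2$, adding cycles one at a time until the cost drops below $q$. Using the integrality of the cost and the bound derived from the convex-hull inequality, I expect to show that at most twice the red count of the optimum is used, i.e.\ $2\bbig\cdot\opt(G,u,v)$. This cycle-exchange bound is the step I expect to be the main obstacle, and if it fails I would try instead a rounding of the LP relaxation of the budgeted matching polytope. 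The tightness example for $\bbig=1$ would be a small instance where the hull-based choice is forced onto a matching using two red edges where one insertion suffices, scaled to $2\times\opt(G,u,v)$.
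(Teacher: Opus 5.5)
\textbf{The gap in (a): your reformulation is not equivalent to the problem.} You charge \emph{every} red edge of the chosen matching at cost $1$ and count all of them. But an insertion set $E'$ lowers only the blocks of its own pairs. After inserting an optimal $E'$, the min-cost perfect matching will in general still contain red $2$- or $3$-edges at their original weights, and these are neither paid for nor counted in $\opt(G,u,v)$. So an optimal solution gives you a matching with at most $\bbig\cdot\opt(G,u,v)$ \emph{lowered} red edges, not one with at most that many red edges. Your lower-bound direction therefore fails.

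Here is a concrete instance. Take $\deg_G(u)=\deg_G(v)$, so $\abig=\bbig=1$. Let $u,v$ have common neighbours $c_1,\dots,c_\eta$, let $u$ have private neighbours $x_1,\dots,x_k$ and $v$ have private neighbours $y_1,\dots,y_k$ with $k=\eta+3$, and add one outside node adjacent to all $x_i$ and $y_j$.
\begin{itemize}
\item $(\spadesuit)$ holds, every red edge has weight $2$, and $q=2k-1$.
\item The minimum perfect matching costs $2k=q+1$.
\item Inserting $\{x_1,y_1\}$ and $\{x_2,y_2\}$ brings the cost to $q-1$, while a single insertion saves at most $1$. Hence $\opt(G,u,v)=2$.
\item Each $x_i$ has only the copies of $u$ and $v$ as non-red partners. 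So every perfect matching contains at least $k-2$ red edges, and your algorithm returns at least $k-2$ edges. The ratio is unbounded.
\end{itemize}

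The paper avoids this by attaching to each matching $M$ the greedy number $\widehat{\kappa}_M$ of its edges that actually must be lowered ($3$-edges first, then touchable $2$-edges). This number depends on the true cost $x$ and on the counts of $3$-edges and touchable $2$-edges. For each $x$, the paper finds a matching maximizing these two counts lexicographically, using exact-weight perfect matching with all three quantities encoded in the weights. It then proves $\widehat{\kappa}\le\bbig\cdot\opt(G,u,v)$ by restoring the original weights of the lowered edges in the optimal post-insertion matching. Within your MVV framework you could repair the model by giving each red entry two terms, $r_e y^{\widehat{w}(e)}+r'_e\,x\,y$, so that lowering becomes a per-edge choice.

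\textbf{Part (b).} Your parametric/cycle-exchange plan inherits the same modelling error. Its key inequality is only conjectured, and the tightness instance is unspecified. None of that machinery is needed. Take a min-cost perfect matching of $\widehat{H}$ of cost $\delta=q+\rho$, normalised by Proposition~\ref{prop-struct-H}. Greedily lower its $3$-edges, then its touchable $2$-edges, to $1$ until the cost is at most $q-1$.
\begin{itemize}
\item Upper bound: each step saves at least $1$, so at most $\rho+1$ pairs are inserted.
\item Lower bound: any valid $E'$ lowers at most $\bbig$ matched edges per inserted pair, each by at most $2$. Yet some matching must drop from cost at least $\delta$ to at most $q-1$. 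Hence $\opt(G,u,v)\ge(\rho+1)/(2\bbig)$.
\end{itemize}
The paper's tight example for $\bbig=1$ is explicit. It has $m/2$ diagonal $1$-edges and two shifted diagonals of $2$-edges. There the all-$2$ min-cost matching forces roughly twice as many lowerings as the diagonal matching needs.
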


\begin{corollary}
If $1+\deg_G(v)=c\times (1+\deg_G(u))$ for some integer $c\geq 1$ then $\bbig=1$, and thus for this case 
{\em Problem}~\uw-\rt-\ins-\ntp\
under property 
\textbf{$(\spadesuit)$} 
admits a randomized polynomial time exact solution 
$(${\em \IE}, {\em Problem}~\uw-\rt-\ins-\ntp\ belongs to the complexity class \emph{RP}$)$
and 
admits a deterministic $2$-approximation.
\end{corollary}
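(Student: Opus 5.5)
The plan is to derive the corollary directly from Theorem~\ref{thm-alg-ins-rest}, after computing $\bbig$ and dealing with feasibility and the one-sided-error requirement.

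\textbf{Step 1: compute $\bbig$.} Suppose $1+\deg_G(v)=c\,(1+\deg_G(u))$ for an integer $c\geq 1$. Then $1+\deg_G(u)$ divides $1+\deg_G(v)$, so $\mathrm{LCM}(1+\deg_G(u),1+\deg_G(v))=1+\deg_G(v)$. Hence $\bbig=\frac{1+\deg_G(v)}{1+\deg_G(v)}=1$ and $\abig=c$. This is consistent with the standing convention $\deg_G(u)\leq\deg_G(v)$.

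\textbf{Step 2: handle feasibility.} Theorem~\ref{thm-alg-ins-rest} assumes the instance has at least one valid solution. So I would first run the deterministic polynomial-time feasibility test of Theorem~\ref{thm-alg-ins-feas}. If it reports infeasibility, we output that and stop.

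\textbf{Step 3: the deterministic $2$-approximation.} On a feasible instance with property $(\spadesuit)$, Theorem~\ref{thm-alg-ins-rest}(b) with $\bbig=1$ returns a valid $E'$ with $|E'|\leq 2\,\opt(G,u,v)$. This is the claimed deterministic $2$-approximation.

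\textbf{Step 4: the exact randomized algorithm.} On a feasible instance, Theorem~\ref{thm-alg-ins-rest}(a) gives, with probability $1-o(1)$, a valid $E'$ with $|E'|\leq\bbig\cdot\opt(G,u,v)=\opt(G,u,v)$. Any valid solution has size at least $\opt(G,u,v)$, so $|E'|=\opt(G,u,v)$ and the solution is exactly optimal.

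\textbf{Step 5: justify the RP claim.} This is the only step that needs some care, because RP requires one-sided error. I would phrase the problem as the decision question ``is $\opt(G,u,v)\leq k$?''. The algorithm runs (a), then verifies the returned $E'$ deterministically. It computes $\text{\ric}_{\!\!G^{+E'}}(\ec)$ by solving the \emd\ linear program in polynomial time, checks that the value is positive, and checks that $|E'|\leq k$. It answers ``yes'' only if both checks pass.

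A ``yes'' answer is therefore never wrong, since it is backed by a verified certificate. When the true answer is ``yes'', the check passes with probability $1-o(1)$ by Step 4. This places the problem in RP. A failed run can be repeated to amplify the success probability.

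I expect no step to be a real obstacle; the only point needing care is the one-sided-error formulation in Step 5.
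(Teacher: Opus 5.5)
Your proposal is correct and matches the paper's implicit argument. Since $1+\deg_G(u)$ divides $1+\deg_G(v)$, we get $\bbig=1$, and the corollary is Theorem~\ref{thm-alg-ins-rest}(a) and~(b) specialized to $\bbig=1$, with $|E'|\leq\opt(G,u,v)$ forcing exact optimality; your added steps (running the feasibility test of Theorem~\ref{thm-alg-ins-feas} first, and verifying the returned $E'$ to get one-sided error for the decision version) are sound refinements the paper leaves unstated.
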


\noindent
\textbf{Proof of Theorem~\ref{thm-alg-ins-rest}}
{\em We reuse the notations, observations and the proofs presented in Section}~\ref{sec-folklore}.
Since there are no edges connecting pairs of nodes in $\nbr_G(v)\setminus\{u\}$ or pairs of nodes in 
$\nbr_G(u)\setminus\{v\}$, insertion of an $\{u_i,v_j\}$ edge in $G$ correspond to change of 
$w(v_i,v_j)$ in $H$ from either $2$ or $3$ to $1$ and changes the weight of no other edge in $H$.
This change corresponds to 
changing 
the values of the edge weights $\widehat{w}(u_{i,k},v_{j,\ell})$
for $k\in\{1,\dots,\abig\},\ell\in\{1,\dots,\bbig\}$ 
in $\widehat{H}$ from $2$ or $3$ to $1$ and does not affect the weight of any other edge in 
$\widehat{H}$.

\smallskip
\noindent
\textbf{(\emph{a})}
Consider the graphs $H$ and $\widehat{H}$ via associations as described in 
Section~\ref{sec-H-to-anomaly}.
Our problem then equivalently can be expressed as:
\begin{quote}
find a minimum number of \emph{touchable} $2$-edges or $3$-edges 
such that the value of \emd\ on the 
graph obtained from $H$ by decreasing the weights of these edges to $1$ is strictly less than $1$.
\end{quote}
From $H$ we build the complete bipartite edge-weighted graph 
$\widehat{H}=(\widehat{L},\widehat{R},\widehat{w})$ 
as described in Section~\ref{sec-folklore}
and using Theorem~\ref{thm-folklore} 
we will work with perfect matchings of 
$\widehat{H}$.
Using Proposition~\ref{prop-rationalility}
we may assume that 
$\text{\emd}_{\!\!H}(L,R,\PP_L,\PP_R)=
\delta\times \frac{1}{q}$ for some integer $\delta>0$ and thus 
$
\text{\ric}_{\!\!G}(\ec)=
1- \text{\emd}_{\!\!H}(L,R,\PP_L,\PP_R)
= 1 -\frac{\delta}{q}
=
-\frac{\rho}{q}
$
for some integer $\rho=\delta-q\ge 0$. 
If $\rho=0$ then a trivially optimal solution involves the addition of just one suitable edge in $G$, thus we assume
that $\rho>0$.

Note that if $M$ is a perfect matching of 
$\widehat{H}$ 
then 
$\zeta(M)$ is an integer between $0$ and $3q$.
Let 
$\mcpm{{\widehat{H}}}$
be the cost of a minimum-cost perfect matching of $\widehat{H}$. 
For every $x\in
\{
\mcpm{{\widehat{H}}},
\mcpm{{\widehat{H}}}+1,
\dots,
3q
\}
$, 
we fix the following solutions and parameters for the rest of the proof:
\begin{enumerate}[label=$\triangleright$,leftmargin=*]
\item
For a perfect matching $M_x$ 
of
$\widehat{H}$ 
of 
cost $\zeta(M_x)=x$: 
\begin{enumerate}[label=$\triangleright$,leftmargin=*]
\item
For $i\in\{0,1,3\}$
let $n_{i,M_x}$ be the number of $i$-edges of $M_x$.
Let 
$n_{2,M_x}$ 
and 
$n_{2,M_x}'$ 
be the number of touchable and untouchable $2$-edges of $M_x$, respectively.
Thus, 
$q=n_{0,M_x}+n_{1,M_x}+n_{2,M_x}+n_{2,M_x}'+n_{3,M_x}=|\widehat{L}|=|\widehat{R}|$ 
and 
$\zeta(M_x)=n_{1,M_x}+2n_{2,M_x}+2n_{2,M_x}'+3n_{3,M_x}$.
Let $e_{1,M_x,i},\dots,e_{n_{i,M_x},M_x,i}$ be an arbitrary enumeration of the $i$-edges of $M_x$.
\item
Let $\widehat{\kappa}_{3,M_x},
\widehat{\kappa}_{2,M_x} \geq 0$ 
be 
defined as follows:
\begin{enumerate}[label=$\triangleright$,leftmargin=*]
\item
If $n_{3,M_x} \times \frac{2}{q} > 
\frac{\rho
+(x- \mcpm{{\widehat{H}}})
}{q} $ then 
$
\widehat{\kappa}_{2,M_x}=0$ 
and
$\widehat{\kappa}_{3,M_x}\leq n_{3,M_x}$ is the unique integer satisfying 
$(\widehat{\kappa}_{3,M_x}-1) \times \frac{2}{q} \leq \frac{\rho
+(x- \mcpm{{\widehat{H}}})
}{q}$
and 
$\widehat{\kappa}_{3,M_x} \times \frac{2}{q} > \frac{\rho
+(x- \mcpm{{\widehat{H}}})
}{q}$.
\item
Otherwise, 
if 
$
n_{3,M_x} \times \frac{2}{q} 
+ 
n_{2,M_x} \times \frac{1}{q} 
> \frac{\rho
+(x- \mcpm{{\widehat{H}}})
}{q} $ 
then 
$ \widehat{\kappa}_{3,M_x}=n_{3,M_x}$ 
and 
$\widehat{\kappa}_{2,M_x}\leq n_{2,M_x}$ is the unique integer satisfying 
$n_{3,M_x} \times \frac{2}{q} + (\widehat{\kappa}_{2,M_x}-1) \times \frac{1}{q} \leq \frac{\rho
+(x- \mcpm{{\widehat{H}}})
}{q}$
and 
$n_{3,M_x} \times \frac{2}{q} + \widehat{\kappa}_{2,M_x} \times \frac{1}{q} > \frac{\rho
+(x- \mcpm{{\widehat{H}}})
}{q}$.
\item
Otherwise, 
we call $M_x$ an ``unwanted'' matching and 
set 
$ 
\widehat{\kappa}_{3,M_x}=
\widehat{\kappa}_{2,M_x}=
\infty
$. 
\end{enumerate}
Let 
$\widehat{\kappa}_{M_x}=\widehat{\kappa}_{3,M_x}+\widehat{\kappa}_{2,M_x}$.
If 
$\widehat{\kappa}_{M_x}<\infty$
then 
we associate a subset $\cE_{M_x}$ of edges of $M_x$
with 
$\widehat{\kappa}_{M_x}$ in the following manner:
if 
$\widehat{\kappa}_{2,M_x}=0$
then
$
\cE_{M_x} = 
\big\{ 
e_{1,M_x,3},
\dots,
e_{\widehat{\kappa}_{3,M_x},M_x,3}
\big\}
$, 
otherwise
$
\cE_{M_x} = 
\big\{ 
e_{1,M_x,3},
\dots,
e_{n_{3,M_x},M_x,3},
\allowbreak
e_{1,M_x,2},
\dots,
e_{\widehat{\kappa}_{2,M_x},M_x,2}
\big\}
$
where 
$
e_{1,M_x,2},
\dots,
e_{\widehat{\kappa}_{2,M_x},M_x,2}
$
are touchable $2$-edges.
\end{enumerate}
\item
Let $\M_x$ denote a perfect matching 
in 
$\widehat{H}$ 
satisfying the following conditions:
\begin{itemize}
\item
$\M_x$ is \emph{not} an unwanted matching.
\item
$\zeta(\M_x)=x$. 
\item
$\M_x$
has a \emph{maximum} number of $3$-edges, say $\beta$ $3$-edges, among all perfect matchings of cost $x$ in 
$\widehat{H}$. 
\item
$\M_x$ has the maximum number of touchable $2$-edges
among all perfect matchings of $\widehat{H}$ of cost $x$ and having $\beta$ $3$-edges.
\end{itemize}
If there is no such matching then we set $\M_x=\emptyset$ and 
$\widehat{\kappa}_{\M_x}=\infty$.
\item
Let $\M$ be a perfect matching 
in 
$\widehat{H}$
such that $\M\eqdef \M_{\zeta(\M)}$
satisfies
$
\widehat{\kappa}_{\M}
=
\displaystyle
\min_{x}
\big\{ \widehat{\kappa}_{{\M}_x} \big\}
$.
Note that 
$\M$ is not an unwanted matching and 
$\widehat{\kappa}_{\M}<\infty$ 
since our problem has at least one valid solution.
\end{enumerate}

\begin{observation}\label{obs-repeat}
For any matching $M_x$ that is not unwanted the following claims hold:

\smallskip
\noindent
\textbf{(a)}
$\widehat{\kappa}_{M_x}$ 
is the minimum number of edges of $M_x$ whose weights may be decreased from their original values to $1$ 
to get a new perfect matching of weight strictly less than  
$\delta-\rho=q$.

\smallskip
\noindent
\textbf{(b)}
$
\widehat{\kappa}_{\M_x}
\geq
\widehat{\kappa}_{\M}
$
is true by definition of 
$\widehat{\kappa}_{\M}$.

\smallskip
\noindent
\textbf{(c)}
$
\widehat{\kappa}_{M_x}
\geq
\widehat{\kappa}_{\M_x}
$
is true for any matching $M_x$ of cost $x$ because,
by definition of 
$\M_x$,
$n_{3,\M_x} \geq 
n_{3,M_x} 
$
and 
if 
$
n_{3,\M_x}= n_{3,M_x}
$
then 
$
n_{2,\M_x}\geq n_{2,M_x}
$.
\end{observation}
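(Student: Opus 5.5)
For part \textbf{(a)}, I would fix a matching $M_x$ that is not unwanted and set $T = \rho + (x-\mcpm{{\widehat{H}}})$. Since $\delta=\mcpm{{\widehat{H}}}$ and $\rho=\delta-q$, we have $x-q=T$. So a modified matching has cost strictly less than $q$ exactly when the total reduction obtained by lowering weights to $1$ is strictly greater than $T$.

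Under $(\spadesuit)$, lowering the weight of a single edge of $M_x$ has the following effect on the cost of $M_x$:
\begin{itemize}
\item lowering a $3$-edge reduces the cost by exactly $2$;
\item lowering a touchable $2$-edge reduces it by exactly $1$;
\item lowering a $0$-edge or a $1$-edge reduces it by nothing;
\item untouchable edges cannot be lowered at all.
\end{itemize}
The question then becomes a knapsack-like covering problem with item values $2$ (available $n_{3,M_x}$ times) and $1$ (available $n_{2,M_x}$ times): choose the fewest items with total value $>T$. I would use a standard exchange argument. In any feasible choice that uses a touchable $2$-edge while some $3$-edge is unused, swapping the one for the other does not decrease the total reduction. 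Hence some optimal choice takes $3$-edges first and then touchable $2$-edges. The minimum count is therefore the first index at which the greedy prefix sum exceeds $T$, and this is exactly how $\widehat{\kappa}_{3,M_x}$ and $\widehat{\kappa}_{2,M_x}$ were defined through the two threshold inequalities. Non-unwantedness guarantees that such an index exists.

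Part \textbf{(b)} is immediate, because $\M$ was chosen as the minimizer of $\widehat{\kappa}_{\M_x}$ over all $x$.

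For part \textbf{(c)}, I would write $\widehat{\kappa}$ as an explicit function $f(n_3,n_2)$ at fixed $T$:
\begin{itemize}
\item if $2n_3>T$, then $f=\lfloor T/2\rfloor+1$;
\item otherwise, if $2n_3+n_2>T$, then $f=n_3+(T-2n_3)+1=T-n_3+1$;
\item otherwise $f=\infty$.
\end{itemize}
In its finite region $f$ is nonincreasing in $n_3$: it is constant once $2n_3>T$, and it decreases by one per extra $3$-edge below that threshold. Moreover, when $n_3$ is held fixed, $f$ does not depend on $n_2$ except through feasibility. The lexicographic choice of $\M_x$ then gives $f(n_{3,\M_x},n_{2,\M_x})\le f(n_{3,M_x},n_{2,M_x})$, provided $\M_x$ is itself not unwanted, which is built into its definition.

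The step I expect to be delicate is reconciling that last proviso. $\M_x$ is chosen lexicographically among non-unwanted matchings of cost $x$, but a matching with more $3$-edges could in principle have a smaller capacity $2n_3+n_2$. I would handle this by restricting the lexicographic comparison to non-unwanted matchings, exactly as the definition of $\M_x$ does. Within that class the monotonicity of $f$ in $n_3$ suffices, so $\widehat{\kappa}_{M_x}\ge\widehat{\kappa}_{\M_x}$ for every non-unwanted $M_x$ of cost $x$. For unwanted $M_x$ the inequality holds trivially, since $\widehat{\kappa}_{M_x}=\infty$.
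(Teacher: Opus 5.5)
Your proposal is correct and follows essentially the paper's route. The paper justifies (b) directly from the definition of $\widehat{\kappa}_{\M}$ and (c) from the lexicographic choice of $\M_x$ (most $3$-edges, then most touchable $2$-edges), leaving (a) to the definitions; your greedy exchange argument for (a) and the closed form $\widehat{\kappa}=\lfloor T/2\rfloor+1$ or $T-n_3+1$ (nonincreasing in $n_3$, independent of $n_2$) make explicit the monotonicity that the paper's one-line justification of (c) silently uses. You are also right that the maximization defining $\M_x$ must range over non-unwanted matchings of cost $x$: under the literal reading, $\M_x$ could be empty while some non-unwanted $M_x$ of cost $x$ exists, and then (c) would fail.
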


For $\alpha\in\{0,1,2,3\}$ 
we define 
the $(i,j)\tx$ \emph{block} of $\alpha$-edges of $\widehat{H}$ as
$\cB^\alpha_{i,j}=
\big\{
\{ u_{i,k},v_{j,\ell} \} \,|\, k\in\{1,\dots,a\},\ell\in\{1,\dots,b\}, u_{i,k},v_{j,\ell}\in\widehat{H},
w(u_{i,k},v_{j,\ell})=\alpha
\big\}
$.
We say that the edges in 
$\cB^\alpha_{i,j}$
``correspond'' to the edge 
$\{u_i,v_j\}$ in $H$.
Abusing notations slightly, let 
$
\cB^{\alpha}_1,
\cB^{\alpha}_2,
\dots,
$
be an enumeration of 
the blocks of $\alpha$-edges in $\widehat{H}$.

\begin{lemma}\label{lemma-upper}
$
\opt(G,u,v)\leq\widehat{\kappa}_{\M}
$
and 
we can identify a set of at most 
$\widehat{\kappa}_{\M}$ edges in $G$ that is a valid solution.
\end{lemma}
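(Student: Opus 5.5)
The plan is to read off a solution for $G$ directly from the matching $\M$ and its associated edge set $\cE_{\M}$. Let $x=\zeta(\M)$. By construction, $\cE_{\M}$ consists of $\widehat{\kappa}_{\M}$ edges of $\M$. Each of these is either a $3$-edge or a touchable $2$-edge of $\widehat{H}$, and each lies in some block $\cB^{\alpha}_{i,j}$ that corresponds to an edge $\{u_i,v_j\}$ of $H$.

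\textbf{Constructing $E'$.} Take $E'$ to be the set of distinct pairs $\{u_i,v_j\}$ arising from the blocks that meet $\cE_{\M}$. Clearly $|E'|\leq|\cE_{\M}|=\widehat{\kappa}_{\M}$. We must check that $E'$ is admissible in the restricted model, i.e., that $u_i\in\nbr_G(u)\setminus\{v\}$, $v_j\in\nbr_G(v)\setminus\{u\}$, $u_i\neq v_j$, and $\{u_i,v_j\}\notin E$.
\begin{itemize}
\item Every $3$-edge is touchable, since all untouchable edges have weight $2$.
\item Touchable edges avoid the copies of $u$ and $v$.
\item Weight $\geq2$ excludes both $u_i=v_j$ and $\{u_i,v_j\}\in E$.
\end{itemize}

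\textbf{Effect on $\widehat{H}$.} By property $(\spadesuit)$, as noted at the start of the proof, inserting $\{u_i,v_j\}$ into $G$ changes exactly the block of $\widehat{H}$ corresponding to $\{u_i,v_j\}$ to weight $1$, and alters no other weight. Inserting several such edges therefore only lowers the weights of those blocks, so every edge of $\cE_{\M}$ now has weight $1$ in the new graph $\widehat{H}'$. The other edges of $\M$ keep their weights, or possibly decrease if they lie in the same blocks.

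Hence $\M$ remains a perfect matching of $\widehat{H}'$ whose cost is at most
\[
x-2\,\widehat{\kappa}_{3,\M}-\widehat{\kappa}_{2,\M}.
\]
By the defining inequalities of $\widehat{\kappa}_{3,\M}$ and $\widehat{\kappa}_{2,\M}$,
\[
2\,\widehat{\kappa}_{3,\M}+\widehat{\kappa}_{2,\M}>\rho+(x-\mcpm{{\widehat{H}}}).
\]
Using $\mcpm{{\widehat{H}}}=\delta=q+\rho$, the new cost is therefore strictly less than
\[
x-\rho-x+q+\rho=q.
\]
This is the content of Observation~\ref{obs-repeat}(a).

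\textbf{Conclusion.} Theorem~\ref{thm-folklore} and Proposition~\ref{prop-rationalility} apply to $G^{+E'}$, whose $\widehat{H}'$ has the same $q$ because the degrees of $u$ and $v$ are unchanged under restricted insertion. They give
\[
\text{\emd}_{\!\!G^{+E'}}(u,v)\leq\zeta(\M)/q<1,
\]
so $\text{\ric}_{\!\!G^{+E'}}(\ec)>0$. Thus $E'$ is valid, and $\opt(G,u,v)\leq|E'|\leq\widehat{\kappa}_{\M}$.

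\textbf{Main obstacle.} The delicate step is verifying that the weight changes in $\widehat{H}$ are exactly the block changes, so that no distance shrinks in an uncontrolled way. This is where $(\spadesuit)$ is essential. With no edges inside $\nbr_G(u)\setminus\{v\}$ or inside $\nbr_G(v)\setminus\{u\}$, a new edge $\{u_i,v_j\}$ cannot shorten any other distance $\dist(u_{i'},v_{j'})$ below its existing value that matters here. In any case, for the upper bound we only need weights to be non-increasing, which always holds for insertions. The strict decrease for edges in $\cE_{\M}$ is then immediate.

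\textbf{Computability.} We also need $\M$ (or a matching achieving $\widehat{\kappa}_{\M}$) to be computable. However, the lemma only asks to identify the set $E'$ once $\M$ is given, and producing $\M$ is deferred to the algorithmic part of the theorem.
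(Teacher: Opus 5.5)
Your proposal is correct and follows essentially the paper's route. You insert the $H$-edges whose blocks meet $\cE_{\M}$ and note that $\M$ itself then has cost strictly below $\delta-\rho=q$ in the modified $\widehat{H}$, which makes the earth mover's distance less than $1$ and the curvature positive. Your explicit checks that these are admissible restricted insertions, that $q$ is unchanged, and that insertions can only lower other weights fill in details the paper leaves implicit.
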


\begin{proof}
Consider the set of 
$\widehat{\kappa}_{\M}$
edges in $\cE_{\M}$.
Suppose the edges in 
$\cE_{\M}$
appear in  
$
k \leq \widehat{\kappa}_{\M}
$
blocks of $\widehat{H}$, say the blocks 
$\cB_1^{\alpha_1},\dots,
\cB_k^{\alpha_k}
$
for $\alpha_1,\dots,\alpha_k\in\{2,3\}$
corresponding to the edges 
$
e_1^{\alpha_1},
\dots,
e_k^{\alpha_k}
$
of $H$. 
We select the set of $k$ edges 
$
E'=
\{
e_1^{\alpha_1},
\dots,
e_k^{\alpha_k}
\}
$
of $H$
as our solution (these are associated with a corresponding $k$ edges in $G$). 
Since $\cE_{\M}\subseteq \M$ and $\M$ is a wanted matching, 
decreasing the weights of the edges in $E'$ from $2$ or $3$ to $1$ 
results in a decrease of the cost of $\M$ in $\widehat{H}$, the new
cost of $\M$ being \emph{strictly} less than $\delta-\rho$, 
which in turn implies that 
the graph $H$ with these new weights satisfy
$
\text{\emd}_{\!\!H}(L,R,\PP_L,\PP_R) < \frac{\delta-\rho}{q}=1
$,
resulting in 
$\text{\ric}_{\!\!G}(\ec)=
1- \text{\emd}_{\!\!H}(L,R,\PP_L,\PP_R)>0$.
\end{proof}

\begin{lemma}~\label{lemma-lower}
$
\widehat{\kappa}_{\M} \leq b \times \opt(G,u,v)
$.
\end{lemma}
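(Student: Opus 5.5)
Let $E^*=\eopt(G,u,v)$ be an optimal set of inserted edges, with $|E^*|=\opt(G,u,v)$, and let $G^*=G^{+E^*}$. The plan is to take a minimum-cost perfect matching of the modified bipartite graph $\widehat{H}^*$ (the graph $\widehat{H}$ built from $G^*$) and pull it back to a perfect matching $M^*$ of the original $\widehat{H}$. We then show that $M^*$ is not unwanted and that $\widehat{\kappa}_{M^*}\le\bbig\,|E^*|$. Observation~\ref{obs-repeat}(c) and (b) then give $\widehat{\kappa}_{\M}\le\widehat{\kappa}_{\M_x}\le\widehat{\kappa}_{M^*}$ with $x=\zeta(M^*)$.

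\textbf{Step 1: structure of the weight change.} By property $(\spadesuit)$, inserting an edge $\{u_i,v_j\}$ in $G$ changes only $w(u_i,v_j)$ in $H$, lowering it from $2$ or $3$ to $1$; this was noted at the start of the proof. In $\widehat{H}$ it lowers exactly the $\abig\bbig$ edges of the block of $\{u_i,v_j\}$ to $1$ and changes nothing else. Because $E^*$ is valid, $\text{\ric}_{\!\!G^*}(\ec)>0$. Hence by Theorem~\ref{thm-folklore} and Proposition~\ref{prop-rationalility}, $\widehat{H}^*$ has a perfect matching $M^*$ with $\zeta^*(M^*)<q$, where $\zeta^*$ is the cost under the new weights.

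\textbf{Step 2: the blocks are the obstacle.} The same $M^*$ is a perfect matching of $\widehat{H}$, with cost $x=\zeta(M^*)\ge\mcpm{{\widehat{H}}}$. The edges whose weight changed lie in the $|E^*|$ blocks of $E^*$. Each block sits in rows $u_{i,1},\dots,u_{i,\abig}$ and columns $v_{j,1},\dots,v_{j,\bbig}$, so a matching uses at most $\min(\abig,\bbig)\le\bbig$ edges of any one block; here $\bbig\ge\abig$ because $\deg_G(u)\le\deg_G(v)$. Therefore $M^*$ contains a set $F$ of at most $\bbig\,|E^*|$ edges, each of weight $2$ or $3$ in $\widehat{H}$, all touchable since $E^*$ avoids $u,v$. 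Lowering exactly the edges of $F$ to weight $1$ turns the cost $x$ into $\zeta^*(M^*)<q=\delta-\rho$.

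\textbf{Step 3: compare with the greedy count.} By Observation~\ref{obs-repeat}(a), $\widehat{\kappa}_{M^*}$ is the minimum number of edges of $M^*$ whose weights must be lowered to $1$ to bring the cost strictly below $q$. That observation presupposes $M^*$ is not unwanted, so we first check this.
\begin{itemize}
\item Lowering all $3$-edges and all touchable $2$-edges of $M^*$ saves at least as much as lowering $F$ alone.
\item Lowering $F$ already reaches cost below $q$, so the greedy exhaustion in the definition of $\widehat{\kappa}_{2,M_x}$ and $\widehat{\kappa}_{3,M_x}$ succeeds.
\item The required reduction $x-q+1$ is exactly the quantity $\rho+(x-\mcpm{{\widehat{H}}})$, using $\mcpm{{\widehat{H}}}=\delta=q+\rho$; I will confirm this off-by-one against the strict inequalities in the definition.
\end{itemize}
Hence $M^*$ is wanted, and since $F$ is a feasible choice of edges to lower, $\widehat{\kappa}_{M^*}\le|F|\le\bbig\,\opt(G,u,v)$.

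For the exchange-argument part of (a), I would also justify directly that the greedy rule is optimal. Lowering a $3$-edge saves $2$ and lowering a $2$-edge saves $1$, so picking $3$-edges first minimizes the count needed to exceed the target.

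The main point to handle carefully is Step 2. The bound "at most $\bbig$ edges per block" is exactly what produces the factor $\bbig$, and it relies on $(\spadesuit)$ to ensure no edges outside the blocks of $E^*$ change weight.
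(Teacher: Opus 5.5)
Your proposal is correct and follows essentially the paper's argument. The paper also takes a minimum-cost perfect matching of $\widehat{H}$ with the optimal solution's blocks lowered to weight $1$, notes that it meets each lowered block in at most $\bbig$ edges, restores those weights to obtain a perfect matching of $\widehat{H}$, and concludes via Observation~\ref{obs-repeat}; your explicit check that this matching is not unwanted is a detail the paper omits.

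There are two harmless slips. First, since $\deg_G(u)\le\deg_G(v)$ one actually has $\abig\ge\bbig$, not the reverse; the per-block bound of $\bbig$ still holds because a block has only $\bbig$ right-side vertices. Second, $\rho+(x-\mcpm{{\widehat{H}}})=x-q$ rather than $x-q+1$, and the strict inequalities in the definition require a saving strictly exceeding $x-q$, which is precisely the condition that the new cost falls below $q$.
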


\begin{proof}
Let $\eopt(G,u,v)=\big\{e_1^{\alpha_1},\dots,e_{\opt(G,u,v)}^{\alpha_{\opt(G,u,v)}} \,|\, \alpha_1,\dots,\alpha_{\opt(G,u,v)}\in\{2,3\} \big\}$ 
be a set of $\opt(G,u,v)$ edges of $H$ in an optimal solution, 
and 
let
the edges 
$e_1^{\alpha_1},\dots,e_{\opt(G,u,v)}^{\alpha_{\opt(G,u,v)}}$ 
correspond
to the blocks 
$
\cB_1^{\alpha_1},
\dots,
\cB_{\opt(G,u,v)}^{\alpha_{\opt(G,u,v)}}
$ of $\widehat{H}$,
respectively.
Thus, decreasing the weights of 
$e_1^{\alpha_1},\dots,e_{\opt(G,u,v)}^{\alpha_{\opt(G,u,v)}}$ 
in $H$
correspond to decreasing the weights of edges in 
$
\widehat{\eopt}(G,u,v)=\cB_1^{\alpha_1} \,\cup\,
\dots
\,\cup\, \cB_{\opt(G,u,v)}^{\alpha_{\opt(G,u,v)}}
$ of $\widehat{H}$.
Let $\widehat{H}_{\downarrow }$ be the
graph obtained from $\widehat{H}$ by decreasing the weight of every edge in $\widehat{\eopt}(G,u,v)$ from its original 
value of $2$ or $3$ to $1$, 
and 
let
$M^\ast$
be a minimum-cost perfect matching of 
$\widehat{H}_{\downarrow }$
of cost 
$
\zeta(M^\ast)=
q\times\text{\emd}_{\!\! \widehat{H}_{\downarrow } } (L,R,\PP_L,\PP_R)
<
\delta-\rho=q
$.
Note that 
$
\cS=
M^\ast
\cap 
\widehat{\eopt}(G,u,v)
$
is a matching and the number 
$|\cS|$ 
of edges 
in $\cS$ 
is at most 
$\bbig \times \opt(G,u,v)$
since 
any matching for any of the 
$
\cB_j^{\alpha_j}
$'s
can have at most $\bbig$ edges.
Consider increasing the weights of the edges in $\cS$ from $1$ to their original 
values in $\widehat{H}$. 
This transforms $M^\ast$ to a perfect matching $M'$ of $\widehat{H}$ of 
cost 
$
\zeta(M')=x
$
for some  
$x\in
\{
\mcpm{{\widehat{H}}},
\mcpm{{\widehat{H}}}+1,
\dots,
3q
\}
$.
By using Observation~\ref{obs-repeat}
it follows that 
$
b \times \opt(G,u,v) \geq 
|\cS|
\geq
\widehat{\kappa}_{\M}
$.
\end{proof}

The only remaining part of the proof is to show 
how to compute $\M_x$.

\begin{lemma}
For each $x$, $\M_x$ can be computed by a polynomial time randomized algorithm with success probability $1-o(1)$.
\end{lemma}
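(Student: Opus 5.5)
We reduce the computation of $\M_x$ to a polynomial number of \emph{exact perfect matching} questions with several colour classes. Each such question is answered by the Mulmuley--Vazirani--Vazirani algebraic approach~\cite{MVV87}: a Tutte/Edmonds matrix with random substitutions and the Schwartz--Zippel lemma.

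Classify every edge of $\widehat{H}$ by a \emph{type} $t\in\{0,1,2,2',3\}$, where $2$ and $2'$ stand for touchable and untouchable $2$-edges. For a perfect matching $M$ the vector $(n_{0,M},n_{1,M},n_{2,M},n_{2,M}',n_{3,M})$ determines both $\zeta(M)$ and, through the definitions above, whether $M$ is unwanted together with $\widehat{\kappa}_{M}$. All entries are integers in $\{0,\dots,q\}$, so there are $O(q^4)$ possible vectors, which is polynomial since $q\le (1+\deg_G(u))(1+\deg_G(v))$. For fixed $x$ the algorithm enumerates every vector $\vec n$ with $\zeta=x$ and decides for each whether some perfect matching of $\widehat{H}$ realises $\vec n$. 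Among the realisable vectors that are not unwanted, it keeps one that maximises $n_3$ and then $n_2$. This is exactly the lexicographic rule defining $\M_x$, and $\widehat{\kappa}_{\M_x}$ is read off from the vector alone.

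Realisability is tested as follows. Form the $q\times q$ bipartite Edmonds matrix $A$ with $A_{k\ell}=y_{k\ell}\, z_{t(k,\ell)}$ if edge $\{k,\ell\}$ has type $t$ (all edges exist, since $\widehat{H}$ is complete). Then $\det A=\sum_M \pm\prod_{e\in M}y_e\prod_t z_t^{n_{t,M}}$. Distinct perfect matchings give distinct $y$-monomials, so nothing cancels, and $\vec n$ is realisable if and only if the coefficient of $\prod_t z_t^{n_t}$ is a nonzero polynomial in the $y$'s. To extract the coefficients, substitute random values from a field of size $\mathrm{poly}(q)$, say $\ge q^{c}$, for the $y$'s. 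This leaves a polynomial in four free $z$-variables (set $z_0=1$, since $n_0$ is determined by the others), of degree at most $q$ in each. Recover its coefficients by evaluating at $(q+1)^4$ grid points and solving the resulting Vandermonde-type linear system (multivariate interpolation). By Schwartz--Zippel, a fixed nonzero coefficient polynomial of degree $\le q$ in the $y$'s vanishes at the random point with probability at most $q/q^{c}$. A union bound over all $O(q^4)$ vectors and all $O(q)$ values of $x$ then gives overall failure probability $o(1)$ for $c\ge 7$. Errors are one-sided: a vector declared realisable truly is realisable.

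The step I expect to be the main obstacle is making the algorithm output an actual matching $\M_x$, and not just its type vector, since Lemma~\ref{lemma-upper} needs the edge set $\cE_{\M}$. I would use self-reducibility. Scan the edges one at a time, tentatively set the entry of the current edge to $0$, and rerun the coefficient test for the chosen vector. If the coefficient remains nonzero, the edge is deleted permanently; otherwise it is kept. After $q^2$ such steps the surviving edges form a perfect matching of the required type. This costs polynomially many further tests, all covered by the same union bound. Alternatively, the isolation lemma of~\cite{MVV87} with random weights $y$ could be used, but the deletion scan is simpler to justify.
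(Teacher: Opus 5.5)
Your proof is correct. It uses the same algebraic method as the paper (the Mulmuley--Vazirani--Vazirani approach to exact perfect matching), but organises the reduction differently.

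The paper enumerates only the $O(q^2)$ pairs $(k,\ell)$, the numbers of $3$-edges and of touchable $2$-edges. It complements the weights via $\widetilde{w}=4-\widehat{w}$. It then folds $k$, $\ell$ and the cost $x$ into a single target weight by giving the two relevant edge classes huge separating weights $K_1\gg K_2\gg q^2$. Finally it calls the randomized exact-weight perfect matching algorithm of~\cite{elma23} as a black box, which also returns the matching.

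You instead attach one indeterminate per edge type to the Edmonds matrix. You read off all realisable type vectors at once by interpolation plus Schwartz--Zippel, and you extract an actual matching by an explicit deletion-based self-reduction.

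The paper's route is shorter, since both the decision and the search steps are handed to a published result. Yours buys the following:
\begin{itemize}
\item It is self-contained.
\item It keeps degrees at most $q$ per variable. Your encoding is in effect a Kronecker substitution with base $q+1$, whereas the paper's separating constants are of order $q^{20}$.
\item It avoids the separating-constant case analysis, which the paper handles somewhat loosely. Its lower bound $3(q-k'-\ell')\le\alpha$ forgets the untouchable $2$-edges, which have weight $2$ in $\widetilde{H}$. This is harmless, since only $\alpha\le 4q$ is actually used.
\end{itemize}

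Your choice to take the lexicographic maximum of $(n_3,n_2)$ only over vectors that are not unwanted is the right reading of the somewhat ambiguous definition of $\M_x$. It is the reading under which Observation~\ref{obs-repeat}(c) actually holds. Both algorithms can implement it, since each produces every realisable pair $(n_3,n_2)$ at cost $x$.
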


\begin{proof}
For each $k,\ell\in\{0,1,\dots,q\}$ satisfying $k+\ell\leq q$ we 
compute a perfect matching $M_{k,\ell}$ of cost $x$ that has exactly 
$k$ $3$-edges and exactly $\ell$ touchable $2$-edges. Once all these 
matchings are computed, we can easily select $\M_x$ as one of them 
that satisfies our criteria. Thus, all that remains is to show 
how to compute each $M_{k,\ell}$.
Let 
$\widetilde{H}= (\widehat{L},\widehat{R},\widetilde{w})$
be the graph
obtained from 
$\widehat{H}$
by setting 
$\widetilde{w}(e)=4-\widehat{w}(e)$ 
for every edge $e$.
Then $M_{k,\ell}$ corresponds to a perfect matching (consisting of the same edges)
$\widetilde{M}_{k,\ell}$ 
in 
$\widetilde{H}$
of cost $\zeta(\widetilde{M}_{k,\ell})=4q-\zeta({M}_{k,\ell})=4q-x$
in which the number of $1$-edges is
exactly $k$
and the number of touchable $2$-edges is exactly $\ell$. 
Thus, equivalently, we can compute a perfect matching 
$\widetilde{M}_{k,\ell}$ 
of cost $4q-x$ with 
exactly $k$ $1$-edges and exactly $\ell$ touchable $2$-edges, 
and 
from this perfect matching we can reconstruct our desired perfect matching 
$M_{k,\ell}$.

We now show how to compute the perfect matching
$\widetilde{M}_{k,\ell}$
of cost 
$q_1=4q-x$. 
Noting that the sum of weights of all edges in 
$\widetilde{H}$
is at most $4q^2$, 
we select two integers $K_2\gg 4q^2$ and $K_1\gg 4q^2 K_2$, 
say 
$K_2=(4q+10)^{10}-2$ and 
$K_1=(4q+20)^{20}-1$, 
add $K_1$ to the weight of each $1$-edge 
and
add $K_2$ to the weight of each touchable $2$-edge
in 
$\widetilde{H}$, 
resulting in a new graph 
$\widetilde{H}'$.
Then, 
$\widetilde{M}_{k,\ell}$
corresponds to a perfect matching 
in $\widetilde{H}'$
of
cost 
$\Delta=K_1k+K_2\ell+(q_1-k-2\ell)$.
Conversely, consider a perfect matching of cost exactly $\Delta$ 
in $\widetilde{H}'$
and 
suppose this perfect 
matching
selected $k'$ $1$-edges and $\ell'$ touchable $2$-edges
resulting in a matching of cost 
$\Delta'=K_1k'+K_2\ell'+\alpha$ 
where 
$
3 (q-k'-\ell') \leq \alpha \leq 
4 (q-k'-\ell')
$
(since the $q-k'-\ell'$ edges which are not of weight $1$ or $2$ must be of weight $3$ or $4$).
We claim that $\Delta'=\Delta$ if and only if $k'=k$ and $\ell'=\ell$. Consider the 
following cases: 
\begin{enumerate}[label=$\triangleright$,leftmargin=*]
\item
Suppose that 
$k'=k+\delta_1$ and $\ell'=\ell+\delta_2$
where $\delta_1>0$ and $\delta_2 \geq 0$.
Then 
$
\Delta'-\Delta=K_1\delta_1+K_2\delta_2+\alpha- 
(q_1-k-2\ell)
>0
$
since $K_1\gg 16 q^4$.
\item
Suppose that 
$k'=k+\delta_1$ and $\ell'=\ell-\delta_2$
where $\delta_1>0$ and $\delta_2 \geq 0$.
Then 
$
\Delta'-\Delta=K_1\delta_1-K_2\delta_2+\alpha- (q_1-k-2\ell)
>
\frac{1}{2} K_1 >0
$
since $\delta,\delta_2\leq q$
and $K_1>q^9 K_2$.
\item
Suppose that 
$k'=k-\delta_1$ and $\ell'=\ell+\delta_2$
where $\delta_1>0$ and $\delta_2 \geq 0$.
Then 
$
\Delta'-\Delta=-K_1\delta_1+K_2\delta_2+\alpha- 
(q_1-k-2\ell)
<0
$ 
since $\delta,\delta_2\leq q$
and $K_1>q^9 K_2$.
\item
Suppose that 
$k'=k-\delta_1$ and $\ell'=\ell-\delta_2$
where $\delta_1>0$ and $\delta_2 \geq 0$.
Then 
$
\Delta'-\Delta=-K_1\delta_1-K_2\delta_2+\alpha- 
<0
$
since $K_1\gg 16 q^4$.
\end{enumerate}
Thus it follows that $k'=k$ and now a similar argument 
involving 
$\Delta - K_1k$ (instead of $\Delta$)
shows that $\ell'=\ell$.
Now we use the 
randomized polynomial-time 
algorithm (with success probability of $1-o(1)$) for the 
EWPM problem
in~\cite{elma23}
(which itself uses results in~\cite{GAMST16,MVV87})
to get 
a matching of cost exactly $\Delta$ in 
$\widetilde{H}'$.
\end{proof}

\smallskip
\noindent
\textbf{(\emph{b})}
Consider a minimum-cost perfect matching $\M$ 
of $\widehat{H}$
of cost 
$\mcpm{{\widehat{H}}}=
\text{\emd}_{\!\!H}(L,R,\PP_L,\PP_R)
=
\frac{\delta}{q}
$, 
giving 
$
\text{\ric}_{\!\!G}(\ec)
= 1 -\frac{\delta}{q}
=
-\frac{\rho}{q}
$
for some integer $\rho=\delta-q\ge 0$. 
We modify the matching via 
Proposition~\ref{prop-struct-H}
so that it satisfies the criterion of 
Proposition~\ref{prop-struct-H}.
Note that 
if we decrease the weight of every $3$-edge and every touchable $2$-edge in $\M$, 
this is equivalent to adding every possible edge that we can insert in $G$ and, 
since our problem has a valid solution, this must result in a perfect matching of $\widehat{H}$ 
of cost strictly less than $q$ (since we assume that there is a valid solution).
Bases on the above observation, 
we execute the following simple algorithm: 
\begin{enumerate*}[label={\textbf{({\arabic*})}}]
\item\label{ttt} 
Repeatedly select an edge, say 
$\{u_{j_1,\ell_1},v_{j_2,\ell_2}\}$,  
in $\M$ 
with $w(u_j,v_j)=3$,
$w(u_{j_1,\ell_1},v_{j_2,\ell_2})=3$,  
decrease the weight of the edge 
$\{u_{j_1,\ell_1},v_{j_2,\ell_2}\}$
to $1$, and stop if the 
total cost of the perfect matching  
with the modified weight is
strictly less than $q$.
\item\label{ttt2} 
If there are no more such $3$-edges in $\M$ as mentioned in~\ref{ttt}
and total cost of the perfect matching  
has not still dropped below $q$
then
repeatedly select a touchable $2$-edge matched by $\M$, say 
$\{u_{j_1,\ell_1},v_{j_2,\ell_2}\}$,  
with $w(u_{j_1,\ell_1},v_{j_2,\ell_2})=2$,  
decrease the weight of the edge 
$\{u_{j_1,\ell_1},v_{j_2,\ell_2}\}$
to $1$, and stop 
if the 
total cost of the perfect matching  
with the modified weight is
strictly less than $q$.
\end{enumerate*}
Let $p$ be the number of edges whose weights were decreased by our algorithm and 
let $\mu$ be the final value of the \emd\ of $H$ at the end of our algorithm. 
If our algorithm terminated in Step~\ref{ttt} 
then 
we have 
$\mu=1-\frac{2}{q}=\frac{q-2}{q}$
and 
$p=
\left\lceil\frac{\delta - (q-2)}{2}\right\rceil
=
\left\lceil\frac{\rho+2}{2}\right\rceil
\leq
\rho+1
$.
If our algorithm terminated in Step~\ref{ttt2} 
then 
we have 
$\mu=1-\frac{1}{q}=\frac{q-1}{q}$
and 
$p\leq \delta - (q-1)
=
\rho+1
$.
Thus, in either case, 
$p\leq \rho+1$.
The proof of Lemma~\ref{lemma-upper}
shows that we can
select a set of at most $p$ edges in 
$H$
that form a valid solution of our problem.

Conversely, 
the proof of Lemma~\ref{lemma-lower}
shows that 
there exists a perfect matching $M''$ of $\widehat{H}$ of cost
$
\zeta(M'')=x\geq 
\delta
$
such that 
decreasing some set $\cS$ of weights of some set of touchable $2$-edges and $3$-edges of 
$M''$ results in a 
perfect matching $M'$ of cost 
$\zeta(M')<\delta-\rho$
and $\opt(G,u,v)\geq \frac{|\cS|}{\bbig}$.
We now repeat essentially a similar analysis in the previous paragraph for 
$M''$.
If $\cS$ consists of only $2$-edges then 
$|\cS|=
\left\lceil\frac{x - (q-2)}{2}\right\rceil
\geq
\left\lceil\frac{\delta - (q-2)}{2}\right\rceil
=
\left\lceil\frac{\rho+2}{2}\right\rceil
\geq
\frac{\rho+1}{2}
$.
If $\cS$ contained at least one $1$-edge then 
$|\cS|\geq 
\frac{x - (q-1)}{2}
\geq
\frac{\delta - (q-1)}{2}
=
\frac{\rho+1}{2}
$.
Thus it follows that 
$\opt(G,u,v)\geq \frac{\rho+1}{2\bbig}$.

For $\bbig=1$ the following example 
shows that the performance analysis of our algorithm 
is tight (see \FI{fig3-cap} for a pictorial illustration when $m=8$). 
Let $\deg_G(u)=\deg_G(v)=m+1$ for an even integer $m>2$
and assume that $\nbr_G(u)\cap\nbr_G(v)=\emptyset$.
Note that the two graphs $H$ and $\widehat{H}$
are the same when $\bbig=1$ and thus we can directly work with $H$. 
Set the edge weights of $H$ as follows:
$w(u_i,v_j)=1$ for $(i,j)=(1,1),(i,j)=(2,2),\dots,(i,j)=(\frac{m}{2},\frac{m}{2})$,
$w(u_i,v_j)=2$ for 
$(i,j)=(1,1+\frac{m}{2}),(i,j)=(2,2+\frac{m}{2}),\dots,(i,j)=(\frac{m}{2},m)$
and for 
$(i,j)=(1+\frac{m}{2},1),(i,j)=(2+\frac{m}{2},2),\dots,(i,j)=(m,\frac{m}{2})$,
and 
$w(u_i,v_j)=3$ for 
all other edges.
Suppose that we start with the minimum-cost perfect matching 
$M=
\big\{
\{u_i,v_j\} \,|\, i\in\{1,\dots,\frac{m}{2}\}, j=i+\frac{m}{2}
\big\}
\,\cup\,
\big\{
\{u_i,v_j\} \,|\, j\in\{1,\dots,\frac{m}{2}\}, i=j+\frac{m}{2}
\big\}
$
of cost $\zeta(M)=2m$.
Then, we need decrease the weight of every edge in $M$ to $1$, 
thus using $m$ edges.
However, the optimal solution may instead start with 
the minimum-cost perfect matching 
$
M'=
\big\{
\{u_i,v_i\} \,|\, i\in\{1,\dots,m\}
\big\}
$
and need to decrease only the weights of the $3$-edges to $1$, thereby
using $\frac{m}{2}$ edges.
The example can be generalized for other values of $\bbig$.
\hfill{\Pisymbol{pzd}{113}}


\begin{figure}[tb]
\scalebox{1}[1]{\includegraphics{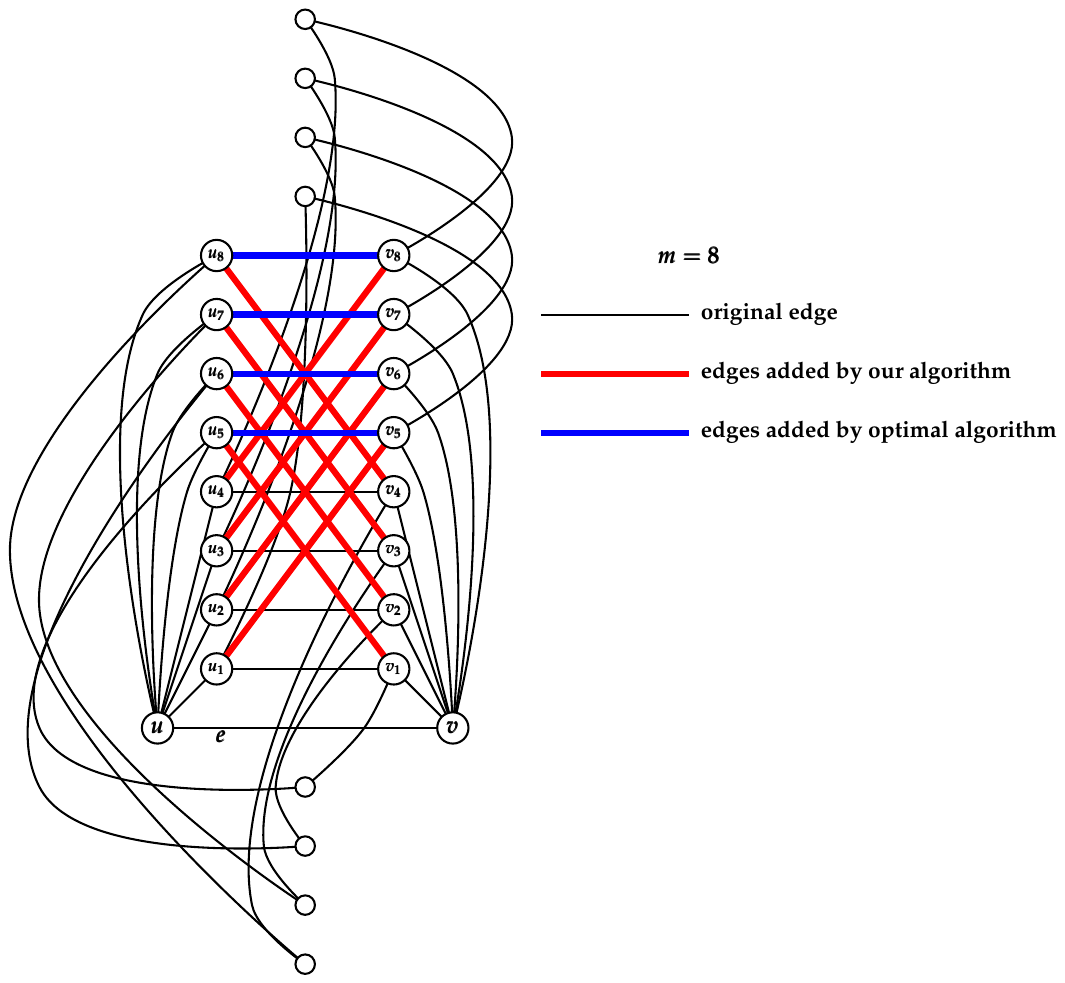}}
\caption{\label{fig3-cap}
Pictorial illustration (for $m=8$) of the example stated in the proof of 
Theorem~\ref{thm-alg-ins-rest}\textbf{\emph{(b)}}
showing the tightness of the algorithm presented there for the case when 
$\bbig=1$.
}
\end{figure}

\subsubsection{Polynomial Time Algorithms for Unrestricted (General) Instances of Problem~\uw-\rt-\ins-\ntp}
\label{sec-uw-rt-ins-ntp-unrest}

\begin{theorem}\label{thm-alg-ins-unrest}
Consider a general instance of 
{\em Problem}~{\rm\uw-\rt-\ins-\ntp}
which need not satisfy property 
\textbf{$(\spadesuit)$} 
and 
suppose that it  
has at least one valid solution.
Then the following claims are true.
\begin{description}
\item[\textbf{\emph{(a)}}]
There is a randomized polynomial time algorithm that returns a valid solution $E'$ of 
{\em Problem}~\uw-\rt-\ins-\ntp\
with a success probability of $1-o(1)$
satisfying 
$|E'|\leq (\abig+\bbig)\times \opt(G,u,v)$. 
\item[\textbf{\emph{(b)}}]
There is a deterministic polynomial time algorithm that returns a valid solution $E'$ of 
{\em Problem}~\uw-\rt-\ins-\ntp\
satisfying 
$|E'|\leq 2(\abig+\bbig)\times \opt(G,u,v)$.
\end{description}
\end{theorem}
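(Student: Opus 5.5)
The plan is to reuse the machinery in the proof of Theorem~\ref{thm-alg-ins-rest} almost verbatim. The only part that depended on property $(\spadesuit)$ was the claim that inserting one edge $\{u_i,v_j\}$ of $G$ changes exactly one weight of $H$, namely $w(u_i,v_j)$, and hence exactly one block of $\widehat{H}$. Without $(\spadesuit)$, inserting $\{x,y\}$ with $x\in\nbr_G(u)\setminus\{v\}$ and $y\in\nbr_G(v)\setminus\{u\}$ can also shorten other distances. If $x=u_i$ and $y=v_j$, then in $H$ the weight of $\{u_i,v_j\}$ drops to $1$. Moreover, pairs $\{u_{i'},v_j\}$ with $u_{i'}\in V_u$ and $x=u_i$ also a node of $V_v$, or with $u_{i'}$ adjacent to $x$, may drop from $3$ to $2$. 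Symmetrically, pairs $\{u_i,v_{j'}\}$ may drop.

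The key structural observation is that every weight change caused by inserting $\{x,y\}$ affects only $H$-edges incident to the copy of $x$ or the copy of $y$, when $x$ or $y$ belongs to both sides. In every case each affected $H$-edge has an endpoint in $\{x,y\}\cap L$ or $\{x,y\}\cap R$. The reason is that any shortened path between a node of $V_u$ and a node of $V_v$ must use the new edge, and lengths are at most $3$. So the modified edges of $\widehat{H}$ are incident to the at most $\abig$ copies of $x$ in $\widehat{L}$ (plus $\abig$ copies of $y$ if $y\in V_u$) or the $\bbig$ copies of $y$ (and $x$) in $\widehat{R}$. Any perfect matching of $\widehat{H}$ therefore uses at most $\abig+\bbig$ modified edges per inserted $G$-edge. (Common neighbours are handled by Proposition~\ref{prop-struct-H}: their zero-cost parts are fixed and cannot decrease further.)

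With this, the plan runs as follows.

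\textbf{Upper bound.} As in Lemma~\ref{lemma-upper}, compute via the exact-weight perfect matching subroutine (the randomized lemma with $K_1,K_2$ scaling) the matchings $\M_x$ and $\M$ minimizing $\widehat{\kappa}$. Then output the $G$-edges corresponding to blocks containing $\cE_{\M}$. Decreasing a touchable $2$- or $3$-edge to $1$ is achieved by inserting the corresponding $G$-edge. Extra decreases of other weights only lower the \emd\ further, so this set is valid and has size at most $\widehat{\kappa}_{\M}$.

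\textbf{Lower bound.} Redo Lemma~\ref{lemma-lower}. Take an optimal insertion set, let $\widehat{H}_\downarrow$ be the resulting graph with all induced decreases, and let $M^\ast$ be its minimum-cost perfect matching of cost $<q$. Let $\cS$ be the set of edges of $M^\ast$ whose weight changed. Raising them back yields a matching $M'$ of $\widehat{H}$ of some cost $x$. Replacing each decrease by a decrease to $1$ only helps, so by Observation~\ref{obs-repeat}, $\widehat{\kappa}_{\M}\le|\cS|\le(\abig+\bbig)\opt(G,u,v)$ by the counting above. This gives part~(a).

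\textbf{Part (b) and the main obstacle.} Part~(b) follows by the same greedy argument as Theorem~\ref{thm-alg-ins-rest}(b). The algorithm picks at most $\rho+1$ edges, while the lower-bound argument gives $|\cS|\ge(\rho+1)/2$, now with $|\cS|\le(\abig+\bbig)\opt$. The main obstacle is making the structural observation rigorous. Specifically, one must check that a partial decrease (for instance $3\to2$) of an edge not in the chosen block never needs to be "paid" separately. One must also check that touchability is preserved: edges at $u$ or $v$ can only decrease through paths through $x$ or $y$, which are still incident to their copies. I would first verify this by case analysis on path lengths $2$ and $3$ through the new edge.
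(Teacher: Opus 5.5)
\paragraph{Verdict: right architecture, but the key counting step has a gap.}
Your overall plan coincides with the paper's. Lemma~\ref{lemma-upper} carries over because extra weight drops only lower the cost. Lemma~\ref{lemma-lower} is redone with $\bbig$ replaced by $\abig+\bbig$. Part (b) reuses the greedy of Theorem~\ref{thm-alg-ins-rest}(b). Your explicit handling of partial $3\to 2$ drops is, if anything, more careful than the paper.

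The gap is in the one genuinely new ingredient: the claim that a matching contains at most $\abig+\bbig$ changed edges per inserted edge.
\begin{itemize}
\item Your enumeration allows changed edges at the $\abig$ copies of $x$ \emph{and} the $\abig$ copies of $y$ in $\widehat{L}$, and at the $\bbig$ copies of each in $\widehat{R}$. That only yields $2(\abig+\bbig)$, so the ``therefore'' does not follow.
\item The appeal to Proposition~\ref{prop-struct-H} does not repair it. That proposition pins all $\bbig$ copies of a common neighbour in $\widehat{R}$, but only $\bbig$ of its $\abig$ copies in $\widehat{L}$.
\item Concretely, if $y$ is common and $x$ is not, you are left with $\abig$ copies of $x$ and $\abig-\bbig$ free copies of $y$ in $\widehat{L}$. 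This gives a bound of $2\abig-\bbig$, which exceeds $\abig+\bbig$ whenever $\abig>2\bbig$.
\end{itemize}

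\paragraph{The missing observation.}
If $z\in V_u$ is $u$, $v$ or a common neighbour, then $\dist_G(z,t)\le 2$ for every $t\in V_v$ (route via $v$). Symmetrically, the same holds for such a $t\in V_v$ (route via $u$). So an insertion can change such a weight only to $1$, i.e.\ only on an inserted pair. This has three consequences:
\begin{itemize}
\item Untouchable edges never change. Hence $\cS$ consists only of $3$-edges and touchable $2$-edges of $M'$, exactly as Observation~\ref{obs-repeat} requires; your touchability worry needs no path argument.
\item Every $3\to 2$ change joins a non-common neighbour $z$ of $u$ to a non-common neighbour $t$ of $v$. A shortening path $z-w-t$ then forces either $z$ to be the $\nbr_G(u)$-endpoint of an inserted edge $\{z,w\}$, or $t$ to be the $\nbr_G(v)$-endpoint of an inserted edge $\{w,t\}$. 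This also covers paths using several new edges, so each changed edge can be charged to one inserted edge.
\item Drops to $1$ occur only on the block joining copies of $x$ in $\widehat{L}$ to copies of $y$ in $\widehat{R}$. When both $x$ and $y$ are common, they also occur on the reversed block, and in that case no $3\to 2$ changes occur at all.
\end{itemize}

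\paragraph{Completing the count.}
For an inserted edge not having both endpoints common, every changed edge charged to it meets a copy of $x$ in $\widehat{L}$ or of $y$ in $\widehat{R}$. A matching contains at most $\abig+\bbig$ such edges. An inserted edge with both endpoints common affects only two $\abig\times\bbig$ blocks. These contribute at most $2\bbig\le\abig+\bbig$ matching edges, using $\abig\ge\bbig$ from $\deg_G(u)\le\deg_G(v)$. Together this gives $|\cS|\le(\abig+\bbig)\opt(G,u,v)$.

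This is the content of the paper's ``mega change'' observation (I)/(II): all changed edges touch copies of $u_i$ in $\widehat{L}$ or of $v_j$ in $\widehat{R}$. The paper silently omits the reversed block, which the $2\bbig$ count makes harmless.
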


\begin{corollary}
If 
$\deg_G(v)=\deg_G(u\,\equiv\,\abig=\bbig=1$ then 
{\em Problem}~\uw-\rt-\ins-\ntp\
admits a randomized $2$-approximation
and 
a deterministic $4$-approximation.
\end{corollary}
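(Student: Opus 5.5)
The statement to prove is the corollary of Theorem~\ref{thm-alg-ins-unrest}, so the plan is a direct specialization. First I will check the hypothesis. With the paper's convention $\deg_G(u)\leq\deg_G(v)$, the condition $\deg_G(v)=\deg_G(u)$ gives $1+\deg_G(u)=1+\deg_G(v)$. Hence $\mathrm{LCM}(1+\deg_G(u),1+\deg_G(v))=1+\deg_G(u)=1+\deg_G(v)$, and so $\abig=\bbig=1$.

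Conversely, suppose $\abig=\bbig=1$. Then the LCM equals both $1+\deg_G(u)$ and $1+\deg_G(v)$, which forces the two degrees to be equal. This is exactly the equivalence already recorded in Section~\ref{basic-notations}. I will simply cite that remark and note that $q=1+\deg_G(u)$, so that $\widehat{H}=H$.

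Next I will substitute $\abig+\bbig=2$ into Theorem~\ref{thm-alg-ins-unrest}.

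\textbf{Randomized case.} Part~(a) yields, with success probability $1-o(1)$, a valid solution $E'$ with $|E'|\leq 2\,\opt(G,u,v)$. This is a randomized $2$-approximation.

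\textbf{Deterministic case.} Part~(b) yields a valid solution with $|E'|\leq 2\cdot 2\,\opt(G,u,v)=4\,\opt(G,u,v)$. This is a deterministic $4$-approximation.

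Both algorithms run in polynomial time by the theorem. I also need the standing assumption that the instance has at least one valid solution. I will either carry it over as in the theorem, or discharge it: by Theorem~\ref{thm-alg-ins-feas}, when $\deg_G(v)=\deg_G(u)<2\deg_G(u)+1$ a valid solution always exists. The latter is the better route, because the corollary then needs no extra hypothesis.

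There is essentially no obstacle here. The only point needing care is stating the feasibility hypothesis correctly, and invoking Theorem~\ref{thm-alg-ins-feas} removes it. I will also remark that $\deg_G(v)=\deg_G(u)\geq 1$ holds, since the edge $\ec$ exists, so the inequality $\deg_G(v)<2\deg_G(u)+1$ is indeed satisfied.
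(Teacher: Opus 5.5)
Your proposal is correct and is exactly the paper's implicit argument: the corollary follows by substituting $\abig=\bbig=1$, i.e.\ $\abig+\bbig=2$, into parts (a) and (b) of Theorem~\ref{thm-alg-ins-unrest}. You also use Theorem~\ref{thm-alg-ins-feas}, via $\deg_G(v)=\deg_G(u)<2\deg_G(u)+1$, to discharge the ``at least one valid solution'' hypothesis that the corollary's statement silently drops; this step is valid and makes the corollary self-contained.
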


\noindent
\textbf{Proof of Theorem~\ref{thm-alg-ins-unrest}}
{\em We reuse the notations, observations and the proofs presented in Section}~\ref{sec-folklore}.
We use the same proof approach as in 
Theorem~\ref{thm-alg-ins-rest},
but we need some new results and modification of some bounds in 
Theorem~\ref{thm-alg-ins-rest}.
Since there may be one or more edges connecting pairs of nodes in $\nbr_G(v)\setminus\{u\}$ or pairs of nodes in 
$\nbr_G(u)\setminus\{v\}$, insertion of an $\{u_i,v_j\}$ edge in $G$ may correspond to change of 
more than one edge in $H$ and $\widehat{H}$.
Thus, we first need to characterize the nature of the effect of change of weight of one edge in $H$
on other edges in $H$ and their corresponding effects in $\widehat{H}$.

\begin{observation}[mega change of edge weights]
Consider the graphs $G$, $H$ and $\widehat{H}$
and suppose that we have inserted some new edges in $G$ 
resulting in the graphs 
$G'$, $H'=(L,R,w')$ and $\widehat{H'}=(\widehat{L},\widehat{R},\widehat{w'})$. 
Suppose now we change the weight of an edge 
$\{u_i,v_j\}$ in $H'$ from $3$ or $2$ to $1$ corresponding to insertion of the edge 
$\{u_i,v_j\}$
in $G'$.
Then the following observations hold and definitions apply.
\begin{description}[leftmargin=10pt]
\item[($\star$)]
Excluding the edge $\{u_i,v_j\}$ itself, 
the only edges in $H'$ 
whose weights may decrease are of two forms: 
\begin{adjustwidth}{0.6cm}{}
\begin{description}
\item[(I)]
an edge 
$\{u_x,v_j\}$ 
for some $u_x\in L$
corresponding to an edge 
$\{u_i,u_x\}$ in $G$,
or 
\item[(II)]
an edge
$\{u_i,v_y\}$ 
for some $v_y\in R$
corresponding to an edge 
$\{v_j,v_y\}$ 
in $G$.
\end{description}
\end{adjustwidth}
In particular such an edge-weight change occurs only when 
$w'(u_x,v_j)=3$ 
in \textbf{(I)}
or
$w'(u_i,v_y)=3$
in \textbf{(II)},
and the weight in that case is changed from $3$ to $2$.
\item[($\star\star$)]
The change in the weight of the edge 
$\{u_x,v_j\}$ 
\emph{(\emph{respectively, edge 
$\{u_i,v_y\}$})} 
in 
\textbf{($\star$)}
lead to a corresponding change, from $3$ to $2$,
of all the edges in the blocks 
$\cB_{x,j}\eqdef\bigcup_{\alpha\{0,1,2,3\}} \cB^\alpha_{x,j}$
\emph{(\emph{respectively, blocks 
$\cB_{i,y}\eqdef\bigcup_{\alpha\{0,1,2,3\}} \cB^\alpha_{i,y}$})}.
We will use the following notations:  
\begin{gather*}
\textstyle
\cC_{i,j}^{(j)} \eqdef 
\bigcup_{
\text{\small$\{u_i,u_x\}$ in $G$}
}  \cB_{x,j}, 
\,\,\,\,
\cC_{i,j}^{(i)} \eqdef 
\bigcup_{
\text{\small$\{v_j,v_y\}$ in $G$}
}  \cB_{i,y}, 
\,\,\,\,
\cC_{i,j}^{(i,j)} \eqdef 
\cC_{i,j}^{(i)} \cup 
\cC_{i,j}^{(j)}
\end{gather*}
\end{description}
\end{observation}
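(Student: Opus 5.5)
The plan is to work directly with shortest-path distances in $G$, using the definition of $H$ from Section~\ref{sec-H-to-anomaly}: the weight $w'(u_x,v_y)$ equals $\dist_{G'}(u_x,v_y)$ for all $u_x\in L=V_u$ and $v_y\in R=V_v$. The first step is the routine fact that inserting one unweighted edge $\{u_i,v_j\}$ into $G'$ can only shorten paths that use the new edge. So for every pair, with all distances taken in $G'$ before the insertion,
\[
\dist_{G'^{+\{u_i,v_j\}}}(u_x,v_y)=\min\big\{\dist_{G'}(u_x,v_y),\ \dist_{G'}(u_x,u_i)+1+\dist_{G'}(v_j,v_y),\ \dist_{G'}(u_x,v_j)+1+\dist_{G'}(u_i,v_y)\big\}.
\]
Next I bound the old weight. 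Every $u_x\in V_u$ and $v_y\in V_v$ are joined by the walk $u_x\!-\!u\!-\!v\!-\!v_y$, so $w'(u_x,v_y)\leq 3$. Hence a strict decrease forces one of the two ``detour'' sums $\dist(\cdot,\cdot)+\dist(\cdot,\cdot)$ to be at most $1$.

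I then split into cases according to that sum. If the sum is $0$, both endpoints coincide with the endpoints of the new edge, so we are looking at the edge $\{u_i,v_j\}$ itself, which is excluded. If the sum is exactly $1$, the new distance is exactly $2$. This is a strict decrease only when the old weight was $3$, which already gives the ``$3$ to $2$'' part of $(\star)$ and shows that no $2$-edge can drop to $1$. For the first detour, a sum of $1$ means either $\dist(u_x,u_i)=1$ and $v_y=v_j$, which is form \textbf{(I)} (an edge $\{u_i,u_x\}$ of $G$), or $u_x=u_i$ and $\dist(v_j,v_y)=1$, which is form \textbf{(II)} (an edge $\{v_j,v_y\}$ of $G$).

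The second detour $\dist(u_x,v_j)+1+\dist(u_i,v_y)$ is the delicate case. A zero term there requires a node that is simultaneously in $L$ and $R$, for example $u_x=v_j$ when $v_j$ is a common neighbour of $u$ and $v$. I will handle this by using the identification of common neighbours from Section~\ref{sec-common-nodes}. Under that identification, $u_x=v_j$ together with $\dist(u_i,v_y)=1$ is the same as the $L$-copy of $v_j$ being adjacent to $u_i$, with $v_y$ a neighbour of $u_i$. I then re-express such a pair as an instance of \textbf{(I)} or \textbf{(II)} after relabeling. I expect this bookkeeping with shared nodes, and with the untouchable endpoints $u$ and $v$ (whose weights to the other side are at most $2$ and so, by the above, cannot change), to be the main obstacle. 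I would first try to dispose of it by showing that any such pair already has old weight at most $2$, since it contains a path of length $2$ through the common node, so it is not affected at all.

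Finally, $(\star\star)$ is immediate from the construction of $\widehat{H}$ in Section~\ref{sec-folklore}. Every edge of the block $\cB_{x,j}$ (respectively $\cB_{i,y}$) is assigned weight $w'(u_x,v_j)$ (respectively $w'(u_i,v_y)$). A change of that single $H'$-weight from $3$ to $2$ therefore propagates to all $\abig\bbig$ copies in the block. Taking the union over all $G$-neighbours $u_x$ of $u_i$, and over all $G$-neighbours $v_y$ of $v_j$, gives exactly the sets $\cC^{(j)}_{i,j}$, $\cC^{(i)}_{i,j}$ and $\cC^{(i,j)}_{i,j}$ as defined.
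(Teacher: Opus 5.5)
Your route (the one-edge update formula for $\dist_{G'}$, the bound $w'\le 3$ via $u_x,u,v,v_y$, and a case split on the two detour sums) is the right one. The paper states the observation without proof, and this computation is what it tacitly relies on. Your handling of the $3\to 2$ claim, of the edges at $u$ or $v$, and of $(\star\star)$ is fine. The problems arise where an endpoint of the inserted edge is a common neighbour of $u$ and $v$, but not in the place you looked.

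\textbf{First gap.} In the first detour you only obtain $\dist_{G'}(u_x,u_i)=1$, i.e.\ $\{u_i,u_x\}\in E(G')$, and you then call it an edge of $G$. That step fails. Earlier restricted insertions join $\nbr_G(u)\setminus\{v\}$ to $\nbr_G(v)\setminus\{u\}$, so when $u_i$ is a common neighbour it can be adjacent to $u_x$ only through an inserted edge, and the pair still drops from $3$ to $2$. For instance, take $E(G)=\{uv,ua,uc,vc,vd\}$, suppose $\{a,c\}$ has already been inserted, and now insert $\{c,d\}$ (so $u_i=c$, $v_j=d$). Then $\dist_{G'}(a,d)=3$ becomes $2$ via $a,c,d$, but $\{a,c\}\notin E(G)$. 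What your argument actually proves is (I)/(II) with adjacency in $G'$, and the sets $\cC^{(i)}_{i,j},\cC^{(j)}_{i,j}$ must then be formed in $G'$. You should state this explicitly, since the ``in $G$'' version is false.

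\textbf{Second gap.} In the second detour, the sum-$0$ case $u_x=v_j$, $v_y=u_i$ is not ``the edge $\{u_i,v_j\}$ itself'' as an edge of $H'$. It occurs exactly when both $u_i$ and $v_j$ are common neighbours, and it is the swapped edge joining the $L$-copy of $v_j$ to the $R$-copy of $u_i$. Its weight drops from $2$ to $1$, so it is neither of form (I)/(II) nor a $3\to 2$ change. You may exclude it only by explicitly reading ``the edge itself'' as both $H'$-edges joining copies of the $G$-nodes $u_i,v_j$. You should also flag a consequence: the block of this swapped edge has no left endpoint among the copies of $u_i$ and no right endpoint among the copies of $v_j$. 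So it lies outside $\cB_{i,j}\cup\cC^{(i,j)}_{i,j}$ and is not covered by the $\abig+\bbig$ count used in Lemma~\ref{lemma-lower2}.

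\textbf{The case you flagged is harmless.} The sum-$1$ cases of the second detour that you singled out as delicate do not change any weight. If $u_x=v_j$ then $\dist_{G'}(v_j,v_y)\le 2$ via $v$, and if $v_y=u_i$ then $\dist_{G'}(u_x,u_i)\le 2$ via $u$. Your fallback argument (old weight at most $2$) is the correct one. Drop the plan to ``re-express such a pair as an instance of (I) or (II) after relabeling'', since these pairs are not affected at all.
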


\smallskip
\noindent
\textbf{(\emph{a})}
We use the same proof approach as in 
Theorem~\ref{thm-alg-ins-rest}(\emph{a}) but with a modified version of 
Lemma~\ref{lemma-lower}.

\begin{lemma}[\emph{cf}. Lemma~\ref{lemma-lower}]~\label{lemma-lower2}
$
\widehat{\kappa}_{\M} \leq (a+b) \times \opt(G,u,v)
$.
\end{lemma}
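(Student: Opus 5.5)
We follow the proof of Lemma~\ref{lemma-lower} step by step; the only new ingredient is the bookkeeping for the ``mega change'' effects. Let $\eopt(G,u,v)=\{e_1,\dots,e_{\opt(G,u,v)}\}$ be an optimal set of inserted edges, with $e_t=\{u_{i_t},v_{j_t}\}$. Inserting all of them in $G$ produces a graph $H'$ with weights $w'$ and its expansion $\widehat{H'}$. By the observation on mega changes, every edge of $H$ whose weight drops lies either in some block $\cB_{i_t,j_t}$ (weight $3$ or $2$ down to $1$) or in some $\cC^{(i_t,j_t)}_{i_t,j_t}$ (weight $3$ down to $2$). Let $\widehat{\mathcal{D}}$ be the union over $t$ of the edges of $\widehat{H}$ in $\cB_{i_t,j_t}\cup\cC^{(i_t,j_t)}_{i_t,j_t}$ whose weights actually decreased. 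Let $M^\ast$ be a minimum-cost perfect matching of $\widehat{H'}$. By Theorem~\ref{thm-folklore}, $\zeta(M^\ast)<q$.

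The key counting step is to bound $|M^\ast\cap\widehat{\mathcal{D}}|$ by $(\abig+\bbig)\opt(G,u,v)$. Fix one inserted edge $e_t=\{u_i,v_j\}$. Every edge of $\widehat{H}$ whose weight decreased because of $e_t$ lies in $\cB_{i,j}\cup\cC^{(j)}_{i,j}\cup\cC^{(i)}_{i,j}$. All edges in $\cB_{i,j}\cup\cC^{(i)}_{i,j}$ have a left endpoint among the $\abig$ copies $u_{i,1},\dots,u_{i,\abig}$. All edges in $\cC^{(j)}_{i,j}$ have a right endpoint among the $\bbig$ copies $v_{j,1},\dots,v_{j,\bbig}$. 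A matching uses each node at most once, so $M^\ast$ contains at most $\abig+\bbig$ edges whose weight decrease is charged to $e_t$. Summing over $t$ gives $|\cS|\le(\abig+\bbig)\opt(G,u,v)$, where $\cS=M^\ast\cap\widehat{\mathcal{D}}$. An edge whose decrease has several causes is charged to any one of them, so the bound still holds.

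Now restore the original weights $\widehat{w}$ on the edges of $\cS$. Each of these edges goes back from $1$ or $2$ to its original value $2$ or $3$, and $M^\ast$ becomes a perfect matching $M'$ of $\widehat{H}$ with some cost $x\ge\mcpm{{\widehat{H}}}$.

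The main obstacle is invoking Observation~\ref{obs-repeat}(a). That observation measures the number of edges of $M'$ whose weights must be lowered \emph{to $1$} to push the cost below $q$. In $M^\ast$, however, some edges of $\cS$ were lowered only from $3$ to $2$. The fix is to note that lowering those same edges all the way to $1$ can only make the cost smaller. So the set $\cS$, with every edge lowered to $1$, drives $M'$ below cost $q$. This shows $M'$ is not an unwanted matching and that $\widehat{\kappa}_{M'}\le|\cS|$.

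There is one remaining subtlety: $\widehat{\kappa}_{M'}$ only counts $3$-edges and touchable $2$-edges. So we must check that every edge of $\cS$ is touchable in $\widehat{H}$. For the block edges this holds because restricted insertions avoid $u$ and $v$. For the indirectly changed edges of type (I) and (II), an endpoint equal to $u$ or $v$ would give an untouchable edge of original weight $2$, never $3$, so no such edge appears in $\cS$. This is the step I would verify most carefully.

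With that in place, Observation~\ref{obs-repeat}(b),(c) gives $\widehat{\kappa}_{\M}\le\widehat{\kappa}_{\M_x}\le\widehat{\kappa}_{M'}\le|\cS|\le(\abig+\bbig)\opt(G,u,v)$, as claimed.
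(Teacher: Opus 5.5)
Your proposal is correct and follows the paper's route: charge each weight drop that hits $M^\ast$ to an inserted edge via the mega-change observation, bound the charge per inserted edge by $\abig+\bbig$, and then rerun the proof of Lemma~\ref{lemma-lower} through Observation~\ref{obs-repeat}. Grouping $\cB_{i,j}$ with $\cC^{(i)}_{i,j}$ by their shared left endpoints justifies the $\abig+\bbig$ count more cleanly than the paper's three-matching decomposition, and your handling of the $3\to 2$ drops and of touchability spells out what the paper leaves to ``a proof similar to'' Lemma~\ref{lemma-lower} (one case that neither you nor the paper mentions is an inserted edge whose two endpoints are both common neighbors of $u$ and $v$, which also lowers the reversed block; such an edge causes no $3\to 2$ drops, so its two blocks still meet $M^\ast$ in at most $2\bbig\le\abig+\bbig$ edges).
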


\begin{proof}
We use the proof approach as in 
Lemma~\ref{lemma-lower}
and 
point out the modifications needed.
Let 
$e_j^{\alpha_j}
$
be the edge
$ \{ u_{\ell_j}, v_{p_j} \} $
in $H$ where 
$
\ell_j\in\{1,\dots,r\},
p_j\in\{1,\dots,s\},
w( u_{\ell_j}, v_{p_j}) =\alpha_j
$.
In the proof of 
Lemma~\ref{lemma-lower}
we argued that 
increasing the weight of the edge 
$ \{ u_{\ell_j}, v_{p_j} \} $
can only change the weights of at most $\bbig$ edges corresponding to a 
matching in the block 
$
\cB_j^{\alpha_j}
$.
In our present case, however, 
increasing the weight of the edge 
$ \{ u_{\ell_j}, v_{p_j} \} $
can change the weights 
of a matching of 
$\cC_{\ell_j,p_j}^{(\ell_j,p_j)} \cup \cB_j^{\alpha_j}$
which is a union 
of a matching 
in 
$\cC_{\ell_j,p_j}^{(\ell_j)}$,
a matching in 
$\cC_{\ell_j,p_j  i,j}^{(p_j)}$,
and 
a matching in 
$
\cB_j^{\alpha_j}
$.
The maximum number of edges in a matching 
in 
$\cC_{\ell_j,p_j}^{(\ell_j)}$
(\emph{resp}., 
$\cC_{\ell_j,p_j}^{(p_j)}$)
is $\abig$ 
(\emph{resp}., $\bbig$) since one end-point of such a matching must be 
a node from 
$
\{ u_{\ell_j,k} \,|\, k\in\{1,\dots,a\}
$
(\emph{resp}., 
a node from 
$
\{ v_{p_j,y} \,|\, y\in\{1,\dots,b\}
$),
\IE, in the notations used in 
the proof of Lemma~\ref{lemma-lower}
we have 
$
\widehat{\eopt}(G,u,v) \subseteq 
\bigcup_{j=1}^{\opt(G,u,v)} 
\big ( \cB_j^{\alpha_j} \cup
\cC_{\ell_j,p_j}^{(\ell_j,p_j)} \big )
$
and 
$
|\cS|=
|M^\ast \cap \widehat{\eopt}(G,u,v) |
\leq (a+b)\times 
\opt(G,u,v)
$.
Thus, 
increasing the weight of the edge 
$ \{ u_{\ell_j}, v_{p_j} \} $
can change the weights of at most $\abig+\bbig$ edges.
With this modification, a proof similar to 
the proof of 
Lemma~\ref{lemma-lower}
proves the bound that 
$
\widehat{\kappa}_{\M} \leq (a+b) \times \opt(G,u,v)
$.
\end{proof}

\smallskip
\noindent
\textbf{(\emph{b})}
The same algorithm as in 
Theorem~\ref{thm-alg-ins-rest}(\emph{b})
works after noting that 
decreasing the weight of the edge 
$ \{ u_{\ell_j}, v_{p_j} \} $
can change the weights of at most $\abig+\bbig$ edges.
\hfill{\Pisymbol{pzd}{113}}

\subsection{$\NP$-completeness of \uw-\rt-\del-\ptn}
\label{sec-uw-rt-del-ptn}

For reader's convenience, we write down 
Problem~\wt-\rt-\del-\ntp\
precisely as follows. 
The input to the problem is an 
undirected unweighted graph $G=(V,E)$ 
and
an edge $\ec=\{u,v\}\in E$ such that $\text{\ric}_{\!\!G} (\ec)>0$.
A valid solution of the problem is a graph 
$G^{-E'}$ for a subset of edges $E'\subset E$ 
satisfying $\{u',v'\}\in E' \Rightarrow u',v'\notin \{u,v\}$
such that $\text{\ric}_{\!\!G^{-E'}} (\ec)<0$,
and the 
objective is to \emph{minimize} $|E'|$.
We use the notation $\eopt(G,u,v)$ to denote an optimal solution of the problem 
of value $\opt(G,u,v)=|\eopt(G,u,v)|$.

\begin{theorem}\label{thm-lower}~\\
\textbf{\emph{(\emph{a})}}
It is possible to decide in polynomial time if 
{\em Problem}~\uw-\rt-\del-\ptn\ 
has a valid solution.
Moreover, 
if $\deg_G(v)>\frac{3}{2}\eta+5$ where 
$\eta$ is the number of common neighbors of $u$ and $v$ 
then there is always a valid solution.

\smallskip
\noindent
\textbf{\emph{(\emph{b})}}
{\em Problem}~\uw-\rt-\del-\ptn\ is {\em NP}-complete
even under the constraint $\deg_G(u)=\deg_G(v)$.
\end{theorem}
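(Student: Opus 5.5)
For part (a) I would mirror the proof of Theorem~\ref{thm-alg-ins-feas}. In the unweighted setting deletions can only increase distances, so $\text{\emd}$ is monotone non-decreasing under deletions. Consequently the worst case for the curvature is obtained by deleting \emph{every} permissible edge, i.e.\ every edge of $E$ not incident to $u$ or $v$. So the feasibility test is: form $G'=G^{-E''}$ with $E''$ this maximal deletable set, compute $\text{\ric}_{\!\!G'}(\ec)$ in polynomial time, and answer yes iff it is negative. (If the strict sign is an issue, a subset might hit exactly zero, but monotonicity still makes the full deletion the extreme case.)

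For the sufficient condition I would lower-bound $\text{\emd}_{\!\!G'}(u,v)$ via Proposition~\ref{prop-struct-H}, which lets us assume that mass $\frac{1}{1+\deg_G(v)}$ is shipped at zero cost for $u,v$ and for each of the $\eta$ common neighbours. In $G'$ the only surviving edges are those incident to $u$ or $v$. Hence the distances satisfy:
\begin{itemize}
\item a non-common neighbour of $u$ to a non-common neighbour of $v$ is at least $3$ (actually exactly $3$ via $u$–$v$);
\item a common neighbour to any other neighbour is $1$ or $2$.
\end{itemize}
Counting the mass that must travel distance $\ge 2$ or $3$ and using $\deg_G(u)\le\deg_G(v)$, I would aim to show $\text{\emd}>1$ whenever $\deg_G(v)>\frac32\eta+5$. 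The constants $\frac32$ and $5$ should fall out of bounding the distance-$1$ mass by roughly $\eta$ units plus the $u,v$ contributions. This bookkeeping is the most delicate part of (a), but it is routine.

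For part (b), membership in $\NP$ is immediate since curvature is computable in polynomial time. For hardness I would exploit the connection to perfect matching blockers mentioned in the synopsis, reducing from a known $\NP$-hard blocker problem for bipartite graphs (as in~\cite{LACROIX201225,ZENKLUSEN20094306}): delete the minimum number of edges so that no perfect matching uses only edges of a given cost class (equivalently, raise the min-cost perfect matching). Take $\deg_G(u)=\deg_G(v)$ so $\abig=\bbig=1$ and $\widehat{H}=H$, with no common neighbours. Then edges of $G$ between $\nbr_G(u)\setminus\{v\}$ and $\nbr_G(v)\setminus\{u\}$ are exactly the $1$-edges of $H$. With property $(\spadesuit)$ imposed, deleting such an edge raises that entry from $1$ to $3$ and affects nothing else. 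The two untouchable cross pairs contribute the fixed cost from Corollary~\ref{coro-struct-H}.

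The reduction embeds the blocker instance's bipartite graph $B$ as the $1$-edges between $N(u)\setminus\{v\}$ and $N(v)\setminus\{u\}$, padding with isolated filler neighbours on both sides so that degrees are equal and the threshold is exactly right. The goal is that curvature is positive iff $H$ has a perfect matching with few non-$1$ edges, i.e.\ iff the matching number of $B$ exceeds a threshold $t$. Deleting $k$ edges then makes the curvature non-positive iff $k$ edges reduce the matching number of $B$ to at most $t$. That is the (NP-hard) $d$-most-vital-edges / matching-blocker problem in bipartite graphs~\cite{ZENKLUSEN20094306}.

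The main obstacle is calibrating this threshold: the curvature must be positive exactly when $\nu(B)\ge t+1$. With $\deg =d$, $q=d+1$, and cost $=2\cdot 1(\text{untouchable}) +\nu\cdot 1+3(d-1-\nu)$ roughly, the padding amounts must be tuned so that the sign flips between $\nu=t$ and $\nu=t+1$. I would try adding filler nodes adjacent to $v$'s side (cost $3$ otherwise), together with a few forced $1$-edges, and solve the resulting linear equations.
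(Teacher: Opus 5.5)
Your feasibility test in (a) matches the paper's: delete every permissible edge and test the sign, which is the extreme case by monotonicity under deletion. The sufficient condition, however, is a genuine gap. You defer it as routine bookkeeping, but it cannot be carried out, because the claim is false.

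Your own distance table shows why. Each of the $\eta$ common neighbours, like $u$, carries a surplus $\frac{1}{\deg_G(u)+1}-\frac{1}{\deg_G(v)+1}$. That surplus reaches the private neighbours $\nbr_G(v)\setminus(\nbr_G(u)\cup\{u\})$ of $v$ at distance $2$. If $\nbr_G(u)\subseteq\nbr_G(v)\cup\{v\}$, nothing has to travel distance $3$ at all.

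Concretely, take $\nbr_G(u)=\{v,c_1,c_2,c_3\}$, $\nbr_G(v)=\{u,c_1,c_2,c_3,b_1,\dots,b_6\}$ and no further edges. Then $\eta=3$ and $\deg_G(v)=10>\frac32\cdot 3+5$, and no edge is deletable. Let every node of $V_u$ keep $\frac1{11}$ and send its surplus $\frac15-\frac1{11}=\frac6{55}$ to the $b_j$'s. This costs $\frac6{55}(1+2\cdot 4)=\frac{54}{55}<1$, so $\textsc{Ric}_G(\ec)>0$ and no valid solution exists.

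The paper's count has the same hole: it lets only $v$ and $u$ serve the private neighbours of $v$ below distance $3$. What is true is that in the fully deleted graph $G'$ one has exactly $\textsc{Emd}_{G'}(u,v)=3-\frac{\eta+3}{\deg_G(u)+1}-\frac{2\eta+3}{\deg_G(v)+1}$. The upper bound is the plan above, with $u$'s private neighbours shipping everything at distance $3$. The matching lower bound is the $1$-Lipschitz potential equal to $0$ on $u$'s private neighbours, $1$ on $u$ and the common neighbours, $2$ on $v$, and $3$ on $v$'s private neighbours. This yields a correct sufficient condition such as $\deg_G(v)>2\eta+5$.

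For (b) your route is the paper's: a bipartite perfect-matching-blocker instance, $\deg_G(u)=\deg_G(v)$ so that $\widehat H=H$, and blocker edges realised as $1$-edges between the two neighbourhoods. But the calibration you leave open cannot be done with the ingredients you list. Take $\deg_G(u)=\deg_G(v)=d$ with no common neighbours. Proposition~\ref{prop-struct-H} pairs $u$ with $u$ and $v$ with $v$ at cost $0$; Corollary~\ref{coro-struct-H} gives $2(\bbig-\abig)=0$ untouchable edges, not a fixed cost of $2$. Every other pair is at distance $1$ or $3$. So the minimum cost is $3(d-1)-2\nu'$ against the threshold $q=d+1$, where $\nu'$ is the maximum matching of $1$-edges.

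Hence the curvature is positive iff $\nu'=d-1$, zero iff $\nu'=d-2$, and negative iff $\nu'\le d-3$, whatever padding you add. Isolated fillers merely make the starting curvature non-positive. ``Forced'' $1$-edges do not exist, since every edge between $\nbr_G(u)\setminus\{v\}$ and $\nbr_G(v)\setminus\{u\}$ is deletable. The problem demands $\textsc{Ric}<0$, not ``non-positive'' as you wrote, so losing one unit of matching never suffices.

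The missing idea is an even-distance pair that shifts the parity. The paper adds $u_{n+1}\in\nbr_G(u)$ and $v_{n+1}\in\nbr_G(v)$ joined through a new node $x\notin V_u\cup V_v$, so their distance is $2$. With $|L_0|=|R_0|=n$ and $q=n+3$, the minimum cost is $n+2=q-1$ if $H_0$ has a perfect matching and at least $n+4=q+1$ otherwise. Deleting an edge at $x$ only lifts the cost to $q$. A strict sign change is then exactly a perfect-matching blocker, and the NP-completeness from~\cite{LACROIX201225} applies with no threshold to tune. For your non-strict target, the unpadded construction would in fact already reduce directly from that problem.
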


\begin{proof}
{\em We reuse the notations, observations and the proofs presented in Section}~\ref{sec-folklore}.

\medskip
\noindent
\textbf{(\emph{a})}
We delete every possible edge that we are allowed to delete resulting in 
a graph $G'$ and check if 
$\text{\ric}_{\!\!G'}(\ec)<0$;
if not, the problem has no valid solution.
Let $\eta\geq 0$ be the number of common neighbors of $u$ and $v$.
For $j\in \{0,1,\dots,\eta+1 \}$,
by Proposition~\ref{prop-struct-H}
in $G'$
demand for the node $v_{s-j}$ is already satisfied at zero transportation cost 
and the node $u_{s-j}$ has an amount of 
$\frac{1}{r} - \frac{1}{s} $
left to transport to other nodes. 
Note that
after edge removals 
$
\dist_{G'}( u_{s-j}, v_{s-j'} ) =3
$
for $j,j'>\eta+1$
and 
$
\dist_{G'}( u_{s-j}, v_{s-j'} ) =2
$
for $j>1,j'>\eta+1$.
We get a lower bound for 
$\text{\emd}_{\!\!G'}(u,v)$
from the following considerations.
The nodes 
$v_1,\dots, v_{s-(\eta+2)}$
has a total demand of 
$
\frac{s-\eta-2}{s}
$.
Out of this amount, they may receive 
an amount of 
$\frac{1}{r} - \frac{1}{s} $ by paths of length $1$ (from $v_s$),
an amount of 
$\frac{1}{r} - \frac{1}{s} $ by paths of length $2$ (from $v_{s-1}$),
and the remaining amount 
$
\frac{s-\eta-2}{s}
-
\frac{2}{r} + \frac{2}{s}
$
must be received by paths of length $3$.
Thus, 
$\text{\emd}_{\!\!G}(u,v)
\geq 
\frac{3}{r} - \frac{3}{s}
+
\frac{3s-3\eta-6}{s}
-
\frac{6}{r} + \frac{6}{s}
=
\frac{3s-3\eta-3}{s}
-
\frac{3}{r} 
$.
Therefore 
$\text{\emd}_{\!\!G}(u,v)>1$ 
can be obtained by setting 
$
\frac{3s-3\eta-3}{s}
-
\frac{3}{r} 
>1
$.
Noting that $\eta+2\leq r\leq s$,
the above inequality is satisfied 
if 
$
\frac{3s-3\eta-3}{s}
-
\frac{3}{\eta+2} 
>1
$,
which in turn is satisfied provided 
$s>\frac{3}{2}\eta+6$.

\medskip
\noindent
\textbf{(\emph{b})}
We reduce from 
the minimum blocker perfect matching problem (\textsc{Mbpmb}) 
problem~\cite{LACROIX201225} is as follows:
{\em 
given a 
bipartite graph $H_0=(L_0,R_0,E_0)$ with $|L_0|=|R_0|=n$ which has at least one perfect matching,  
find a 
subset of edges $B\subseteq E_0$ of \emph{minimum} cardinality such that 
$H_0^{-B}$
does not admit a perfect matching}.
\textsc{Mbpmb}
was shown to be NP-complete in~\cite{LACROIX201225}.
We consider such a bipartite graph $H_0$ with 
$
L_0=\{u_1,\dots,u_n\},
R_0=\{v_1,\dots,v_n\},
$
and 
construct the graph $H=(L,R,w)$ of 
Section~\ref{sec-H-to-anomaly} in the following manner:
$
L=L_0\cup \{ u_{n+1}, u_{n+2}=u_{r-1}=u, u_{n+3}=u_r=v \},
R=R_0\cup \{ v_{n+1}, v_{n+2}=v_{r-1}=u, v_{n+3}=v_r=v \}
$,
$
w(u_i,v_j)=1
$
for $i,j\in\{1,\dots,n\}$ and $\{u_i,v_j\}\in E_0$,
$
w(u_i,v_j)=3
$
for $i,j\in\{1,\dots,n\}$ and $\{u_i,v_j\}\notin E_0$,
and 
$
w(u_{n+1},v_{n+1})=2
$, 
where $u_{n+1}$ and $v_{n+1}$ are two new nodes. 
In the original graph $G=(V,E)$, 
$\nbr_G(u)=L\setminus \{u\}$, 
$\nbr_G(v)=R\setminus \{v\}$, 
$\deg_G(u)=
\deg_G(v)=
|L|-1 =|R|-1 = n+2$, 
and 
$w(u_{n+1},v_{n+1})=2$ is implemented by adding a new node $x$ and adding the two edges 
$\{u_{n+1},x\},
\{v_{n+1},x\}$ (the node $x$ is not connected to any other node in $G$).
Since $|L|=|R|$ implies $\bbig=1$, the two graphs $H$ and $\widehat{H}$ are the same,
using Proposition~\ref{prop-struct-H}
a minimum-cost perfect matching of $H$ consists of a perfect matching of 
$H_0$ (consisting of $n$ $1$-edges), two edges of weight $0$ and one edge of weight $2$.
Thus, using Theorem~\ref{thm-folklore} we get 
$
\text{\ric}_{\!\!G}(u,v)
=
1- \text{\emd}_{\!\!G}(u,v)
=
1-\text{\emd}_{\!\!H}(L,R,\PP_L,\PP_R)
=
1-\frac{n+2}{n+3}
=\frac{1}{n+3}>0
$.

Consider a solution $B\subseteq E_0$ of  
\textsc{Mbpmb}.
Then, after changing the weights of the edges in $B$ from $1$ to $3$ in $H$, $H$ does not contain a perfect 
matching of $1$-edges and so a minimum-cost perfect matching of $H$ must contain at least one $3$-edge, 
giving 
$
\text{\ric}_{\!\!G}(u,v)
=
1- \text{\emd}_{\!\!G}(u,v)
=
1-\text{\emd}_{\!\!H}(L,R,\PP_L,\PP_R)
=
1-\frac{n+4}{n+3}
=-\frac{1}{n+3}<0
$.
Conversely, suppose that $E'$ is a valid solution of 
\uw-\rt-\del-\ptn. 
Then, $H^{-E'}$ cannot contain a perfect matching of the $1$-edges and thus 
$E'$ is a valid solution of 
\textsc{Mbpmb}.
\end{proof}

\section{Unrestricted Edge Insertions for Weighted Graphs
(\wt-\ut-\ins-\ntp)}
\label{wt-ut-ins-ntp}

For reader's convenience, we write down 
Problem~\wt-\ut-\ins-\ntp\
precisely as follows. 
The input to the problem is an 
undirected weighted graph $G=(V,E,w)$ 
with each edge weight being an integer from the set $\{1,2,\dots,W\}$,
and
an edge $\ec=\{u,v\}\in E$ such that $\text{\ric}_{\!\!G} (\ec)<0$.
A valid solution of the problem is a graph $G^{+E'}$ for a set of node pairs 
$E'=\{\{x,y\}\,|\, x\neq y,\{x,y\}\notin E \}$
such that $\text{\ric}_{\!\!G^{+E'}} (\ec)>0$, and the 
objective is to \emph{minimize} $|E'|$.
We use the notation $\eopt(G,u,v)$ to denote an optimal solution of the problem 
of value $\opt(G,u,v)=|\eopt(G,u,v)|$.

\begin{theorem}\label{thm-hard-ins}$\!\!$\footnote{It is not uncommon that some inapproximability results in 
computational complexity holds up to a specific value of a constant $\eps>0$ as opposed to all values of $\eps$;
for example, it is hard to approximate 
the unique coverage problem within a factor of $\Omega(1/\log^\sigma n)$, for some constant $\sigma$ depending
on $\eps$, assuming that $\NP\not\subseteq\mathrm{BPTIME}(2^{n^\eps})$ for some $\eps>0$~\cite{DFHS08}.}
Assuming \emph{P}$\neq$\emph{NP},
for some constant $\eps>0$ 
Problem \wt-\ut-\ins-\ntp\ cannot be approximated within a factor of 
$\eps\ln(\deg_G(v))$ 
in polynomial time
even if $\bbig=\frac{\mathrm{LCM}(\deg_G(u)+1,\deg_G(v)+1)  }{\deg_G(v)+1 }=1$.
\end{theorem}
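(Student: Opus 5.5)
We will give an approximation-preserving reduction from the minimum set cover problem. We use the inapproximability result of Feige (or its refinement by Dinur--Steurer): for some constant $\eps_0>0$ it is $\NP$-hard to approximate set cover on a universe $\cU=\{a_1,\dots,a_n\}$ with $m\le n^{c}$ sets within $\eps_0\ln n$. The gadget reuses the construction from the proof of Theorem~\ref{thm-alg2-ins}(\emph{ii}), adapted to the unrestricted insertion model.

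\textbf{Construction.} Node $u$ gets a solution neighbor $u_{\cT}$, and we set $w(u,v)\in\{1,2\}$. By Observation~\ref{obs-inswt}(\emph{b}), $\dist(u,v)$ then never changes, and we may assume every inserted edge has weight~$1$. Node $v$ gets the following neighbors, with weights chosen as in the earlier proof:
\begin{itemize}
\item element neighbors $v_{a_i}$, each blown up into polynomially many copies so that elements dominate the demand;
\item set neighbors $v_{\cS_j}$, joined by weight-$1$ membership edges to the element nodes they contain;
\item weight-$1$ sink neighbors, added in a number chosen so that $1+\deg_G(v)$ is a multiple of $1+\deg_G(u)$. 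This gives $\bbig=1$.
\end{itemize}
The numbers are tuned so that the dirt at $u_{\cT}$ exactly equals the total demand of the element (and set) nodes. In addition, $\mathrm{Ric}$ becomes positive if and only if essentially every element node is within distance $2$ of $u_{\cT}$, with only a slack of $O(1/\delta)$ per unit. This mirrors the calculation giving $2-\frac{\kappa+4}{\delta}<2$ in the earlier proof. Since $\deg_G(v)=\mathrm{poly}(n)$, we get $\ln\deg_G(v)=\Theta(\ln n)$, so an $\eps\ln\deg_G(v)$ gap translates into an $\eps_0\ln n$ gap with $\eps=\Theta(\eps_0)$.

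\textbf{Completeness.} Given a set cover of size $k$, insert the edges $\{u_{\cT},v_{\cS_j}\}$ for the chosen sets. Shipping element demand along paths of length $2$ then yields $\mathrm{Emd}<\dist(u,v)$, exactly as in part (\emph{a}) of the earlier proof.

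\textbf{Soundness.} This is the main obstacle, because unrestricted insertions are far more powerful than restricted ones. An inserted edge can join $u$ directly to element nodes, join element nodes to each other, join $u_{\cT}$ to sinks, and so on. We need to show that any $k'$ inserted edges making the curvature positive can be converted into a set cover of size $O(k')$.

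The plan is a charging argument. Each inserted edge brings, say, at most one element node into distance $\le 1$ or $\le 2$ of a supply node, unless that edge is incident to a set node. The blow-up by $n^{10c}$ copies ensures that edges touching individual element copies can serve only a negligible fraction of the demand. Hence almost all element demand must be served through set nodes adjacent, after insertion, to $u_{\cT}$, $u$ or $v$. The sets whose nodes receive an inserted edge then form a cover of all but a tiny fraction of the elements. This remainder can be covered with at most one set per leftover original element, and we argue the number of such leftover elements is at most $O(k')$.

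Two details need care. First, we must make the distances from $u$ and $v$ to the element nodes large, e.g.\ weight-$3$ edges with the sinks absorbing $u$'s and $v$'s dirt. Second, we must bound the number of leftover elements precisely. This accounting, done with an inequality chain like the one in part (\emph{b}) of the earlier proof, is the step where we expect the constants to need adjustment. Combining completeness and soundness gives an approximation-preserving reduction with constant loss, which yields the claimed $\eps\ln(\deg_G(v))$ hardness with $\bbig=1$.
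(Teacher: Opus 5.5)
\textbf{Verdict: the soundness direction has a genuine gap, and the gadget you reuse is not sound in the unrestricted model.}

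You assume that, as in the restricted case, positive curvature forces essentially all element demand to lie within distance $2$ of $u_{\cT}$, with only $O(1/\delta)$ slack. That slack exists only in the restricted model, where $\deg(u)$ and $\deg(v)$ are frozen and every inserted edge joins $u_{\cT}$ to a single neighbor of $v$. Unrestricted insertions at $u$ change $\deg(u)$, hence every supply $1/(1+\deg(u))$, and they create new supply nodes. An insertion such as $\{u_{\cT},v\}$ lets $u_{\cT}$ serve all sinks at cost $2$. Each such change moves the transport cost by $\Theta(1)$. Your charging argument only counts element nodes brought within distance $2$, so it does not see this.

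Concretely, take the gadget of Theorem~\ref{thm-alg2-ins}(ii) as you propose to reuse it, including the weight-$1$ clique on the set nodes. That clique is what makes your completeness step work ``exactly as in part (a)''. Insert only the two edges $\{u,v_{\cS_1}\}$ and $\{u_{\cT},v\}$. Then:
\begin{itemize}
\item $V_u=\{u,v,u_{\cT},v_{\cS_1}\}$, each node carrying $\frac14$. With $\delta'=\delta+1$, each of these four nodes first serves its own demand $\frac1{\delta'}$ at cost $0$.
\item $v$ ships its remaining $\frac14-\frac1{\delta'}$ to sinks at cost $1$.
\item $u_{\cT}$ ships its remainder to sinks at cost $2$ through the new edge.
\item $u$ covers the rest of the sink demand at cost $3$. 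It sends its leftover to element and set nodes at cost at most $3$, along $u$--$v_{\cS_1}$--clique--element.
\item $v_{\cS_1}$ ships to element and set nodes at cost at most $2$.
\end{itemize}
The total is at most $(1+2+3+2)(\frac14-\frac1{\delta'})=2-\frac{8}{\delta'}<2=\dist(u,v)$. So two insertions make the curvature positive for every set system. This is not a matter of adjusting constants in the inequality chain. What is missing is a mechanism that makes soundness immune to degree changes at $u$ and $v$.

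\textbf{What the paper does instead.} It uses a different gadget that supplies exactly this mechanism:
\begin{itemize}
\item $u$ is a leaf joined to $v$ by a weight-$2$ edge.
\item The $n$ element nodes hang off $v$ by heavy edges of weight $W=n_1^{100c}$.
\item Set nodes are joined only to their elements.
\item There is no $u_{\cT}$ and there are no sinks; $\bbig=1$ holds because $n$ is even.
\end{itemize}
By Theorem~\ref{thm-folklore} and Proposition~\ref{prop-rationalility}, some optimal transport ships only multiples of $1/q$ with $q<n^3$, whatever was inserted. So a single shipment over a heavy edge already costs more than $2$, and all element demand must travel through inserted and membership edges. This turns soundness into a reachability statement (Lemma~\ref{lem-sound}) that does not depend on how insertions change $\deg(u)$ or $\deg(v)$. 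The $N$-fold blow-up then forces every original element to be covered.

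One caution if you follow that route. The paper's closing count $|E'|\ge|E''|\ge m'$ charges every reachable set node to an inserted edge. But set nodes also become reachable at no cost along set--element--set membership paths. That step therefore needs an additional argument based on distances, not reachability alone.
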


\begin{proof}
The standard (unweighted) minimum set cover (\msc)
problem is defined as follows.
We have a universe $\cU=\{a_1,\dots,a_n\}$ consisting of $n$ elements and 
a collection $\cT$ of $m$ sets $\cS_1,\dots,\cS_m\subseteq\cU$ with $\cup_{j=1}^m \cS_j=\cU$.
A valid solution is a collection of sets whose union is $\cU$, and 
the objective is to \emph{minimize} the number of sets in a valid solution.
Let 
$\opt_{\!\!\text{\footnotesize\msc}}$
denote
the number of sets in an optimal solution of \msc. 
Research works of 
Feige~\cite{Feige98}
and 
Dinur~\cite{Dinur14}
provide a 
logarithmic 
inapproximability of \msc\ assuming P$\neq$NP and imply the following result.

\begin{theorem}{\rm~\hspace*{-0.1in}\cite{Feige98,Dinur14}}\label{bk}
For any language $L$ in $\NP$,
there exists a polynomial-time reduction
such that,
for any constant $0<\eps<1$, 
given an instance
$I$ of $L$ produces 
a number $\xi$ 
and 
an instance of 
$\langle\cU_1,\cT_1\rangle$ 
of {\rm \msc} with $n_1$ elements and $m_1\leq {n_1}^{O(1)}$ sets
such that:
%
\begin{itemize}
\item
$n_1$ is an even number;
\item
if $I\in L$ then 
$\opt_{\!\!\text{\footnotesize\msc}}\leq\xi$; 
\item
if $I\not\in L$ then every 
valid solution of the \msc\ instance requires at least 
$(\eps\ln n_1) \xi$ sets (and thus 
$\opt_{\!\!\text{\footnotesize\msc}}\geq (\eps\ln n_1) \xi$). 
\end{itemize}
%
\end{theorem}

Consider the instance of \msc\ as generated by Theorem~\ref{bk}. Let $m_1\leq {n_1}^c$ for some integer constant $c\geq 1$ 
for that instance and let $N={n_1}^{10c}$.
For each element $a_i$ and each set $\cS_j$ such that 
$a_i\in\cS_j$ we replace $a_i$ in $\cS_j$ by $N$ elements 
$a_{i,1},\dots,a_{i,N}$. 
Clearly this does not change the value of 
$\opt_{\!\!\text{\footnotesize\msc}}$ 
or the number of sets $m'$ 
but this does change the universe $\cU_1$ to a bigger universe $\cU$ with $n=|\cU|={n_1}^{10c+1}$. 
Since $\eps\ln n_1 = \frac{\eps}{10c+1}\ln n$, this 
gives the following result:
\begin{quote}
$(\star)$ for any language $L$ in $\NP$,
there exists a polynomial-time reduction
such that,
\emph{for some universal constant} $0<\eps<1$, 
given an instance
$I$ of $L$ produces 
a number $\xi$ 
and 
an instance of 
$\langle\cU,\cT\rangle$ 
of {\rm \msc} with an even number $n$ of elements and $m<n^{0.1}$ sets
such that
if $I\in L$ then 
$\opt_{\!\!\text{\footnotesize\msc}}\leq\xi$ but 
if $I\not\in L$ then every 
valid solution of the \msc\ instance requires at least 
$(\eps\ln (n+1)) \xi$ sets (and thus 
$\opt_{\!\!\text{\footnotesize\msc}}\geq (\eps\ln (n+1)) \xi$). 
\end{quote}
%
We call an instance of \msc\ as generated in $(\star)$ as a ``hard'' instance. 
Given such a hard instance of \msc, we generate an instance 
$G=(V,E,w)$ 
of 
Problem \wt-\ut-\ins-\ntp\ in the following manner: 
\begin{enumerate}[label=$\triangleright$,leftmargin=*]
\item
Node $u$ has no neighbors other than $v$, \IE, $\deg_G(u)=1$, and $w(u,v)=2$.
\item
Other than $u$, node $v$ has $n={n_1}^{10c+1}$ neighbors corresponding to the elements of $\cU$; 
we call them ``element'' nodes.
Let $v_{a_{i,j}}$ be the neighbor corresponding to the element 
$a_{i,j}\in\cU$. The weight of each edge 
$\{v,v_{a_{i,j}}\}$ is set to be $W={n_1}^{100c}$; we call each such edge as a ``heavy'' edge.
Note that $\deg_G(v)+1=n+2$ and thus
$\bbig=\frac{\mathrm{LCM}(\deg_G(u)+1,\deg_G(v)+1)  }{\deg_G(v)+1 }=1$.
\item
For every set $\cS_k$ we have a ``set'' node 
$v_{\cS_k}$ with the $|\cS_k|$ ``membership'' edges 
$
\{
\{ v_{\cS_k}, v_{a_{i,j}} \} \,|\, a_{i,j}\in\cS_k
\}
$, each of
weight $1$.
\end{enumerate}
We will reuse the notations and proofs in 
Sections~\ref{sec-emd}--\ref{sec-folklore}
as needed.
Note that since $w(u,v)=2$ by 
Observation~\ref{obs-inswt}(\emph{b})
every inserted edge has a weight of $1$.

\begin{proposition}
$\text{\emd}_{\!\!G}(u,v) >2$
and thus 
$\text{\ric}_{\!\!G}(u,v) <0$.
\end{proposition}

\begin{proof}
The total demand for all the element nodes is 
$
\frac{n}{n+2}
$
and the only way to reach them is via a heavy edge. 
Thus, 
$\text{\emd}_{\!\!G}(u,v) >
\frac{nW}{n+2}
=
\frac{{n_1}^{10c+1}\times {n_1}^{100c} }{{n_1}^{10c+1}+2}
>2
$
and hence 
$\text{\ric}_{\!\!G}(u,v) = 1 - \frac{ \text{\emd}_{\!\!G}(u,v) } { \dist_G(u,v) } <0$.
\end{proof}

We observe that $\dist_G(u,v)$ remains $2$ even if every possible edge was inserted in $G$.
We complete our proof by showing the following: 

\begin{adjustwidth}{0.6cm}{}
\begin{description}
\item[(completeness)]
If 
$\opt_{\!\!\text{\footnotesize\msc}}\leq\xi$
then 
$\opt(G,u,v)\leq\xi$.
\item[(soundness)]
If 
$\opt_{\!\!\text{\footnotesize\msc}}\geq (\eps\ln (n+1)) \xi$
then 
$\opt(G,u,v) \geq (\eps\ln \deg_G(v)) \xi$.
\end{description}
\end{adjustwidth}

\medskip
\noindent
\textbf{Proof of completeness}

\smallskip
For notational convenience, let $m'=\opt_{\!\!\text{\footnotesize\msc}}\leq\xi\leq m <n^{0.1}$.
Let $\cS_1,\dots,\cS_{m'}$ be an optimal solution of the \msc.
We insert the edges  
$\{u,v_{\cS_1}\},\dots,\{u,v_{\cS_{m'}}\}$, each of weight $1$.
Let $G'$ be the new graph.
Note that the shortest path between any two nodes in $G'$ avoid the heavy edges.
Note that 
$
1-\frac{2}{m'+2}
<
1-\frac{2}{n+2}
$
since $m'<n$.
The heavy nodes have a total demand of $\frac{n}{n+2}=1-\frac{2}{n+2}$ of dirts. 
Out of that, they can receive a total amount of 
$
\frac{m'}{m'+2}=1-\frac{2}{m'+2}
$ 
from the 
set nodes via edges of weights $1$, resulting in a total transportation cost of 
$\frac{m'}{m'+2}=1-\frac{2}{m'+2}$.
All of the remaining shipments to all other nodes, 
totalling an amount of 
$
1-\frac{m'}{m'+2}=\frac{2}{m'+2}
$,
can be done via paths of length at most $4$. 
Thus, 
$
\text{\emd}_{\!\!G}(u,v) 
<
1-\frac{2}{m'+2}
+
\frac{8}{m'+2}
=1+
\frac{6}{m'+2}
$
and 
$\text{\ric}_{\!\!G}(u,v) = 1 - \frac{ \text{\emd}_{\!\!G}(u,v) } { \dist_G(u,v) }
> 
1 - \frac{1}{2} - \frac{3}{m'+2}
>0
$.

\medskip
\noindent
\textbf{Proof of soundness}

\smallskip
\noindent
Let $E'$ be the new set of edges added and 
let 
$G'=(V,E\cup E',w')$ be the new graph obtained after the insertion of edges
such that 
$\text{\ric}_{\!\!G'}(u,v)>0$.
Since a trivial valid solution can be obtained by connecting $u$ to each of the set nodes
via $m$ edges, we may assume that 
$|E'|\leq m<n^{0.1}$.

\begin{lemma}\label{lem-no-heavy}
No dirt shipment pertaining to the calculation of 
$\text{\emd}_{\!\!G'}(u,v)$ 
can use a heavy edge.
\end{lemma}

\begin{proof}
By Proposition~\ref{prop-rationalility}
any amount of shipment in a
calculation of the \emd\ in $G'$ must be an integral multiple of $\nicefrac{1}{q}$ 
where $q=\mathrm{LCM}(1+\deg_{G'}(u),1+\deg_{G'}(v))$.
Note that both 
$\deg_{G'}(u)+1$
and 
$\deg_{G'}(v)+1$ can be at most 
$n+m+1$, thus 
$q\leq (n+m+1)^2<n^3={n_1}^{30c}$.
Suppose that a dirt shipment from some node in $V_u$ to another node in $V_v$ 
uses a heavy edge. 
Then the cost of this transportation is at least $\frac{W}{q}$ 
and thus 
$
\text{\emd}_{\!\!G'}(u,v) 
>
\frac{W}{q}\geq \frac{{n_1}^{100c}}{{n_1}^{30c}}>2
\,\Rightarrow\,
\text{\ric}_{\!\!G'}(u,v) = 1 - \frac{ \text{\emd}_{\!\!G'}(u,v) } { 2 } <0
$, a contradiction.
\end{proof}

\begin{lemma}\label{lem-sound}
Let $V_{\cS}$ be the collection of set nodes 
that are reachable from $u$ or $v$ without using any heavy edges.
Then, the collection of sets 
$\cT'=\{
\cS_j \,|\, v_{\cS_j}\in V_{\cS}
\}$
form a valid solution of the \msc.
\end{lemma}

\begin{proof}
Let $|\cT'|=m'$  and assume without
loss of generality that $\cT'=\{\cS_1,\dots,\cS_{m'}\}$.
Let $E''\subseteq E'$ be the set of inserted edges whose both endpoints
are from the set of nodes  
$\cT'\cup \{u,v\}$.
Suppose that $\cT'$ is not a valid solution of the \msc.
Thus, there exists $N={n_1}^{10c}$ element nodes 
$v_{a_{j,1}},\dots, v_{a_{j,N}}$
such that 
$v_{a_{j,1}},\dots, v_{a_{j,N}}
\notin \cup_{j=1}^{m'}\cS_j
$.
Let $N'$ be the new inserted edges connecting $u$ to $N'$ 
of these element nodes, say 
$v_{a_{j,1}},\dots, v_{a_{j,N'}}$; 
note that $N'\leq |E'|\leq m={n_1}^{c}$.
Then
the $N-N'$ nodes 
$v_{a_{j,N'+1}},\dots, v_{a_{j,N}}$ 
do not belong to the set $V_u$ in $G'$.
Since by 
Lemma~\ref{lem-no-heavy}
shortest paths from nodes in $V_u$ to the 
nodes 
$v_{a_{j,N'+1}},\dots, v_{a_{j,N}}$ 
cannot use a heavy edge, 
for every node 
$v_{a_{j,\ell}}$
for $\ell\in\{N'+1,\dots,N\}$ 
there must be a node in $V_u$ such that the last edge in
the shortest path from that node to 
$v_{a_{j,\ell}}$
must be a newly inserted edge of one of the following form: 
\textbf{(\emph{i})}
$\{v_{a_{j,\ell}},
v_{\cS_k}
\}$
for some
${\cS_k}\in\cT'$ or 
\textbf{(\emph{ii})}
$\{v_{a_{j,\ell}},
v_{a_{j',\ell'}}
\}$
for another element node 
$v_{a_{j',\ell'}}$.
Thus, 
$|E'| > \frac{N-N'}{2}
\geq
\frac{{n_1}^{10c} - {n_1}^{c}}{2}
$,
contradiction the fact that 
$|E'|\leq
{n_1}^c
$.
\end{proof}

By Lemma~\ref{lem-sound}
we have 
$|E'|\geq |E''|
\geq m'
\geq 
\opt_{\!\!\text{\footnotesize\msc}}\geq (\eps\ln (n+1)) \xi
=
(\eps\ln \deg_G(v) ) \xi
$.
\end{proof}

\section{Conclusions and Future Research}

This article 
provides a formal framework for studying algorithmic and computational complexity issues 
for detecting critical edges in an undirected graph using Ollivier-Ricci curvatures. 
Since there are $12$ non-trivial version of problems and we have covered only a few of them,
there is considerable opportunity for studying algorithmic aspects of the remaining versions.
As for improvements for results in this article, 
curiously the exact perfect matching problem has a polynomial-time 
deterministic algorithm if the matching is allowed to differ just by one edge 
from perfect matching~\cite{Yus12}
and one possible future research work may involve 
de-randomizing
our randomized algorithms
using the ideas/results in~\cite{Yus12}.

Another possible future direction of research could be to redefine the curvature 
measure by using another distance between two probability distribution instead
of \emd. The simplest such distance which can be easily computed is the \emph{total variation 
distance}, however this distance would not be very useful in practice. Note that 
distances which are not symmetric (\EG, the Kullback-Leibler divergence)
are not be appropriate for usage in curvature measure.

\bibliographystyle{plainurl}
\bibliography{references-journ}

\appendix

\begin{center}
\textbf{Appendix}
\end{center}

\section{Proof of Theorem~\ref{thm-folklore}}
\label{sec-app-folklore}

%
\noindent
\textbf{(\emph{a})}
Given a valid solution 
$\vec{z}=\{z_{u_i,v_j}\,|\, u_i\in L, v_j\in R\}$ 
of value 
$\text{\tco}_{\!\!H}(\vec{z},L,R,\PP_L,\PP_R)$
on $H$, 
we obtain a valid solution of same cost on 
$\widehat{H}$ by setting 
$\widehat{z}_{u_{i,k},v_{j,\ell}}=\frac{z_{u_i,v_j}}{ab}$ for $k\in\{1,\dots,\abig\},\ell\in\{1,\dots,\bbig\}$.
Since 
$
\sum_{k=1}^a\sum_{\ell=1}^b \widehat{z}_{u_{i,k},v_{j,\ell}}=z_{u_i,v_j}
$
it follows that 
$\text{\tco}_{\!\!\widehat{H}}(\vec{\widehat{z}},\widehat{L},\widehat{R},\widehat{\PP_L},\widehat{\PP_R)}
=$
\newline
$\text{\tco}_{\!\!H}(\vec{z},L,R,\PP_L,\PP_R)$.
That our solution for 
$\widehat{H}$ is indeed a valid solution follows from the following equalities:
\begin{gather*}
\sum_{j=1}^s\sum_{\ell=1}^b \widehat{z}_{u_{i,k},v_{j,\ell}}
=
\sum_{j=1}^s\sum_{\ell=1}^b \frac{z_{u_i,v_j}}{ab}
=
\sum_{j=1}^s \frac{z_{u_i,v_j}}{a}
=
\frac{1}{ar} = \frac{1}{q}
\\
\sum_{i=1}^r\sum_{k=1}^a \widehat{z}_{u_{i,k},v_{j,\ell}}
=
\sum_{i=1}^r\sum_{k=1}^a \frac{z_{u_i,v_j}}{ab}
=
\sum_{i=1}^r \frac{z_{u_i,v_j}}{b}
=
\frac{1}{bs} = \frac{1}{q}
\end{gather*}
Conversely, 
given a valid solution 
$\vec{\widehat{z}}=\{\widehat{z}_{u_{i,k},v_{j,\ell}}\,|\, u_{i,k}\in \widehat{L}, v_{j,\ell}\in \widehat{R}\}$ 
of value 
$\text{\tco}_{\!\!\widehat{H}}(\vec{\widehat{z}},\widehat{L},\widehat{R},\widehat{\PP_L},\widehat{\PP_R)}$
on $\widehat{H}$, 
we obtain a valid solution of same cost on 
$H$ by setting 
$
z_{u_i,v_j} =
\sum_{k=1}^a \sum_{\ell=1}^b \widehat{z}_{u_{i,k},v_{j,\ell}}
$. 
It is obvious that 
$\text{\tco}_{\!\!H}(\vec{z},L,R,\PP_L,\PP_L)
=
\text{\tco}_{\!\!\widehat{H}}(\vec{\widehat{z}},\widehat{L},\widehat{R},\widehat{\PP_L},\widehat{\PP_R)}
$.
That our solution for 
$H$ is indeed a valid solution follows from the following equalities:
\begin{gather*}
\sum_{j=1}^s z_{u_i,v_j}
=
\sum_{j=1}^s \sum_{k=1}^a \sum_{\ell=1}^b \widehat{z}_{u_{i,k},v_{j,\ell}}
=
\sum_{k=1}^a \sum_{j=1}^s \sum_{\ell=1}^b \widehat{z}_{u_{i,k},v_{j,\ell}}
=
\sum_{k=1}^a \frac{1}{q}
=
\frac{a}{q} = \frac{1}{r}
\\
\sum_{i=1}^r z_{u_i,v_j}
=
\sum_{i=1}^r \sum_{k=1}^a \sum_{\ell=1}^b \widehat{z}_{u_{i,k},v_{j,\ell}}
=
\sum_{\ell=1}^b \sum_{i=1}^r \sum_{k=1}^a \widehat{z}_{u_{i,k},v_{j,\ell}}
=
\sum_{\ell=1}^b \frac{1}{q}
=
\frac{b}{q} = \frac{1}{s}
\end{gather*}

\medskip
\noindent
\textbf{(\emph{b})}
Let $e_1,\dots,e_{q^2}$ be 
any arbitrary ordering of the edges of 
$\widehat{H}$.
Suppose that we multiply the amount of dirt at each source node and the storage requirement of each destination 
node of $\widehat{H}$ by $q$. This increases the value of 
$\text{\emd}_{\!\!\widehat{H}}(\widehat{L},\widehat{R},\widehat{\PP_L},\widehat{\PP_R})$
precisely by a factor of $q$ and makes each source dirt amount and each destination storage requirement exactly $1$.
Then the constraints of the linear program in \FI{f1}
corresponding to this modification of the computation of 
$\text{\emd}_{\!\!\widehat{H}}(\widehat{L},\widehat{R},\widehat{\PP_L},\widehat{\PP_R})$
can be alternatively written in matrix-vector notation as 
$\mathbf{A}\mathbf{e}=\mathbf{1}$
where $\mathbf{1}$ is a $2q\times 1$ column vectors of all $1$ entries, 
$\mathbf{e}$ is the column vector $\big(y_{e_1} y_{e_2} \dots y_{e_{q^2}}\big)^{\mathrm{T}}$ for the variables,
and $\mathbf{A}=[a_{i,j}]$ is a $2q\times q^2$ node-edge incidence matrix with entries in its $k\tx$ column 
corresponding to the edge 
$e_k=\{u_i,v_j\}$ 
given by $a_{\ell,k}=1$ if $\ell=i$ or $\ell=j$ 
and 
$a_{\ell,k}=0$ otherwise. 
The matrix $\mathbf{A}$ is totally unimodular~\cite[Theorem 13.2]{PS82}
and since the entries of $\mathbf{1}$ are all integers,
there is an optimal solution of this linear program 
in which all entries of $\mathbf{e}$ are $0$ or $1$~\cite[Theorem 13.1]{PS82}.
Since each source dirt amount and each destination storage requirement is exactly $1$, 
the set of edges $\{e_p \,|\, y_{e_p}=1\}$
in this optimal solution
form a perfect matching in $\widehat{H}$ of total weight
$q \times \text{\emd}_{\!\!\widehat{H}}(\widehat{L},\widehat{R},\widehat{\PP_L},\widehat{\PP_R})$.
Setting 
$\widehat{z}_{u_{i,k},v_{j,\ell}}=\frac{y_{e_p}}{q}$ for each edge $e_p=\{u_{i,k},v_{j,\ell}\}$
we get an optimal solution of value 
$\text{\emd}_{\!\!\widehat{H}}(\widehat{L},\widehat{R},\widehat{\PP_L},\widehat{\PP_R})$
on $G$.

\medskip
\noindent
\textbf{(\emph{c})}
Given any perfect matching $\M$
of $\widehat{H}$ 
of cost $\alpha=\sum_{e\in\M}\widehat{w}(e)$,   
construct a valid solution on 
$\widehat{H}$ 
of cost $\nicefrac{\alpha}{q}$ by setting   
$\widehat{z}_{u_{i,k},v_{j,\ell}}=\nicefrac{1}{q}$
if 
$\{u_{i,k},v_{j,\ell}\}\in\M$ and 
$\widehat{z}_{u_{i,k},v_{j,\ell}}=0$
otherwise.

\end{document}